\newcommand{\der}{\mathrm{d}}
\newcommand{\indi}{\mathbb{I}}
\def\cS{\mathcal{S}}
\def\der{\text{d}}
\def\real{{\mathbb R}}
\def\expect{{\mathbb E}}
\def\bba{{\bf a}}
\def\bs{{\bf s}}
\def\bt{{\bf t}}
\def\bu{{\bf u}}
\def\bv{{\bf v}}
\def\bz{{\bf z}}
\def\diag{\mbox{diag}}
\newcommand{\To}{\mbox{\upshape\bfseries to}}
\newcommand{\nextnr}{\stepcounter{AlgoLine}\ShowLn}
\theoremstyle{definition}\newtheorem{defi}{Definition}
\theoremstyle{plain}\newtheorem{prop}{Proposition}\theoremstyle{plain}\newtheorem{lem}{Lemma}
\theoremstyle{remark}
\begin{document}

\title{High-dimensional inference using the extremal skew-$t$ process}

\author{B. Beranger\footnote{UNSW Data Science Hub \& School of Mathematics and Statistics, UNSW Sydney, Australia.}\;\,\footnote{Communicating Author: {\tt B.Beranger@unsw.edu.au}},\; A. G. Stephenson\footnote{Data61, CSIRO, Clayton, Australia.}, \; and\; S. A. Sisson$^*$
}

\date{}

\maketitle
\begin{abstract}
Max-stable processes are a popular tool for the study of environmental extremes, and the
extremal skew-$t$ process is a general model that allows for a flexible extremal dependence
structure. For inference on max-stable processes with high-dimensional data, exact
likelihood-based estimation is computationally intractable. Composite likelihoods, using lower dimensional components,
and Stephenson-Tawn likelihoods, using occurrence times of maxima,
are both attractive methods to circumvent this issue for moderate dimensions. 

In this article we establish the theoretical formulae for simulations of and inference for the extremal skew-$t$ process. We also incorporate the Stephenson-Tawn concept into the composite likelihood framework, giving greater statistical and computational efficiency for higher-order composite likelihoods. We compare 2-way (pairwise), 3-way (triplewise), 4-way, 5-way and 10-way composite likelihoods for models of up to 100 dimensions.
Furthermore, we propose cdf approximations for the Stephenson-Tawn likelihood function, leading to large computational gains, and enabling accurate fitting of models in large dimensions in only a few minutes. 
We illustrate our methodology with an application to a 90-dimensional temperature dataset from Melbourne, Australia.

\vspace*{2mm}
\noindent Keywords: Extremes, Max-stable processes, Composite likelihood, Stephenson-Tawn likelihood, quasi-Monte Carlo approximation.
\end{abstract}
%

%%%%%%%%%%%%%%%%%%%%%%%%%%%%%%%%%%%%%%%%%%%%%
%
% SECTION 1
%
%%%%%%%%%%%%%%%%%%%%%%%%%%%%%%%%%%%%%%%%%%%%%
%
\section{Introduction}
\label{sec:intro}

In the current environmental context, modelling the extremes of natural processes is receiving ever growing attention \citep[see, e.g.,][]{davison2012b, cooley2012}. A sound knowledge of the extremal behaviour of temperature, precipitation or winds is crucial as these events often lead to catastrophes with a strong impact on human life. Such events are spatial by nature and max-stable processes are a convenient tool to analyse spatial extremes which can extrapolate beyond the observed data. 

Max-stable processes arise as the pointwise maxima of an infinite number of suitably normalised stochastic processes. 
Consider $Y_1, \ldots, Y_n$, $n$ independent replications of a real-valued stochastic process $\{ Y(s) \}_{s \in \cS}$ with continuous sample paths  on the spatial domain $\cS$, a compact subset of  $\real^k, k \geq1$, representing a $k$-dimensional region of interest. If there exists sequences of continuous functions $a_n(s) >0$ and $b_n(s) \in \real$ such that the rescaled pointwise maxima
$$
\max_{j=1,\ldots,n} \frac{ Y_j(s) - b_n(s)}{a_n(s)} 
$$
converge weakly as $n \rightarrow \infty$ to a process $Z(s)$, $s \in \cS$, 
with non-degenerate margins, then the limiting process $\{ Z(s)\}_{s \in \cS}$ is called a max-stable process
\citep[see][Ch.~9]{dehaan1984,dehaan2006}.

The construction of max-stable models is enabled by the spectral representation of max-stable processes of \citet{schlather2002a}, which extends the work of \citet{dehaan1984} to random functions and is defined as follows. 
Let $\{W(s)\}_{s \in \cS}$ be a real-valued stochastic process with continuous sample paths on 
$\cS$ such that
$$
\expect \left\{ \sup_{s\in\cS} W(s) \right\} < \infty,
\quad m_+(s) = \expect[ \{W(s)\}_+^{\nu} ] \, \in (0, \infty), \:\forall s\in \cS,
$$
for some fixed $\nu>0$, where $\{ W( \cdot )\}_+^\nu = \max\{W (\cdot) ,0\}^\nu$.
Let $\{\zeta_j\}_{j\geq1}$ be the points of an inhomogeneous Poisson point process on 
$(0,\infty)$ with intensity $\nu \zeta^{-(\nu+1)}$, $\nu>0$, which are independent of $W_1,W_2,\ldots$, which are independent copies of $W$.
If we define
\begin{equation}
\label{eq:max_stable}
Z (s) = \max_{j=1,2,\ldots} \zeta_j Z_j^*(s),\quad s \in \cS,
\end{equation}
with
\begin{equation}
\label{eq:max_stable_zstar}
Z_j^*(s) = \{ W_j(s)\}_+  / \{m_+(s)\}^{1/\nu},
\end{equation}
then $Z$ is a max-stable process with common $\nu$-Fr\'{e}chet univariate margins 
\citep{opitz2013}.
The spectral representation of \citet{dehaan1984} can be retrieved by setting $Z_j^*(s) = f(s-X_j)$, where $X_j$ are the points of an homogenous Poisson point process on $\real^k$ with intensity measure $\Lambda (\der x)$ and $f(\cdot)$ is a unimodal continuous probability density function.

For a finite set of spatial locations $\{s_i\}_{i=1 ,\ldots,d} \in \cS$, the finite-dimensional 
distribution of $Z(s)$ is given by
\begin{equation}
\label{eq:dist_fun_msp}
G( \bz ) 
\equiv \Pr \{ Z_i \leq z_i, i=1,\ldots,d \}
= \exp \left\{ - V(\bz) \right\}, 
\quad \bz = (z_1, \ldots, z_d) >0,
\end{equation}
where $V$ is a function defined as
$$
V(\bz) = \expect 
\left[ 
\max_{i=1,\ldots, d}  \frac{ \{ W(s_i)\}_{+}^{\nu} }{ z_i^{\nu} m_+(s_i) } 
\right],
$$
which fully characterises the dependence structure between extremes. 
It is referred to as the exponent function. 
If the margins are unit Fr\'{e}chet distributed, i.e.~$\nu =1$ in the above representation \eqref{eq:max_stable}, 
then $Z$ is referred to as a simple max-stable process.
The most widely used max-stable models include the well-known Gaussian extreme-value process commonly referred to as the Smith model \citep{smith1990a}, the Schlather or extremal Gaussian process \citep{schlather2002a}, the geometric Gaussian process \citep{davison2012b}, the Brown-Resnick process \citep{brown1977,kabluchko2009} and the extremal-$t$ \citep{opitz2013,nikoloulopoulos2009}. 

Motivated by the need for flexible models,  \citet{beranger2017} proposed a wide family of max-stable processes -- the extremal skew-$t$ process -- allowing for skewness in the dependence structure. There $W(s)$ is taken to be a skew-normal random field on $s \in \cS$ with finite $d$-dimensional distribution $SN_d(\bar{\Omega}, \alpha, \tau)$, with $\bar{\Omega}$, $\alpha \in \real^d$, $\tau$ respectively representing the correlation matrix, slant and extension parameters. Assuming unit-Fr\'{e}chet margins, the $d$-dimensional exponent function is given by
\begin{equation}
\label{eq:V_ext_st}
V(\bz)=
\sum_{i=1}^d z_i^{-1} \Psi_{d-1}
\left(\left(
\sqrt{\frac{\nu+1}{1-\rho^2_{i,j}}}
\left(
\frac{z^\circ_j}{z^\circ_i} - \rho_{i,j}
\right),j\in I_i\right)^\top; \bar{\Omega}^\circ_i, \alpha^\circ_i, \tau^\circ_i, \kappa^\circ_i,\nu+1
\right),
\end{equation}
with $z_i\equiv z(s_i)$, $z^\circ_j = (z_j m_{j+})^{1/\nu}$, $m_{j+} \equiv m_+(s_j)$. Furthermore $\Psi_{d-1}$ is a $(d-1)$-dimensional non-central extended skew-$t$ distribution with  correlation matrix $\bar{\Omega}^\circ_i$, shape $\alpha^\circ_i \in \real^{d-1}$, extension $\tau^\circ_i \in \real$, non-centrality $\kappa^\circ_i \in \real$ and $\nu+1$ degrees of freedom, where
$I=\{1,\ldots,d\}$, $I_i=I\backslash\{i\}$, and $\rho_{i,j}$ is the $(i,j)$-th element of 
$\bar{\Omega}$.
See Appendix~\ref{app:def_ncextst} for a definition of the non-central extended skew-$t$ 
distribution and Appendix~\ref{app:ext_st_expo} for the expression of $m_+(s)$ and additional details about the parameters.
It is easy to see that setting $\alpha$ to the zero vector and $\tau=0$ recovers the extremal-$t$ process and further fixing $\nu =1$ reduces to the Schlather model.
The extremal skew-$t$ process has the appealing characteristic of being non-stationary, 
where $\nu$ controls the level of overall dependence: smaller values indicate high dependence and vice versa \citep[see][]{beranger2015, nikoloulopoulos2009}.

The ability to exactly simulate data from a max-stable process is important for assessing the performance of inference procedures, and for making predictions under the fitted model.
Simulations can be used to evaluate the probability that an environmental field (temperature, precipitation, etc.) exceeds some critical level across some region ($\cS$) despite only 
being observed at a finite number of locations \citep{buishand2008, blanchet2011}.
Conditional simulations can be of interest, e.g.~for prediction, depending on the existence of constraints.
Unconditional process simulations play an important role in generating conditional simulations \citep{dombry2016}.

As defined above, max-stable processes arise as the pointwise maxima over an infinite number of random functions (c.f.~eq.~\ref{eq:max_stable}) which at first glance might 
seem to require the use of finite approximations.
\citet{schlather2002a} first proposed an exact simulation procedure by showing that for some models only a 
finite number of points $\{ \zeta_j\}_{j \geq 1}$ and stochastic processes $\{ W_j(s)\}_{j \geq 1}$
will contribute to the componentwise maxima.
More recently \citet{dieker2015} and \citet{thibaud2015} respectively developed exact simulation
procedures for the Brown-Resnick and extremal-$t$ processes. 
\citet{dombry2016} extended the approach from \citet{dieker2015} and used it for the simulation 
of max-stable processes using either the spectral measure or through the simulation of extremal 
functions,
the latter being computationally more efficient. 
Other recent work is also given by \citet{oesting2018} and \citet{liu2016}.
Conditional simulation of max-stable processes was first studied by \citet{wang2011}, and shortly afterwards \citet{dombry2013a} and \citet{dombry2013b} defined a general framework. 

In recent years the $d$-dimensional distribution functions of most of the widely used max-stable models have been made available. See e.g.~\cite{genton2011}  for the Smith model, \citet{huser2013} for the Brown-Resnick and \eqref{eq:V_ext_st} for the extremal skew-$t$. However, due to the exponential form of the distribution function \eqref{eq:dist_fun_msp}, as the dimension increases, there is an explosion in the number of terms in the likelihood function 
\begin{equation*}
\label{eq:L}
\mathrm{L}_d (\bz; \theta) = 
\exp \{ -V(\bz; \theta) \}
\times \sum_{\Pi \in \mathcal{P}_d} \prod_{k=1}^{|\Pi|} - V_{\pi_k}(\bz; \theta), 
\end{equation*}
where $\mathcal{P}_d$ is the set of all possible partitions of $\{1,\ldots,d\}$, each partition $\Pi \in \mathcal{P}_d$ has elements $\pi_k$, for $k=1, \ldots, |\Pi|$, and $V_{\pi_k}(\cdot)$ represents the partial derivatives of $V(\cdot)$ w.r.t $\pi_k$.
The cardinality of $\mathcal{P}_d$, the set of all possible partitions $\Pi$ of  $\{1, \ldots, d\}$, corresponds to the $d$-th Bell number, making full likelihood inference computationally intractable for high-dimensional data.

As a result, composite likelihood (CL) methods using pairs \citep{padoan2010, davison2012c, davison2012b} and triplets \citep{genton2011,huser2013} but also higher orders \citep{castruccio2016}, have been investigated. Under some mild conditions, CL estimators have been shown to be consistent and asymptotically normal \citep{padoan2010}, and thus are an attractive substitute to full likelihood estimation. Additionally \cite{sang2014} and \cite{castruccio2016} have suggested the use of weighted composite likelihood with binary weights in order to truncate the likelihood and solely conserve the most informative tuples. However, despite being consistent, CL estimators can have a low efficiency compared to full likelihood estimators \citep{huser2016}. CL efficiency has mainly been studied for pairs and triples \citep[e.g.][]{huser2016}. Only \citet{castruccio2016} consider higher orders and compare them to the full likelihood, but this is limited to models of up to dimension $d=11$.  
Under the assumption that the spectral random vectors (equivalent to the $Z_j^*$ given in \eqref{eq:max_stable_zstar}) of the multivariate max-stable distributions have known conditional distribution, \citet{bienvenue2016} demonstrate that the partial derivative of the exponent function can be written as univariate integrals, allowing for high-dimensional extremes modelling.

An alternative method developed by \citet{stephenson2005} produced a likelihood simplification when the time occurrences of each block maxima are recorded. The censored Poisson likelihood approach introduced by \citet{wadsworth2014} extends the Stephenson-Tawn (ST) likelihood, which can be seen as a special case, and highlights large efficiency gains. A drawback of the ST likelihood is the possibility of introducing bias, in particular when the times of occurrence (hitting scenarios) are not drawn from their limiting distribution \citep{wadsworth2015}, which is more likely when the number of dimensions is high relative to the number of events recorded. 
\citet{wadsworth2015} derived a second-order bias correction for moderately high dimensions. 
\cite{thibaud2016} considered a Bayesian hierarchical model using the Stephenson-Tawn likelihood which enables calculation of the full likelihood of the Brown-Resnick process. 
Recently \citet{huser2019} proposed a stochastic expectation-maximisation algorithm which rewrites the full likelihood as the sum of ST likelihoods. They provide numerical results for the Brown-Resnick model in dimension $d=10$, considering $10$ independent replicates of the process. \citet{whitaker2019} also proposed to aggregate data in order to use pairwise CL in up to $d=100$ dimensions.

Coupled with the development of new technologies, the hope of better understanding extreme phenomena has resulted in more abundant data and a need for well performing estimation procedures.
The aim of this work is to introduce some tools allowing the use of flexible max-stable process models, such as the extremal skew-$t$, and to establish an inferential methodology that permits the use of these models in high dimensions. We propose combining the CL and ST approaches in order to perform high-dimensional inference. This further reduces the computational burden of computing the likelihood function, although calculating high-dimensional cdfs is still required. The ST approach is statistically efficient if the information on the hitting scenario is drawn from the limiting distribution. For the cdfs involved in the likelihood we propose the use of quasi-Monte Carlo approximation methods (see Section \ref{ssec:STeval}).
This accordingly admits the possibility of fitting max-stable models in dimensions up to $d=100$ for the extremal-$t$ and extremal skew-$t$ models, within a relatively short computational timeframe. We note that \citet{davison2012c} measured the effect of including the event time information into the pairwise composite likelihood context, but did not investigate the possibility of using the Stephenson-Tawn likelihood for high-dimensional inference.
  
The remainder of this paper is organised as follows: Section~\ref{sec:msp} develops the procedure to perform exact and conditional simulations from the extremal skew-$t$ process (as required for assessing parameter estimation performance, and subsequent predictive inference), and derives the partial derivative of its exponent function in any dimension necessary for inference. Section~\ref{sec:sim} proposes some quasi-Monte Carlo approximations of the cdfs embedded in the ST likelihood. The trade-off between statistical and computational efficiency for the ST likelihood as well as for a combination of the ST and CL likelihoods is investigated via simulation studies. Section~\ref{sec:dataeg} provides an illustrative application to daily maximum temperatures in Melbourne, Australia, highlighting the need for flexible models and inferential procedures. Section~\ref{sec:discuss} concludes with a discussion.

%%%%%%%%%%%%%%%%%%%%%%%%%%%%%%%%%%%%%%%%%%%%
%
% SECTION 2
%
%%%%%%%%%%%%%%%%%%%%%%%%%%%%%%%%%%%%%%%%%%%%
%
\section{The extremal skew-\texorpdfstring{$t$}{Lg} process}
\label{sec:msp}
%

%
%
% SUBSECTION SIMULATION OF MAX STABLE PROCESSES
%
%

\subsection{Exact simulations}

Exact simulation of data from a max-stable process is necessary to properly evaluate any inference process for this model, and exact conditional simulation is important for process predictions under the fitted model.
In order to perform exact simulations of a max-stable process, Algorithm 2  of \citet{dombry2016} requires the simulation of random functions $Y_j(s) = Z_j^*(s)/Z_j^*(s_0)$ with distribution $P_{s_0}$ for $s_0 \in \cS$, where $Z_j^*(s)$ is defined in equation \eqref{eq:max_stable_zstar}. The following Proposition establishes the distribution $P_{s_0}$ required to sample from the extremal skew-$t$ model.
\begin{prop}
\label{prop:p0_ext_st}
Consider the extremal skew-$t$ process defined in Section~\ref{sec:intro} with some covariance function $K$. For all $s_0 \in \cS$, the distribution $P_{s_0}$ is equal to the distribution of $T_+^\nu m_{0+} / m_+$, where $m_+ = (m_{1+}, \ldots, m_{d+})$, $T = \{ T(s) \}_{s \in \cS}$ is an extended skew-$t$ process with location and scale functions
$$
\mu(s) = K(s_0, s) 
\quad\mbox{and}\quad \hat{K}(s_1, s_2) = \frac{K(s_1, s_2) - K(s_0, s_1) K(s_0, s_2)}{\nu +1}, $$
slant vector $\alpha$, extension $(\alpha_0 + \alpha^\top \Sigma_{d;0}) \sqrt{\nu +1}$, non-centrality $-\tau$ and $\nu + 1$ degrees of freedom.
\end{prop}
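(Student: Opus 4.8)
By the extremal-functions construction underlying Algorithm~2 of \citet{dombry2016}, rewrite the $\nu$-Fr\'echet process in the standard form $Z(s)^\nu=\max_j u_j\phi_j(s)$ with $\phi_j(s)=\{W_j(s)\}_+^\nu/m_+(s)$ (so $\expect[\phi_j(s)]=1$) and $\{u_j\}$ a Poisson process on $(0,\infty)$ of intensity $u^{-2}\der u$. Then $P_{s_0}$ is the size-biased law of $\phi(\cdot)/\phi(s_0)$,
\[
P_{s_0}(A)=\expect\!\left[\phi(s_0)\,\indi\{\phi(\cdot)/\phi(s_0)\in A\}\right],
\]
which is a probability measure because $\expect[\phi(s_0)]=1$ and which is supported on $\{W(s_0)>0\}$. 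Equivalently, $P_{s_0}$ is obtained by drawing $(W(s_0),W(s_1),\dots,W(s_d))$ from its skew-normal law tilted by the density $\{W(s_0)\}_+^\nu/m_{0+}$ and returning $Y(s_i)=\{W(s_i)/W(s_0)\}_+^\nu\, m_{0+}/m_{i+}$. Hence it suffices to show that, under the tilted law, the ratio field $T(s_i):=W(s_i)/W(s_0)$ has the claimed extended skew-$t$ distribution.

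To this end I would write the joint density of $(W(s_0),W(s_1),\dots,W(s_d))\sim SN_{d+1}$ over these $d+1$ sites explicitly, using the non-central extended skew-normal density of Appendix~\ref{app:def_ncextst}, with correlation matrix in block form having $(0,0)$-entry $1$, off-diagonal block $\Sigma_{d;0}=(K(s_0,s_1),\dots,K(s_0,s_d))^\top$ and $\{s_1,\dots,s_d\}$-block $\bar\Omega$, slant $(\alpha_0,\alpha^\top)^\top$ and extension $\tau$. The Gaussian part factors as $\phi(w_0)\,\phi_d\big(w_{1:d};\,w_0\Sigma_{d;0},\,\widehat\Sigma_d\big)$ with $\widehat\Sigma_d=\bar\Omega-\Sigma_{d;0}\Sigma_{d;0}^\top$; note $\widehat\Sigma_d=(\nu+1)\widehat K$, so that $W(s_i)\mid W(s_0)=w_0$ is already Gaussian with mean $w_0K(s_0,s_i)=w_0\mu(s_i)$ and covariance $(\nu+1)\widehat K$. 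Multiplying by the tilt $w_0^\nu$ and the slant factor $\Phi(\alpha_0w_0+\alpha^\top w_{1:d}+c\tau)$ (with $c$ a fixed scalar, the constant term of the extended-skew-normal slant), restricting to $w_0>0$, changing variables $w_{1:d}\mapsto t=w_{1:d}/w_0$ with Jacobian $w_0^{d}$, and integrating $w_0$ over $(0,\infty)$ leaves the one-dimensional integral
\[
\int_0^\infty w_0^{\,\nu+d}\exp\!\Big(-\tfrac12 w_0^2\,Q(t)\Big)\,\Phi\big(a(t)\,w_0+c\tau\big)\,\der w_0,\qquad Q(t)=1+(t-\Sigma_{d;0})^\top\widehat\Sigma_d^{-1}(t-\Sigma_{d;0}),
\]
with $a(t)=\alpha_0+\alpha^\top t$.

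Evaluating this integral is the heart of the argument. The substitution $w_0=u/\sqrt{Q(t)}$ extracts a factor $Q(t)^{-(\nu+d+1)/2}$; since a $d$-variate $t$ density with $\eta$ degrees of freedom, location $\Sigma_{d;0}$ and dispersion $\widehat K$ has kernel $\big(1+(\eta+1)^{-1}\cdots\big)^{-(\eta+d)/2}$, matching exponents forces $\eta=\nu+1$, and $\widehat\Sigma_d=(\nu+1)\widehat K$ makes the quadratic forms agree, so the $t$-part of $T$ is a $(\nu+1)$-df multivariate $t$ with location $\mu(\cdot)=K(s_0,\cdot)$ and scale $\widehat K$. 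The remaining univariate factor $\int_0^\infty u^{\nu+d}e^{-u^2/2}\,\Phi\big(a(t)u/\sqrt{Q(t)}+c\tau\big)\,\der u$ is, after normalisation, a univariate $t$ distribution function with $\nu+d+1$ degrees of freedom (the ``conditioning'' term of the extended-skew-$t$ density) evaluated at an argument depending on $t$ only through $a(t)/\sqrt{Q(t)}$; decomposing $a(t)=(\alpha_0+\alpha^\top\Sigma_{d;0})+\alpha^\top(t-\Sigma_{d;0})$ shows that the $t$-varying part carries the slant vector $\alpha$, the constant part $\alpha_0+\alpha^\top\Sigma_{d;0}$ becomes the extension $(\alpha_0+\alpha^\top\Sigma_{d;0})\sqrt{\nu+1}$ once the $\sqrt{\nu+1}$ from rescaling $w_0$ to the $t$-scale is absorbed, and $\tau$ supplies the non-centrality, with the sign change $\tau\mapsto-\tau$ coming from the reflection implicit in restricting the extended skew-normal to $\{w_0>0\}$. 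Comparison with the density in Appendix~\ref{app:def_ncextst} then identifies $T$ as the stated extended skew-$t$ process, and the leftover multiplicative constants ($m_{0+}$, the $\Phi(\tau)$ normalisation, the $t$-density constant) must cancel since $P_{s_0}$ is a probability measure by construction.

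The step I expect to be the main obstacle is exactly this final reconciliation: tracking the normalising constants of the non-central extended skew-normal and of the multivariate $t$ simultaneously, and checking that the affine term inside $\Phi$ rescales to give precisely the extension $(\alpha_0+\alpha^\top\Sigma_{d;0})\sqrt{\nu+1}$, non-centrality $-\tau$ and $\nu+1$ degrees of freedom, with the correct signs. As a sanity check I would set $\alpha=0$, $\tau=0$: the slant factor drops out, the integral reduces to a Gamma integral, and $T$ collapses to a central $t$-process with location $K(s_0,\cdot)$, scale $\widehat K$ and $\nu+1$ degrees of freedom, recovering the extremal-$t$ case of \citet{thibaud2015}. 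An alternative, more probabilistic route uses the Azzalini-type representation of $SN_{d+1}$ as a Gaussian vector conditioned on a half-space $\{W_\star>-\tau\}$: size-biasing by $\{W(s_0)\}_+^\nu$ and conditioning on $W(s_0)>0$ makes $W(s_0)^{-2}$ a scaled inverse-$\chi^2_{\nu+1}$ variable independent of the conditional residual $W_{1:d}-W(s_0)\Sigma_{d;0}$, so that $T=\Sigma_{d;0}+\text{residual}/W(s_0)$ is visibly a $t$-type field and the half-space event yields the skew structure; reconciling that event with the size-biasing still requires the same constant-tracking, so the difficulty is not avoided.
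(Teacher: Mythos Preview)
Your proposal is correct and follows essentially the same route as the paper's proof: express $P_{s_0}$ as the size-biased law $\expect[\phi(s_0)\indi\{\phi/\phi(s_0)\in A\}]$ (the paper invokes \citet[Proposition~4.2]{dombry2013a} for this), write the $(d{+}1)$-dimensional extended skew-normal density, change variables $w_{1:d}\mapsto t=w_{1:d}/w_0$ with Jacobian $w_0^d$, and integrate out $w_0>0$ via the scale substitution $w_0=u/\sqrt{Q(t)}$, evaluating the resulting integral $\int_0^\infty u^{\nu+d}\phi(u)\,\Phi(\tilde\alpha u+\tau)\,\der u$ as a non-central $t$ cdf with $\nu{+}d{+}1$ degrees of freedom and non-centrality $-\tau$. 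The only cosmetic difference is that you factor the Gaussian part conditionally from the start, while the paper carries the full quadratic form $\tilde z^\top\tilde\Sigma^{-1}\tilde z$ and applies the block decomposition of $\tilde\Sigma$ afterwards; the ``final reconciliation'' you flag as the main obstacle is exactly what the paper spends most of its proof on, using the explicit moment identity $\int_0^\infty y^\nu\phi(y)\Phi(\alpha y+\tau)\,\der y=2^{(\nu-2)/2}\pi^{-1/2}\Gamma((\nu{+}1)/2)\,\Psi(\alpha\sqrt{\nu{+}1};-\tau,\nu{+}1)$ from \citet{beranger2017}.
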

\begin{proof}
See Appendix~\ref{app:exact_ext_st}
\end{proof}

\subsection{Conditional simulations}
The algorithm provided in \citet[Theorem~1]{dombry2013b} is a three-step procedure for conditional simulation of max-stable processes. This methodology relies on the knowledge of the conditional intensity function, defined as follows.
Assuming unit Fr\'echet margins, the spectral representation \eqref{eq:max_stable} can be 
rewritten as
\begin{equation*}
\label{eq:max_stable_unit}
Z(s) = \max_{j = 1,2 ,\ldots } \{ \zeta_j W_j(s)^\nu\} / \{ m_+(s) \} 
= \max_{j = 1,2 ,\ldots } \varphi_j,
\end{equation*}
where $\{ \zeta_j \}_{j \geq 1}$ are the points of an homogeneous Poisson point process on 
$(0, \infty)$ with intensity $\der \Lambda(\zeta) = \zeta^{-2} \der \zeta$.
Let $\bs = (s_1, \ldots, s_d) \in \cS^d$. For all Borel sets $A \subset \real^d$, the Poisson point 
process $\Phi = \{ \varphi_j\}_{j \geq 1}$ on $\mathcal{C} = \mathcal{C} \{\cS \}$ the space of 
continuous real-valued functions on $\cS$, 
has intensity measure
\begin{equation}
\label{eq:def_intensity_measure}
\Lambda_\bs(A) 
= \int_0^\infty \Pr \{ \zeta W(\bs)^\nu / m_+(\bs) \in A \} \zeta^{-2} \der \zeta
= \int_A \lambda_\bs (\bv) \der \bv.
\end{equation}
Note that in the above representation of the max-stable process $Z(s)$, $W_+(s)$ is replaced 
by $W(s)$ so that the point process $\Phi$ is regular, i.e.~$\Lambda_\bs(\der \bz) = \lambda_\bs (\bz) \der \bz$ for all $\bs \in \cS^d$.
The conditional intensity function is then given by
\begin{equation}
\label{eq:def_cond_int}
\lambda_{\bt | \bs, \bv } (\bu) = \frac{\lambda_{(\bt, \bs)}  (\bu, \bv) }{\lambda_\bs (\bv)},
 \quad (\bt, \bs) \in \cS^{m+d}, \quad (\bu, \bv) \in \real^m \times \real^d_+.
\end{equation}
\citet{dombry2013b} give the closed form expression of the conditional intensity function for the 
Brown-Resnick and Schlather models, whereas \citet{ribatet2013} derive those of the 
extremal-$t$.
The following Proposition provides the conditional intensity function for the extremal skew-$t$
model. 
\begin{prop}
\label{prop:cond_intensity_ext_st}
Consider the representation of the extremal skew-$t$ process in Section \ref{sec:intro} at 
$(\bt,\bs) \in \cS^{m+d}$ with slant $\alpha_{(\bt,\bs)} = (\alpha_\bt, \alpha_\bs) \in \real^{m+d}$ 
and extension parameter $\tau_{(\bt,\bs)} \in \real$.
Provided the correlation matrix 
$
\bar{\Omega}_{(\bt,\bs)} = \left[
\begin{array}{cc}
\bar{\Omega}_\bt & \bar{\Omega}_{\bt\bs} \\
\bar{\Omega}_{\bs\bt} & \bar{\Omega}_\bs
\end{array}
\right]
$
is positive definite,
the conditional intensity function \eqref{eq:def_cond_int} is given by
$$
\lambda_{\bt | \bs, \bv}(\bu) = 
\psi_m \left(\bu^\circ; \mu_{\bt | \bs, \bv}, \Omega_{\bt | \bs, \bv}, \alpha_{\bt | \bs, \bv}, 
\tau_{\bt | \bs, \bv}, \kappa_{\bt | \bs, \bv}, \nu_{\bt | \bs, \bv} \right)
\nu^{-m} \prod_{i=1}^m (m_+(t_i) u_i^{1-\nu})^{1/\nu}
$$
where 
$ \mu_{\bt | \bs, \bv} = \bar{\Omega}_{\bt \bs} \bar{\Omega}_\bs^{-1} \bv^\circ$,
$\Omega_{\bt | \bs, \bv} 
= \frac{ Q_{\bar{\Omega}_{\bs}} (\bv^\circ) }{\nu_{\bt | \bs, \bv}} \tilde{\Omega}$,
$Q_{\bar{\Omega}_{\bs}} (\bv^\circ) = \bv^{\circ\top} \bar{\Omega}_\bs^{-1} \bv^\circ$, 
$\tilde{\Omega} 
= \bar{\Omega}_\bt - \bar{\Omega}_{\bt \bs} \bar{\Omega}_\bs^{-1} \bar{\Omega}_{\bs \bt}$,
$\alpha_{\bt | \bs, \bv} = \tilde{\omega} \alpha_\bt$,
$\tilde{\omega} = \diag(\tilde{\Omega})^{1/2}$,
$\tau_{\bt | \bs, \bv}  = 
(\alpha_\bs + \bar{\Omega}_\bs^{-1} \bar{\Omega}_{\bs \bt} \alpha_\bt )^\top 
\bv^\circ (\nu+d)^{1/2} Q_{\bar{\Omega}_{\bs}} (\bv^\circ)^{-1/2}$,
$\kappa_{\bt | \bs, \bv} = -\tau_{(\bt, \bs)}$, 
$\nu_{\bt | \bs, \bv} = \nu + d$, 
$\bv^\circ = (\bv m_+(\bs))^{1/\nu}$ and $\bu^\circ = (\bu m_+(\bt))^{1/\nu}$.
\end{prop}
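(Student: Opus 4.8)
The plan is to compute the two intensity densities $\lambda_\bs(\bv)$ and $\lambda_{(\bt,\bs)}(\bu,\bv)$ in \eqref{eq:def_cond_int} explicitly and then form their ratio. Starting from \eqref{eq:def_intensity_measure}, I would substitute the $d$-dimensional (extended) skew-normal density of $W(\bs)$ for the law of $W(\bs)^\nu/m_+(\bs)$, carry out the change of variables $w(s_i)\mapsto v_i$ (which, at fixed $\zeta$, has Jacobian $\nu^{-d}\zeta^{-d/\nu}\prod_{i=1}^d(m_+(s_i)v_i^{1-\nu})^{1/\nu}$, already producing the product term in the statement), and then integrate out $\zeta$. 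Writing $\bv^\circ=(\bv\,m_+(\bs))^{1/\nu}$ and substituting $u=\zeta^{-1/\nu}$ turns the $\zeta$-integral into a radial integral of the form $\int_0^\infty u^{\nu+d-1}e^{-Q_{\bar\Omega_\bs}(\bv^\circ)u^2/2}\,\Phi(c_0+c_1 u)\,\der u$, the exponential coming from the Gaussian kernel and the $\Phi$-factor from the skew-normal slant. The key analytic ingredient, which I would isolate as a lemma (or borrow from \citealp{beranger2017} and the skew-$t$ scale-mixture-of-normals identities of Azzalini and Capitanio), is that such a radial integral equals a Gamma-function constant times $Q_{\bar\Omega_\bs}(\bv^\circ)^{-(\nu+d)/2}$ times a univariate (non-central, extended) skew-$t$ distribution function with $\nu+d$ degrees of freedom; this is precisely what converts the ``$\nu$'' of the skew-normal field into the ``$\nu+d$'' degrees of freedom in the conclusion.

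Performing the same computation with the $m+d$ locations $(\bt,\bs)$ gives $\lambda_{(\bt,\bs)}(\bu,\bv)$ with $Q_{\bar\Omega_{(\bt,\bs)}}$ in place of $Q_{\bar\Omega_\bs}$, the product over $i=1,\dots,m+d$, and $\nu+m+d$ degrees of freedom. Forming the ratio $\lambda_{\bt|\bs,\bv}(\bu)=\lambda_{(\bt,\bs)}(\bu,\bv)/\lambda_\bs(\bv)$, the $\bs$-only Jacobian factors cancel, leaving $\nu^{-m}\prod_{i=1}^m(m_+(t_i)u_i^{1-\nu})^{1/\nu}$, and the remaining factors are matched to a $\psi_m$ density by three partitioned-matrix identities applied to $\bar\Omega_{(\bt,\bs)}$ (positive definiteness guaranteeing the needed inverses and the positive definiteness of the Schur complement $\tilde\Omega=\bar\Omega_\bt-\bar\Omega_{\bt\bs}\bar\Omega_\bs^{-1}\bar\Omega_{\bs\bt}$): (i) $|\bar\Omega_{(\bt,\bs)}|=|\bar\Omega_\bs|\,|\tilde\Omega|$, which fixes the scale $\Omega_{\bt|\bs,\bv}=Q_{\bar\Omega_\bs}(\bv^\circ)\,\tilde\Omega/(\nu+d)$; (ii) the quadratic-form decomposition $Q_{\bar\Omega_{(\bt,\bs)}}((\bu^\circ,\bv^\circ))=Q_{\bar\Omega_\bs}(\bv^\circ)+(\bu^\circ-\mu_{\bt|\bs,\bv})^\top\tilde\Omega^{-1}(\bu^\circ-\mu_{\bt|\bs,\bv})$ with $\mu_{\bt|\bs,\bv}=\bar\Omega_{\bt\bs}\bar\Omega_\bs^{-1}\bv^\circ$, so that the ratio of the two homogeneous $Q$-powers produces exactly the $t_m$ kernel with $\nu_{\bt|\bs,\bv}=\nu+d$; and (iii) the decomposition of the linear form in the skew argument, $\alpha_{(\bt,\bs)}^\top(\bu^\circ,\bv^\circ)=\alpha_\bt^\top(\bu^\circ-\mu_{\bt|\bs,\bv})+(\alpha_\bs+\bar\Omega_\bs^{-1}\bar\Omega_{\bs\bt}\alpha_\bt)^\top\bv^\circ$, whose first piece becomes, after standardising by $\tilde\omega=\diag(\tilde\Omega)^{1/2}$, the slant $\alpha_{\bt|\bs,\bv}=\tilde\omega\alpha_\bt$, and whose second piece, rescaled by the radial factor $(\nu+d)^{1/2}Q_{\bar\Omega_\bs}(\bv^\circ)^{-1/2}$, becomes the extension $\tau_{\bt|\bs,\bv}$.

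Finally I would collect the leftover normalising constants: the extended-skew-normal factor $1/\Phi(\tau_{(\bt,\bs)})$ cancels between numerator and denominator; the skew-$t$ distribution function arising from the numerator's radial integral (argument depending on $\bu^\circ$, with $\nu+m+d$ degrees of freedom) supplies the extended-skew factor of $\psi_m$; and the skew-$t$ distribution function from $\lambda_\bs$'s radial integral (independent of $\bu^\circ$, with $\nu+d$ degrees of freedom) supplies the normalising constant of $\psi_m$, which in the parametrisation of Appendix~\ref{app:def_ncextst} is recorded through the non-centrality parameter $\kappa_{\bt|\bs,\bv}=-\tau_{(\bt,\bs)}$, the sign matching the convention already used in Proposition~\ref{prop:p0_ext_st}; a routine check that the Gamma and $\pi$ constants from the two radial integrals combine into the normalising constant of a $t_m$ density with $\nu+d$ degrees of freedom completes the identification. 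The main obstacle is the radial-integral lemma together with the careful bookkeeping of how the single extension parameter $\tau_{(\bt,\bs)}$ of the ambient skew-normal field splits, under conditioning, into the conditional extension $\tau_{\bt|\bs,\bv}$ and the conditional non-centrality $\kappa_{\bt|\bs,\bv}$; everything else is the same partitioned-Gaussian algebra that underlies the Schlather and extremal-$t$ cases treated in \citet{dombry2013b} and \citet{ribatet2013}.
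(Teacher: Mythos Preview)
Your proposal is correct and follows essentially the same route as the paper: first establish the explicit form of $\lambda_\bs(\bv)$ via the change of variables and the radial integral $\int_0^\infty u^{d+\nu-1}\phi(u)\Phi(\tilde\alpha_\bs u+\tau_\bs)\,\der u$ (the paper isolates this as a separate lemma, using the identity from \citealp{beranger2017} that you mention), apply the same computation at the $(m+d)$ locations, form the ratio, and then invoke the Schur-complement identities from \citet{dombry2013b} and \citet{ribatet2013} together with the linear-form decomposition of the slant to recognise the non-central extended skew-$t$ density. The only cosmetic difference is that the paper verifies the two $\Psi$-factor identifications directly at the end rather than phrasing them as your item (iii), and it records the cancellation of the $\Phi$-normalisers by noting that the $(m{+}d)$-dimensional and marginal $d$-dimensional extended skew-normal normalising constants coincide.
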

\begin{proof}
See Appendix~\ref{app:cond_int_ext_st}
\end{proof}
From Proposition~\ref{prop:cond_intensity_ext_st}, notice that the conditional 
intensity function of the extremal skew-$t$ model is the density of $T(\bs)^\nu / m_+(\bs)$, 
where $T$ is a non-central extended skew-$t$ process with parameters: 
$
\mu_{\bt | \bs, \bv}, \Omega_{\bt | \bs, \bv}, \alpha_{\bt | \bs, \bv}, 
\tau_{\bt | \bs, \bv}, \kappa_{\bt | \bs, \bv}
$ 
and 
$
\nu_{\bt | \bs, \bv},
$
which is closely related to Proposition~\ref{prop:p0_ext_st}.
%
%
% SUBSECTION INFERENCE
%
%

\subsection{Inference}
\label{ssec:inference}

Composite likelihood methods are the main strategies to bypass the computational limitations of the full likelihood approach. In particular \citet{padoan2010} and \citet{sang2014} considered the weighted composite likelihood, for which the $j$-th order is defined by
\begin{equation}
\label{eq:CL}
\mathrm{CL}_j (\bz; \theta) = \prod_{q \in \mathcal{Q}_d^{(j)}} 
\left(
\exp \{ -V(\bz_q; \theta) \}
\times \sum_{\Pi \in \mathcal{P}_q} \prod_{k=1}^{|\Pi|} - V_{\pi_k}(\bz_q; \theta) 
\right)^{w_q},
\end{equation}
for some weights $w_q\geq0$,
where $\mathcal{Q}_d^{(j)}$ represents the set of all possible subsets of size $j$ of 
$\{ 1, \ldots, d\}$ and $\bz_q$ is a $j$-dimensional subvector of 
$\bz \in \real^d_+$, $\mathcal{P}_q$ is the set of all possible partitions of $q$ where each 
partition $\Pi \in \mathcal{P}_d$ has elements $\pi_k$, for $k=1, \ldots, |\Pi|$, and $V_{\pi_k}(\cdot)$ represents
the partial derivatives of $V(\cdot)$ w.r.t $\pi_k$. \citet{wang2011} call the partition $\Pi$ the hitting scenario since it defines clusters of variables or spatial locations whose maxima comes from the same event.
The estimated parameter vector $\hat{\theta}$ maximising \eqref{eq:CL}
 can be shown to be consistent and asymptotically normally distributed \citep[see][]{padoan2010}. If the margins are jointly estimated with the dependence parameters, then the support of the parameter space depends on the parameters, which might lead to identifiability issues. However standard maximum likelihood asymptotics are still available under most practical modelling situations \citep{smith1985}.

\citet{stephenson2005} consider a different approach which relies on the knowledge of time
occurrences of each block maxima. This means that for the $n$-th block, say $\bz^n$, an 
observed partition $\Pi^n$ is associated with it, and the likelihood is then given by 
\begin{equation}
\label{eq:STlikelihood}
\mathrm{ST} (\bz; \theta) = \exp \left\{ - V(\bz; \theta) \right\}
\times \prod_{k=1}^{|\Pi|} - V_{\pi_k}(\bz; \theta).
\end{equation}

In order to compute either of the likelihoods presented above it is required to be able to 
compute partial derivatives of the exponent function $V$ up to the $d$-th order.
\citet{wadsworth2014} stress that the conditional intensity function \eqref{eq:def_cond_int} of 
$\{ Z(s_{m+1}), \ldots, Z(s_d) \}$ given $\{ Z(s_1)=z_1, \ldots, Z(s_m)=z_m \}$, i.e.~$\lambda_{\bs_{m+1:d}|\bs_{1:m}, \bz_{1:m} } (\bz_{m+1:d})$, is equivalent to
$$
\frac{-V_{1:d}(\bz)}{-V_{1:m}(\bz_{1:m}, \infty 1_{d-m})},
$$
where $\bba_{b:c} = (a_b, \ldots, a_c)$, and the denominator denotes the $m$-dimensional 
marginal intensity $\lambda_{\bs_{1:m} } (\bz_{1:m})$. Hence the partial derivatives 
$V_{1:m}(\bz)$ are obtained by integrating the conditional intensity w.r.t.~$\bz_{m+1:d}$ and 
then multiplying by $-V_{1:m}(\bz_{1:m}, \infty 1_{d-m})$.
\citet{wadsworth2014} give the partial derivatives of the $V$ function for the 
Brown-Resnick process 
while \citet{castruccio2016} also provide these results for the logistic and Reich-Shaby \citep{reich2012} models.

\begin{prop}
\label{prop:vd_ext_st}
Consider the extremal skew-$t$ model. The partial derivatives of the $V$ function 
\eqref{eq:V_ext_st} are given as follows
\begin{align*}
- V_{1:m}(\bz) &= \Psi_{d-m} \left( \bz_{m+1:d}^\circ;  
\mu_c, \Omega_c, \alpha_c, \tau_c, \kappa_c, \nu_c \right) \\
& \quad \times
\frac{
2^{(\nu-2)/2} \nu^{-m+1}   
\Gamma \left( \frac{m+\nu}{2} \right) 
\Psi \left( \tilde{\alpha}_{1:m} \sqrt{m+\nu}; -\tau^*_{1:m}, m+\nu \right) 
\prod_{i=1}^m \left( m_{i+} z_i^{1-\nu} \right)^{1/\nu}
}
{ 
\pi^{m/2} |\bar{\Omega}_\bs|^{1/2} 
\left( \bz_{1:m}^{\circ\top} \bar{\Omega}_{1:m}^{-1} \bz_{1:m}^\circ \right)^{(m+\nu)/2} 
\Phi(\tau^*_{1:m} \{ 1 + Q_{\bar{\Omega}_{1:m}}(\alpha^*_{1:m}) \}^{-1/2} )
},
\end{align*}
where $\Psi(\cdot; \kappa, \nu)$ denotes the univariate $t$ cdf with non-centrality parameter $\kappa$ and $\nu$ degrees of freedom, $\bz_{1:m}^\circ = (\bz_{1:m} m_+(\bs_{1:m}))^{1/\nu} \in \real^m$, the index $c$ 
represents $(m+1:d) | (1:m)$, where the parameters 
$\mu_c, \Omega_c, \alpha_c, \tau_c, \kappa_c, \nu_c$ are defined as in 
Proposition~\ref{prop:cond_intensity_ext_st} and 
$\tilde{\alpha}_{1:m} = \alpha_{1:m}^{*\top} \bz_{1:m}^\circ 
Q_{\bar{\Omega}_{1:m}} (\bz_{1:m}^\circ)^{-1/2} \in \real$
with $\alpha_{1:m}^* \in \real^m$ and $\tau^*_{1:m} \in \real$ respectively the $m$-dimensional 
marginal slant and extension parameter.
\end{prop}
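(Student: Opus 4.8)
The plan is to follow the route of \citet{wadsworth2014} and express $-V_{1:m}$ as a product of two objects already available: the $m$-dimensional marginal intensity $\lambda_{\bs_{1:m}}$ of the regular point process \eqref{eq:def_intensity_measure}, and the conditional intensity of Proposition~\ref{prop:cond_intensity_ext_st}. Put $\bt=\bs_{m+1:d}$, $\bs=\bs_{1:m}$, $\bv=\bz_{1:m}$ and $\bu=\bz_{m+1:d}$. Starting from the equivalence $\lambda_{\bt\mid\bs,\bz_{1:m}}(\bz_{m+1:d})=-V_{1:d}(\bz)/\{-V_{1:m}(\bz_{1:m},\infty 1_{d-m})\}$ recorded before the statement, integrating over $\bz_{m+1:d}$ from $\mathbf 0$, and using that $\int_{[0,\bz_{m+1:d}]}(-V_{1:d})(\bz_{1:m},\bu)\,\der\bu=-V_{1:m}(\bz)$ (the telescoping holds because the boundary terms at $z_k=0$ vanish: there $V(\bz)$ tends to a limit depending only on $z_k$, so its mixed partials in the remaining coordinates tend to $0$), together with the identification of $-V_{1:m}(\bz_{1:m},\infty 1_{d-m})$ with the marginal intensity $\lambda_{\bs_{1:m}}(\bz_{1:m})$, one gets
\[
-V_{1:m}(\bz)\;=\;\lambda_{\bs_{1:m}}(\bz_{1:m})\int_{[0,\bz_{m+1:d}]}\lambda_{\bt\mid\bs,\bz_{1:m}}(\bu)\,\der\bu .
\]
It then suffices to evaluate the two factors.

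For the conditional-intensity factor, the remark following Proposition~\ref{prop:cond_intensity_ext_st} identifies $\lambda_{\bt\mid\bs,\bz_{1:m}}$, with $\bv^\circ=\bz_{1:m}^\circ$, as the density of $T(\bt)^\nu/m_+(\bt)$ for a non-central extended skew-$t$ process $T$ carrying exactly the parameters $\mu_c,\Omega_c,\alpha_c,\tau_c,\kappa_c,\nu_c$ given there with $c=(m+1{:}d)\mid(1{:}m)$ --- in which, in particular, $Q_{\bar\Omega_\bs}(\bv^\circ)=\bz_{1:m}^{\circ\top}\bar\Omega_{1:m}^{-1}\bz_{1:m}^\circ$. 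Since $\nu>0$ the map $\bu\mapsto(\bu\,m_+(\bt))^{1/\nu}$ is coordinatewise increasing, so integrating this density over $[0,\bz_{m+1:d}]$ yields the distribution function $\Psi_{d-m}(\bz_{m+1:d}^\circ;\mu_c,\Omega_c,\alpha_c,\tau_c,\kappa_c,\nu_c)$, the leading factor in the claim (for $m=d$ this factor is $1$, and the formula reduces to $-V_{1:d}=\lambda_{\bs_{1:d}}$, consistent with the next step).

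The marginal-intensity factor is the substantive computation. With $W(\bs_{1:m})\sim SN_m(\bar\Omega_{1:m},\alpha^*_{1:m},\tau^*_{1:m})$ (where $\bar\Omega_{1:m}=\bar\Omega_\bs$ is the correlation submatrix at $\bs_{1:m}$, and $\alpha^*_{1:m},\tau^*_{1:m}$ the marginal slant and extension), \eqref{eq:def_intensity_measure} gives $\lambda_{\bs_{1:m}}(\bz_{1:m})=\int_0^\infty\zeta^{-m-2}h(\bz_{1:m}/\zeta)\,\der\zeta$ with $h$ the density of $W(\bs_{1:m})^\nu/m_+(\bs_{1:m})$. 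Changing variables to $\bz_{1:m}^\circ$ (absorbing the coordinatewise $\nu$-th power and the $m_+$-rescaling) produces the Jacobian factor $\nu^{-m}\prod_{i=1}^m(m_{i+}z_i^{1-\nu})^{1/\nu}$ and evaluates the $SN_m$ density at $\bz_{1:m}^\circ\zeta^{-1/\nu}$; the substitution $t=\zeta^{-2/\nu}$ then collapses the radial integral to
\[
\frac{\nu}{2}\int_0^\infty t^{(m+\nu)/2-1}\exp\!\Big(-\tfrac12 t\,\bz_{1:m}^{\circ\top}\bar\Omega_{1:m}^{-1}\bz_{1:m}^\circ\Big)\,\Phi\!\big(\alpha_{1:m}^{*\top}\bz_{1:m}^\circ\,t^{1/2}+\tau^*_{1:m}\big)\,\der t .
\]
Dropping the $\Phi$-factor this is a Gamma integral contributing $\tfrac{\nu}{2}\Gamma((m+\nu)/2)\,2^{(m+\nu)/2}(\bz_{1:m}^{\circ\top}\bar\Omega_{1:m}^{-1}\bz_{1:m}^\circ)^{-(m+\nu)/2}$ and converting the Gaussian kernel to a Student-$t$ one; restoring it, the integral equals that same quantity times $\expect[\Phi(\tilde\alpha_{1:m}\sqrt S+\tau^*_{1:m})]$ with $S\sim\chi^2_{m+\nu}$ and $\tilde\alpha_{1:m}=\alpha_{1:m}^{*\top}\bz_{1:m}^\circ\,Q_{\bar\Omega_{1:m}}(\bz_{1:m}^\circ)^{-1/2}$, and that expectation is exactly the non-central univariate $t$ cdf $\Psi(\tilde\alpha_{1:m}\sqrt{m+\nu};-\tau^*_{1:m},m+\nu)$ by the Gaussian-over-$\sqrt{\chi^2}$ representation of the non-central $t$ law. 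The $SN_m$ normalising constant supplies the denominator $\Phi(\tau^*_{1:m}\{1+Q_{\bar\Omega_{1:m}}(\alpha^*_{1:m})\}^{-1/2})$ and the Gaussian density the factor $\pi^{-m/2}|\bar\Omega_{1:m}|^{-1/2}$; collecting powers ($2^{-m/2}\cdot2^{-1}\cdot2^{(m+\nu)/2}=2^{(\nu-2)/2}$ and $\nu^{-m}\cdot\nu=\nu^{-m+1}$) reproduces the bracketed expression multiplying $\Psi_{d-m}$, completing the proof.

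I expect the main obstacle to be exactly this last bookkeeping --- in particular verifying that the non-central $t$ cdf emerging from the scale mixture has argument precisely $\tilde\alpha_{1:m}\sqrt{m+\nu}$ and non-centrality $-\tau^*_{1:m}$, which rests on the precise parametrisation of the (extended) skew-normal and skew-$t$ densities fixed in Appendix~\ref{app:def_ncextst}. Since this very marginal intensity $\lambda_\bs(\bv)$ is computed, as the denominator of the conditional intensity, inside the proof of Proposition~\ref{prop:cond_intensity_ext_st}, an alternative to the direct calculation above is to quote it from there with $d$ replaced by $m$.
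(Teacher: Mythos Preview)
Your proposal is correct and follows essentially the same approach as the paper: integrate the conditional intensity of Proposition~\ref{prop:cond_intensity_ext_st} over $\bz_{m+1:d}$ to obtain the $\Psi_{d-m}$ factor, then multiply by the marginal intensity $\lambda_{\bs_{1:m}}(\bz_{1:m})$. The paper quotes the latter directly from Lemma~\ref{lem:intensity_function} (established inside the proof of Proposition~\ref{prop:cond_intensity_ext_st}) rather than recomputing the scale-mixture integral, which is exactly the alternative you flag in your closing paragraph.
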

\begin{proof} See Appendix~\ref{app:prop_vd_ext_st}. \end{proof}
%

%%%%%%%%%%%%%%%%%%%%%%%%%%%%%%%%%%%%%%%%%%%%%
%
% SECTION 3
%
%%%%%%%%%%%%%%%%%%%%%%%%%%%%%%%%%%%%%%%%%%%%%
%
\section{Simulation results}
\label{sec:sim}

In this section we use the results given in Section \ref{sec:msp} to perform an intensive simulation study for the extremal-$t$ and the extremal skew-$t$ model. Inference for these models has received little attention and thus this Section aims at quantifying the improvements associated with the use of higher order CLs than the traditional $2$-wise (pairs) and $3$-wise (triplets). We also present some strategies to allow for the use of high dimensional models (here up to $d=100$) at a reasonable computational cost.

\subsection{Simulation design}

In the following we generate 50 independent temporal replicates (say annual maxima) from the extremal-$t$ and the extremal skew-$t$ max-stable models at $d$ locations uniformly generated over the region $\cS = [-5,5] \times [-5,5]$ using Proposition~\ref{prop:p0_ext_st} with unit variance and power exponential correlation function $\rho(h) = \exp \{ - (\| h \| / r)^{\eta} \}$, $r>0, 0 < \eta \leq 2$, where $\| \cdot \|$ is the $L_2$ norm. The different correlation functions considered are represented in Figure~\ref{fig:sim_setup} (right panel): three smoothness scenarios $\eta = 1, 1.5$ and $1.95$ (solid, dashed and dotted lines) and three levels of spatial dependence by setting the range parameter to $r = 1.5, 3$ and $4.5$ (red, green and blue colours). 

When working with the student-$t$ distribution, it is known that estimates of $\nu$ are sometimes very large and variable  \citep{fernandez1999}. By extension, this phenomenom is also present when working with the non-central extended skew-$t$ distribution. \citet{davison2012b} have stressed that simultaneously estimating $r, \eta$ and $\nu$ is a delicate task, and \citet{huser2016b} have noted that the degree of freedom $\nu$ of the extremal-$t$ is difficult to estimate.
Thus $\nu$ tends to be fixed while focusing on the remaining parameters. Here we fix $\nu = 1$, which corresponds to the Schlather model (and its skew equivalent). 
Simultaneous estimation of $\nu$ and the remaining parameters is possible, as shown in \citet{beranger2017} for the extremal skew-$t$ and \citet{padoan2013} among others for the extremal-$t$, and we implement this strategy in Section \ref{sec:dataeg}.

The slant parameter is defined as a function of space $\cS$, i.e.~$\alpha_i \equiv \alpha(s_i) = \beta_1 s_{i1} + \beta_2 s_{i2}$ where $s_i = (s_{i1}, s_{i2}) \in \cS$, $i=1,\ldots, d$, and in this example we choose $\beta_1 = \beta_2 = 5$.  We use $\theta_j$ to denote a parameter vector obtained by maximisation of the CL$_j$ and $\theta_d$ to denote the use of the CL$_d$ (corresponding to the full likelihood).
  
\begin{figure}
\includegraphics[width=0.33\textwidth]{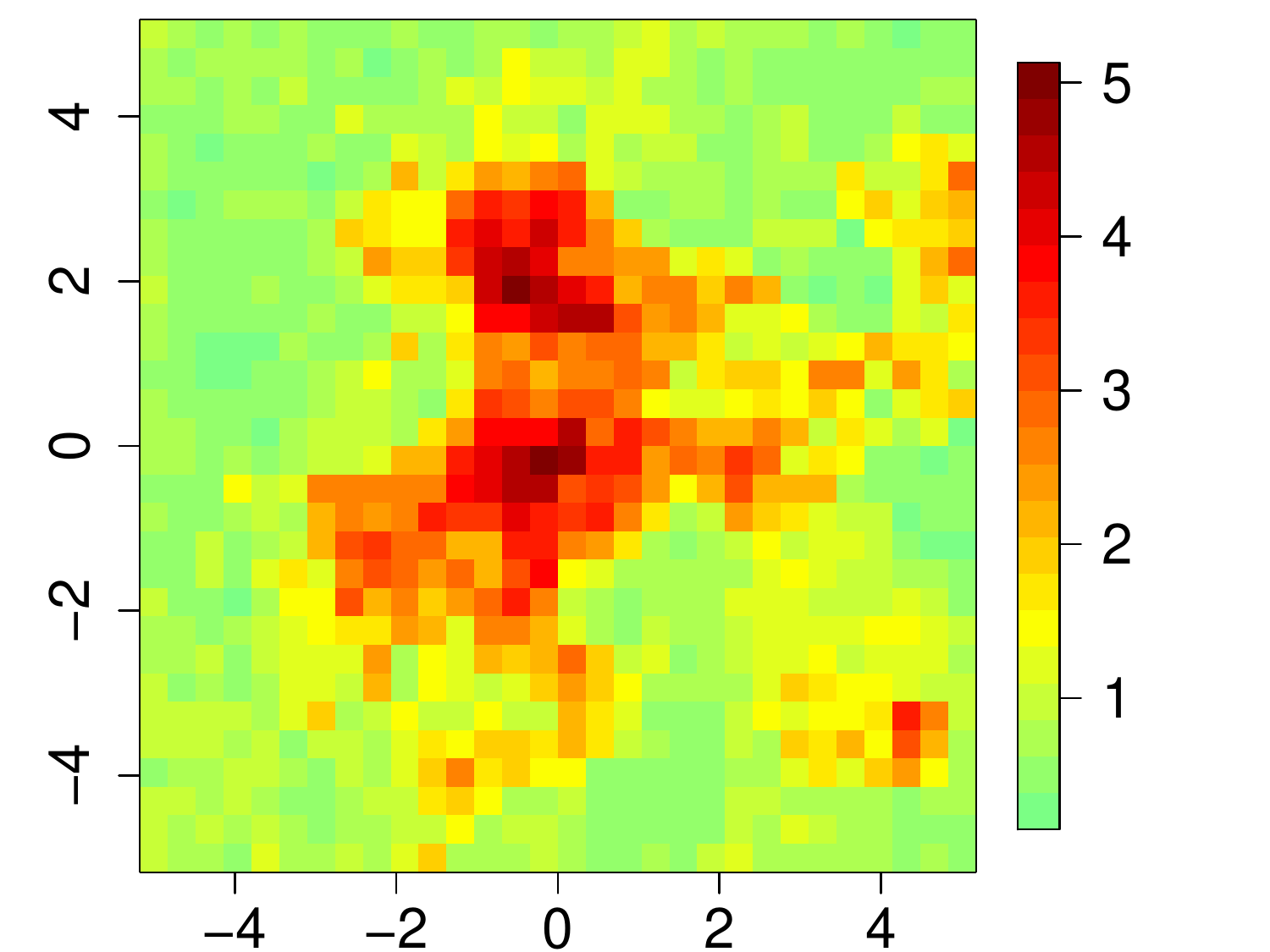}
\includegraphics[width=0.33\textwidth]{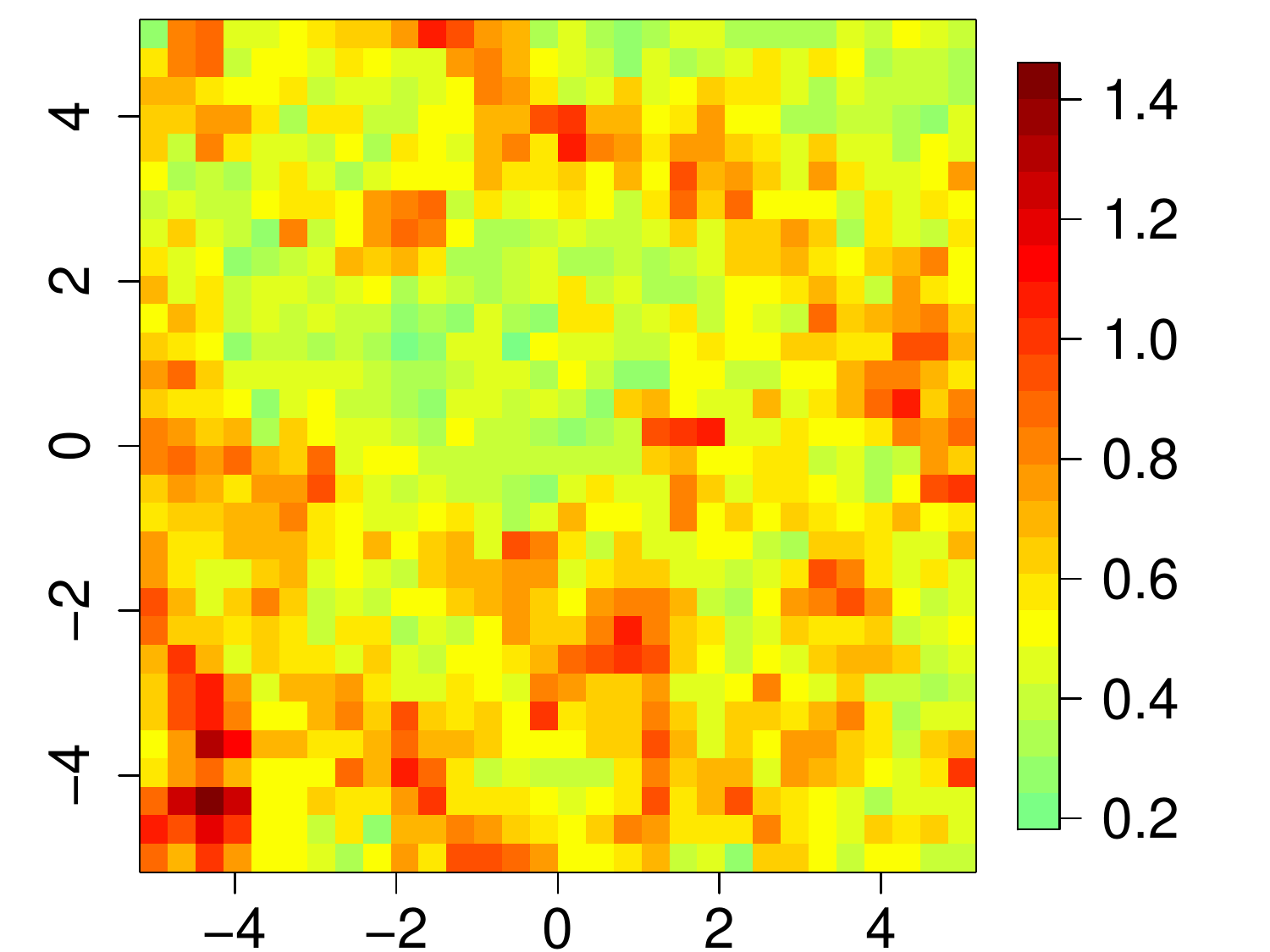}
\includegraphics[width=0.33\textwidth]{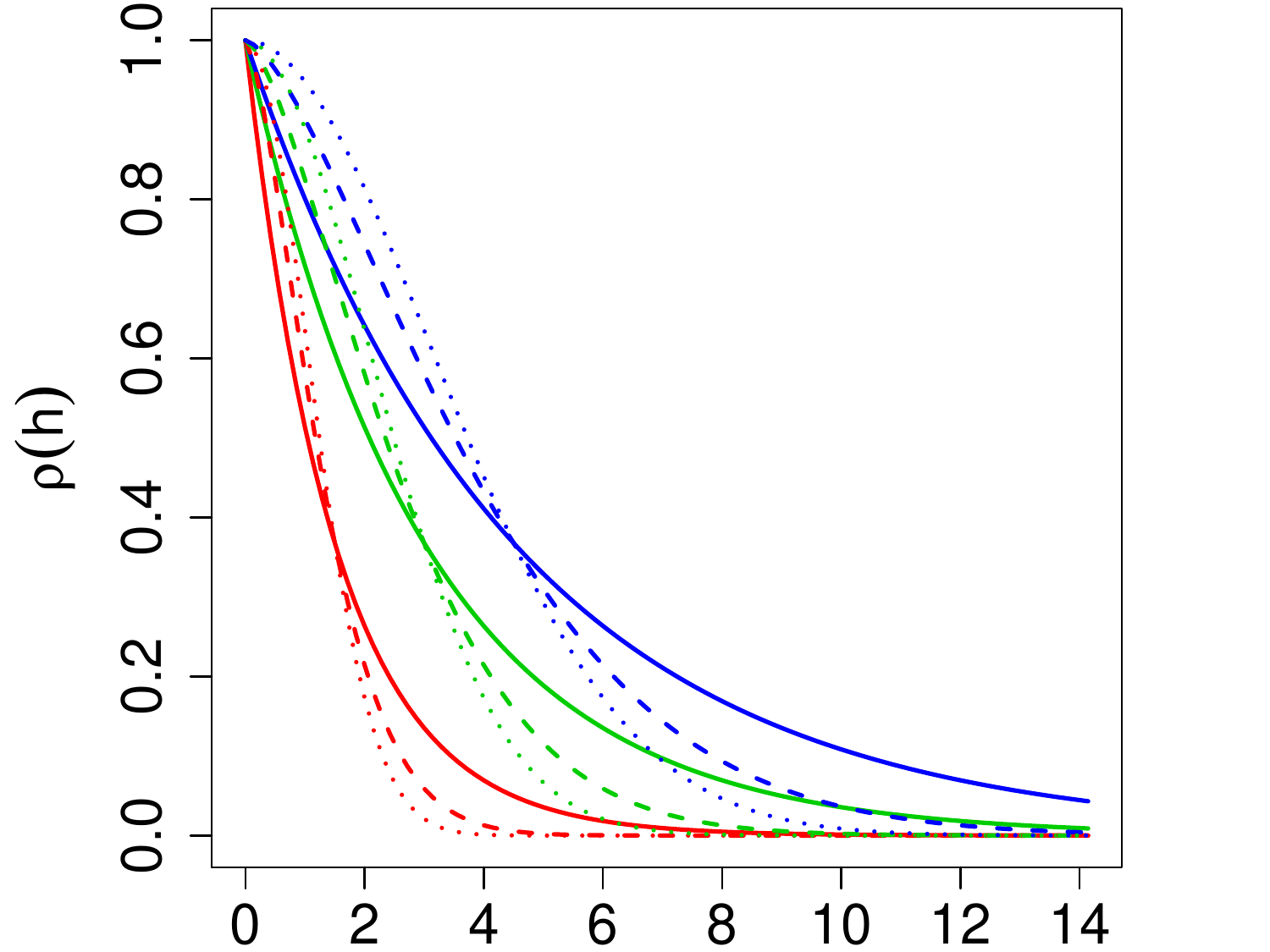}
\caption{Realisations of the extremal-$t$ (left) and extremal skew-$t$ (middle) models with $\eta=1$, $r=3$ and $\nu=1$. Right panel represents the power exponential correlation function $\rho(h)$ with smoothness $\eta = 1, 1.5$ and $1.95$ (solid, dashed and dotted lines) and the range $r = 1.5, 3$ and $4.5$ (red, green and blue colours). }
\label{fig:sim_setup}
\end{figure}

The algorithms used to generate the extremal-$t$ and the extremal skew-$t$ models are given in pseudo-code in Appendix \ref{app:algs}. The image plots of Figure~\ref{fig:sim_setup} illustrate a realisation from the extremal-$t$ and extremal skew-$t$ models on the region $\cS$ when $\eta=1$, $r=3$ and $\nu=1$ (left and middle panels). The extremal skew-$t$ algorithm incorporates a stochastic representation of the extended skew-$t$ distribution given in \citet{arellano2010}. For each replicate the algorithms also include simulation of the hitting scenario $\Pi$. This allows us to use the ST likelihood in equation \eqref{eq:STlikelihood}, which greatly simplifies the evaluation of both CL$_d$ and CL$_j$. An alternative approach that avoids the need to compute the exponent function for all possible partitions $\Pi$ is to treat the hitting scenario as missing and use the expectation--maximization algorithm \citep{huser2019}.

\subsection{Stephenson-Tawn likelihood evaluation}
\label{ssec:STeval}

For each observation, the log-likelihood from equation \eqref{eq:STlikelihood} includes the evaluation of $V(\bz;\theta)$ and of $\log(-V_{\pi_k}(\bz;\theta))$ for $k = 1,\dots,|\Pi|$. This respectively requires $d$ evaluations of $\Psi_{d-1}(\cdot)$ and $|\Pi|$ evaluations of the form $\log(\Psi_{d-m}(\cdot))$, where $m=|\pi_k|$. As discussed by \citet{dombry2016} the evaluation of these cdfs is a computationally difficult task even for moderate $d$, and we thus suggest to overcome this by controlling the degree of approximation of these quantities.
Extended skew-$t$ distribution functions can be written in terms of the multivariate $t$-distribution, which we evaluate using quasi-Monte Carlo approximations; see the algorithm in Section 3.2 of \citet{genz2002}. The term `quasi' refers to the fact that the Monte Carlo simulations are based on lattice points and are therefore more evenly distributed than a standard Monte Carlo algorithm. See also \citet{genz1992} and \citet{genz1993} for the evaluation of the multivariate normal distribution, as required for Brown-Resnick processes. 

The computational importance of multivariate $t$-cdf (or normal cdf) evaluation within max-stable process models has been previously discussed; see e.g.\ \citet{wadsworth2014}, \citet{castruccio2016}, \citet{thibaud2016} and \citet{defondeville2018}. 
For instance, \citet{thibaud2016} consider Monte Carlo estimates of the Gaussian cdf involved in the Brown-Resnick model while \citet{defondeville2018} show that quasi-Monte Carlo methods produce faster convergence rates than classical Monte Carlo estimates. \citet{defondeville2018} also further reduce the computation time by using randomly shifted deterministic lattice rules to compute multivariate normal distribution functions and argue that their methodology can be extended to the extremal-$t$ model.

The approach we use is as follows. The original algorithm in \citet{genz2002} controls the absolute error (we use $\epsilon = 0.001$). It is important to adjust the algorithm, controlling the error on a log-scale to take account of the fact that the logarithm of $\Psi_{d-m}(\cdot)$ is required. Fewer Monte Carlo simulations are then needed. The evaluations of $\Psi_{d-m}(\cdot)$ in $V_{\pi_k}(\bz;\theta)$ are also relatively more important than those of $\Psi_{d-1}(\cdot)$ in $V(\bz;\theta)$. The algorithm parameters $N_{min}$ and $N_{max}$ control the minimum and maximum number of simulations used. The maximum number is used only if the approximation error remains above $\epsilon$.  

\begin{table}[!ht]
\centering
\footnotesize
$  
\begin{array}{c c c c c c c c}
\mathrm{j} & 2 & 3 & 4 & 5 & 10 & d\mbox{ (Type I)} & d\mbox{ (Type II)} \\
\hline
 \Psi_{j-m}(\cdot) & 100,1000 & 100,1000 & 50,500 & 50,500 & \multicolumn{1}{c|}{20,200} & 50,500 & 20,200 \\
 \Psi_{j-1}(\cdot) & 10,100 & 10,100 & 5,50 & 5,50 & \multicolumn{1}{c|}{2,20} & 5,50 & 2,20 \\
\hline
\end{array}
$
\caption{Number of quasi-Monte Carlo simulations $N_{min},N_{max}$ to compute each $\Psi_{j-m}(\cdot)$ and $\Psi_{j-1}(\cdot)$ terms in $V_{\pi_k}(\bz;\theta)$ for each $j$-wise composite likelihood. The case $j=d$ corresponds to the full likelihood, where two different approximations are considered.
}
\label{tab:approx}
\end{table}

Table~\ref{tab:approx} provides the different $N_{min}$ and $N_{max}$ levels considered in each $j$-wise CL estimation. For the $2$-wise and $3$-wise CL estimations of the extremal-$t$ model, the approximation error almost always reduces below $\epsilon$ before reaching $N_{max}$. The case $j=d$ corresponds to the full likelihood, where two different approximations (Type I and Type II) are considered. These are obtained by varying the minimum and maximum number of quasi-Monte Carlo simulations. Type I is the same level as when $j=4$ or $5$ whereas Type II is a rougher approximation also used when $j=10$. Section \ref{ssec:fulldimapprox} discusses ST likelihood approximations for the cases $d=20$, $d=50$, and $d=100$.  

Likelihood evaluations for the extremal skew-$t$ are obviously slower than the simpler extremal-$t$ because the former requires the evaluation of the multivariate extended skew-$t$ cdf. Likelihood evaluations can easily be parallelized over the number of observations. Every likelihood evaluation conducted in this section was evaluated in parallel using $16$ CPUs. An alternative would be use $d$ CPUs to parallelize the $V(\bz;\theta)$ and $\log(-V_{\pi_k}(\bz;\theta))$ evaluations directly, perhaps combined with vectorised operations \citep{warne+sd19}. It would even be possible to do both, if a large enough number of CPUs were available.

\subsection{Approximation of the Stephenson-Tawn likelihood}
\label{ssec:fulldimapprox}

We first investigate the effect of the cdf approximations on the parameter estimates obtained using the full (i.e.\ non composite) likelihood.
Define the root mean square error (RMSE) of an estimator $\hat{\theta}$ of $\theta$, over 500 replicates, by $\textrm{RMSE}(\hat{\theta}) = \sqrt{\textrm{b}(\hat{\theta})^2 + \textrm{sd}(\hat{\theta})^2}$, where the bias is $\textrm{b}(\hat{\theta}) = \bar{\hat{\theta}} - \theta$, $\bar{\hat{\theta}} = \sum_{i=1}^{500} \hat{\theta}_i / 500$, and the standard deviation $\textrm{sd}(\hat{\theta}) = \sqrt{ \sum_{i=1}^{500} (\hat{\theta}_i - \bar{\hat{\theta}})^2 / 499}$. 
Table~\ref{tab:Ext_full_RMSE} provides a comparison of the RMSEs of the smooth and range estimators (respectively $\hat{\eta}$ and $\hat{r}$) for the extremal-$t$ and extremal skew-$t$ models obtained using Type I and Type II approximations of the cdf terms. Table~\ref{tab:Ext_full_BIAS} in Appendix~\ref{app:sim} provides corresponding bias estimates. 

As expected, a larger number of sites (from $20$ to $100$) and better approximations (Type I rather than Type II) yield smaller RMSEs. For fixed smoothness and dimension, as the process becomes more spread ($r$ large), the RMSE of the range estimator $\hat{r}$ tends to increase. This might be explained by the difficulty to dissociate independent site locations. When the range $r$ is fixed, as the process becomes smoother ($\eta$ large) the smooth estimator $\hat{\eta}$ becomes more accurate.
The RMSEs of the smoothness and range parameters are larger in the four parameter model due to the additional model complexity.

The estimation of $\beta_1$ and $\beta_2$, which define the $d$-dimensional slant parameter vector as a function of space, is less robust. Maximum likelihood estimation methods for skewed distributions often yield similar robustness and identifiability issues. Our simulation revealed the presence of a very few abnormally large skewness parameters, impacting the RMSE values. 

Overall, Table~\ref{tab:Ext_full_RMSE} indicates that the ST likelihood yields accurate estimates for the model parameters. 
\citet{wadsworth2014} and \citet{huser2016} have highlighted the presence of bias which increases with the dimension $d$ and under weaker dependence when the hitting scenario does not come from the limiting distribution of scenarios. In the above analysis the data are simulated exactly from the process with hitting scenarios (see Appendix~\ref{app:algs}), and it is thus unsurprising to observe biases that are small and decreasing with dimension $d$ for the parameter estimates of both the extremal-$t$ and extremal skew-$t$ models (see Table~\ref{tab:Ext_full_BIAS} in Appendix~\ref{app:sim}).
	
\clearpage
\begin{landscape}
	\begin{table}
		\centering
		\footnotesize
		$
		\begin{array}{ccc c cccccccc c cccccccccccccc}
		\multicolumn{4}{c}{} & \multicolumn{8}{c}{\textrm{extremal-$t$}} & & \multicolumn{14}{c}{\textrm{extremal skew-$t$}} \\
		\cline{5-12} \cline{14-27} 
		\multicolumn{4}{c}{} & \multicolumn{2}{c}{r=1.5} & & \multicolumn{2}{c}{r=3.0} & & \multicolumn{2}{c}{r=4.5} & & \multicolumn{4}{c}{r=1.5} & & \multicolumn{4}{c}{r=3.0} & & \multicolumn{4}{c}{r=4.5} \\
		\cline{5-6} \cline{8-9} \cline{11-12} \cline{14-17} \cline{19-22} \cline{24-27}
		& & \mbox{Type} &  & \hat{\eta}_j & \hat{r}_j & & \hat{\eta}_j & \hat{r}_j & & \hat{\eta}_j & \hat{r}_j & & \hat{\eta}_j & \hat{r}_j & \hat{\beta}_{1j} & \hat{\beta}_{2j} & & \hat{\eta}_j & \hat{r}_j & \hat{\beta}_{1j} & \hat{\beta}_{2j} & & \hat{\eta}_j & \hat{r}_j & \hat{\beta}_{1j} & \hat{\beta}_{2j} \\
		\hline
		d=20 & \eta=1.00 & \mbox{I} & & 0.058 & 0.125 &  & 0.055 & 0.260 &  & 0.047 & 0.419 & & 0.094 & 0.164 & 0.782 & 0.677 &  & 0.081 & 0.396 & 0.559 & 0.448 &  & 0.054 & 0.466 & 0.310 & 0.284 \\ 
		& & \mbox{II} & & 0.061 & 0.124 &  & 0.048 & 0.246 &  & 0.044 & 0.378 & & 0.106 & 0.223 & 0.380 & 0.394 &  & 0.117 & 0.478 & 0.633 & 0.893 &  & 0.101 & 0.670 & 0.831 & 0.883 \\
		& \eta=1.50 & \mbox{I} & & 0.046 & 0.076 &  & 0.036 & 0.164 &  & 0.030 & 0.232 &  & 0.114 & 0.122 & 1.702 & 0.899 &  & 0.046 & 0.238 & 0.406 & 0.395 &  & 0.036 & 0.450 & 0.429 & 0.427 \\ 
		& & \mbox{II} & & 0.045 & 0.074 &  & 0.032 & 0.156 &  & 0.027 & 0.226 &  & 0.096 & 0.142 & 0.363 & 0.359 &  & 0.050 & 0.236 & 0.324 & 0.323 &  & 0.039 & 0.345 & 0.413 & 0.401 \\
		& \eta=1.95 & \mbox{I} & & 0.025 & 0.051 &  & 0.008 & 0.077 &  & 0.005 & 0.112 &  & 0.024 & 0.070 & 1.265 & 0.900 &  & 0.011 & 0.139 & 0.420 & 0.420 &  & 0.006 & 0.185 & 0.263 & 0.241 \\
		& & \mbox{II} & & 0.018 & 0.049 &  & 0.009 & 0.087 &  & 0.005 & 0.116 &  & 0.028 & 0.090 & 0.529 & 0.476 &  & 0.019 & 0.129 & 0.532 & 0.609 &  & 0.008 & 0.210 & 0.367 & 0.436 \\
		\hline
		d=50 & \eta=1.00 & \mbox{I} & & 0.024 & 0.057 &  & 0.022 & 0.137 &  & 0.018 & 0.207 &  & 0.045 & 0.120 & 0.175 & 0.119 &  & 0.034 & 0.211 & 0.216 & 0.176 &  & 0.032 & 0.364 & 0.220 & 0.216 \\ 
		& & \mbox{II} & & 0.051 & 0.089 &  & 0.022 & 0.157 &  & 0.025 & 0.253 &  & 0.051 & 0.152 & 0.214 & 0.216 &  & 0.042 & 0.266 & 0.189 & 0.196 &  & 0.045 & 0.405 & 0.218 & 0.264 \\
		& \eta=1.50 & \mbox{I} & & 0.012 & 0.039 &  & 0.013 & 0.095 &  & 0.013 & 0.139 &  & 0.031 & 0.068 & 0.118 & 0.111 &  & 0.024 & 0.190 & 0.112 & 0.104 &  & 0.021 & 0.229 & 0.156 & 0.153 \\
		& & \mbox{II} & & 0.020 & 0.046 &  & 0.016 & 0.096 &  & 0.014 & 0.153 &  & 0.038 & 0.098 & 0.183 & 0.169 &  & 0.029 & 0.185 & 0.145 & 0.349 &  & 0.028 & 0.277 & 0.304 & 0.307 \\
		& \eta=1.95 & \mbox{I} & & 0.004 & 0.023 &  & 0.002 & 0.049 &  & 0.002 & 0.086&  & 0.008 & 0.045 & 0.145 & 0.137 &  & 0.003 & 0.081 & 0.215 & 0.214 &  & 0.003 & 0.111 & 0.148 & 0.168 \\
		& & \mbox{II} & & 0.005 & 0.029 &  & 0.003 & 0.062 &  & 0.002 & 0.101 &  & 0.010 & 0.046 & 1.293 & 1.267 &  & 0.004 & 0.095 & 0.282 & 0.269 &  & 0.004 & 0.158 & 0.202 & 0.261 \\
		\hline
		d=100 & \eta=1.00 & \mbox{I} & & 0.020 & 0.052 &  & 0.017 & 0.128 &  & 0.015 & 0.195 &  & 0.044 & 0.110 & 0.098 & 0.106 &  & 0.031 & 0.203 & 0.090 & 0.085 &  & 0.028 & 0.337 & 0.055 & 0.059 \\
		& & \mbox{II} & & 0.025 & 0.077 &  & 0.021 & 0.189 &  & 0.022 & 0.277 &  & 0.045 & 0.116 & 0.140 & 0.141 &  & 0.035 & 0.312 & 0.111 & 0.131 &  & 0.040 & 0.369 & 0.219 & 0.215 \\  
		& \eta=1.50 & \mbox{I} & & 0.011 & 0.028 &  & 0.01 & 0.068 &  & 0.011 & 0.127 &  & 0.021 & 0.053 & 0.071 & 0.068 &  & 0.019 & 0.122 & 0.051 & 0.045 &  & 0.024 & 0.279 & 0.239 & 0.255 \\
		& & \mbox{II} & & 0.013 & 0.039 &  & 0.012 & 0.091 &  & 0.014 & 0.155 &  & 0.028 & 0.077 & 0.072 & 0.071 &  & 0.034 & 0.272 & 0.203 & 0.227 &  & 0.024 & 0.301 & 0.341 & 0.355 \\
		& \eta=1.95 & \mbox{I} & & 0.002 & 0.017 &  & 0.001 & 0.040 &  & 0.002 & 0.067 &  & 0.003 & 0.029 & 0.547 & 0.506 &  & 0.002 & 0.072 & 0.070 & 0.059 &  & 0.004 & 0.143 & 0.553 & 0.600 \\
		& & \mbox{II} & & 0.002 & 0.023 &  & 0.002 & 0.052 &  & 0.002 & 0.099 &  & 0.004 & 0.035 & 0.103 & 0.102 &  & 0.004 & 0.102 & 0.274 & 0.274 &  & 0.005 & 0.169 & 0.698 & 0.743 \\
		\hline

		\end{array}
		$
		\caption{\small RMSEs for $\hat{\theta}_j = (\hat{\eta}_j, \hat{r}_j)$ and $\hat{\theta}_j = (\hat{\eta}_j, \hat{r}_j, \hat{\beta}_{1j}, \hat{\beta}_{2j})$ the parameter vectors of the extremal-$t$ and extremal skew-$t$ models, using the full likelihood Type I and Type II approximations given in Table~\ref{tab:approx} when $d=20, 50$ and $100$ sites are considered. Calculations are based on 500 replicate maximisations.}
		\label{tab:Ext_full_RMSE}
	\end{table}
\end{landscape}

\begin{figure}[t!]
	\centering
	$
	\begin{array}{ccc}
	\includegraphics[width=0.4\linewidth]{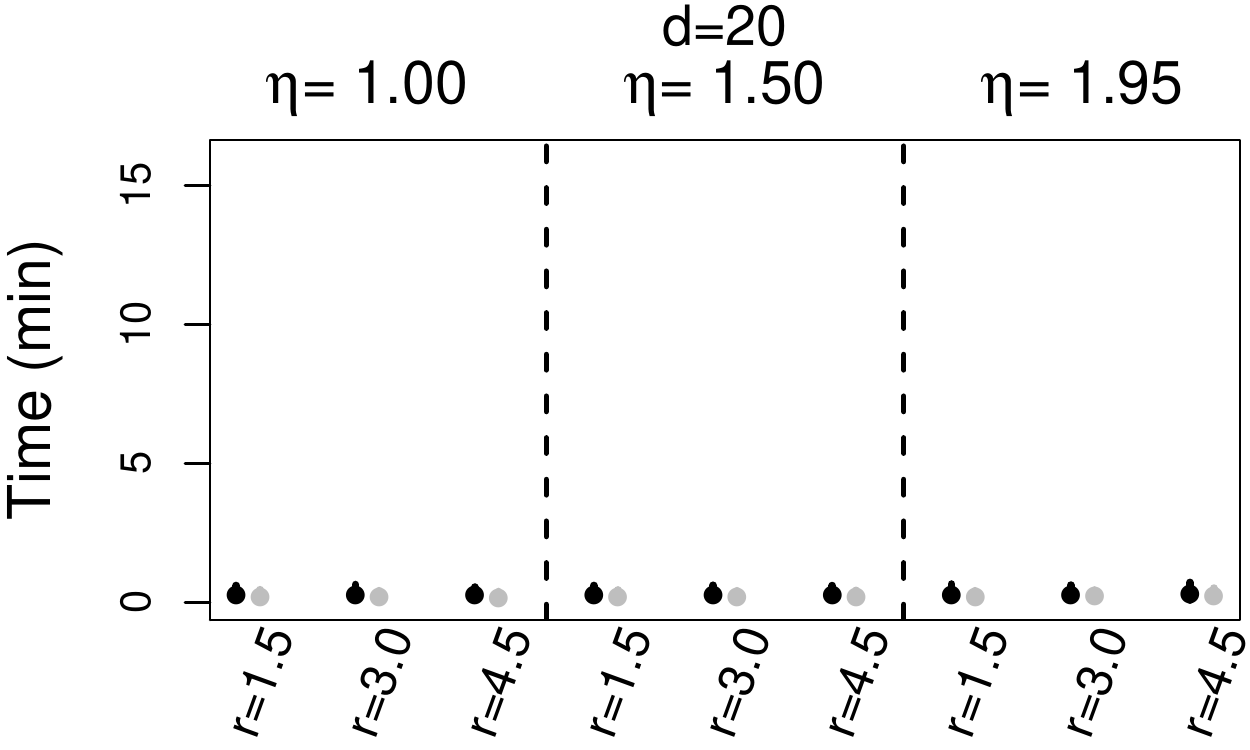} & &
	\includegraphics[width=0.4\linewidth]{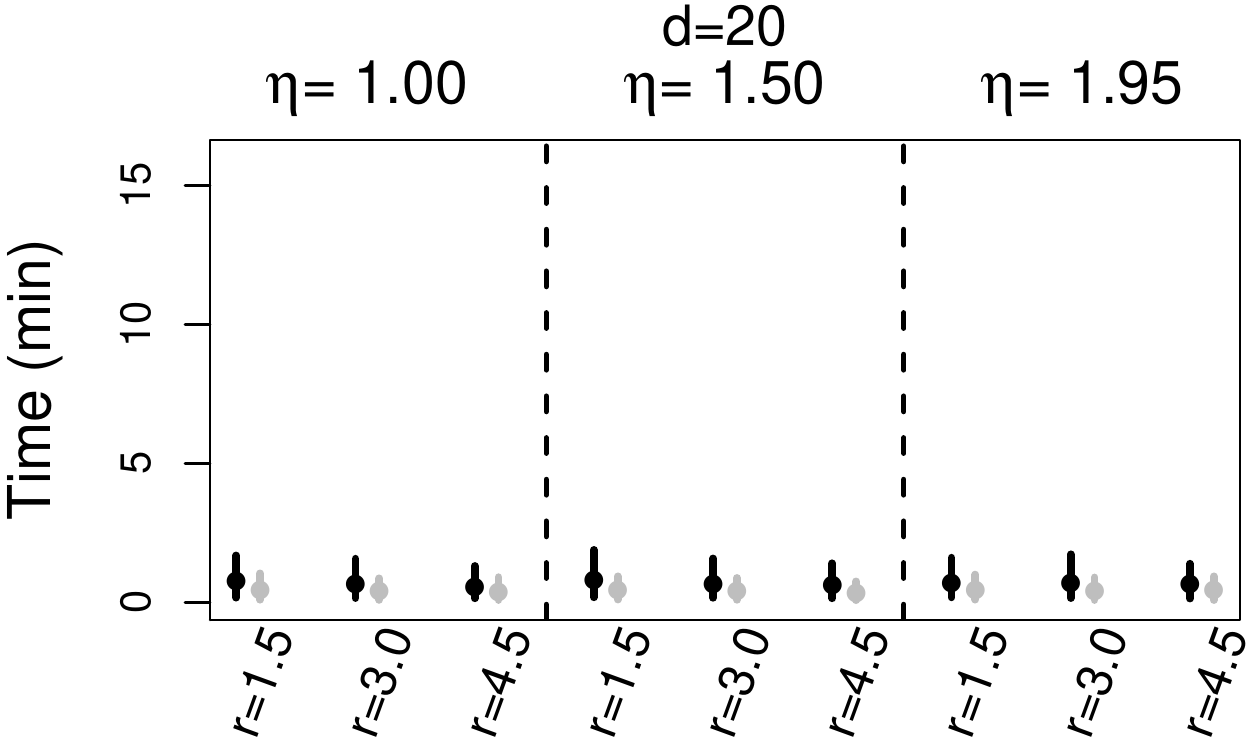} \\
	\includegraphics[width=0.4\linewidth]{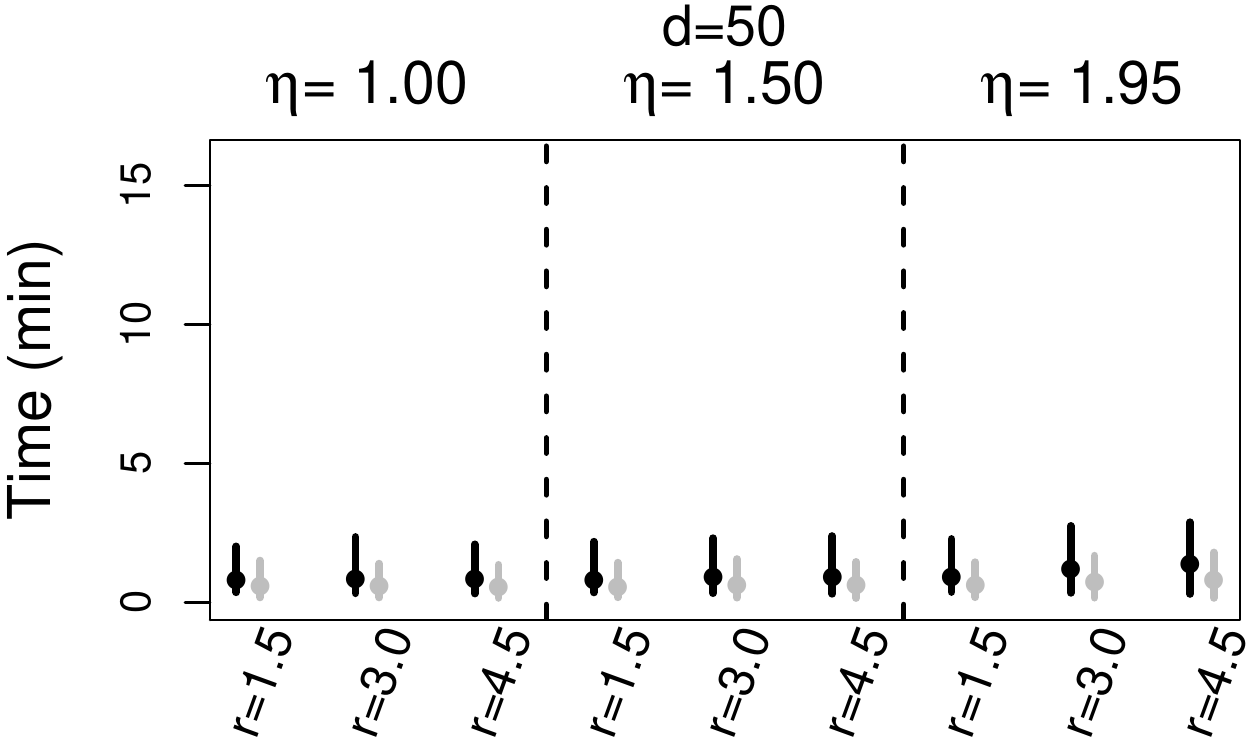} & &
	\includegraphics[width=0.4\linewidth]{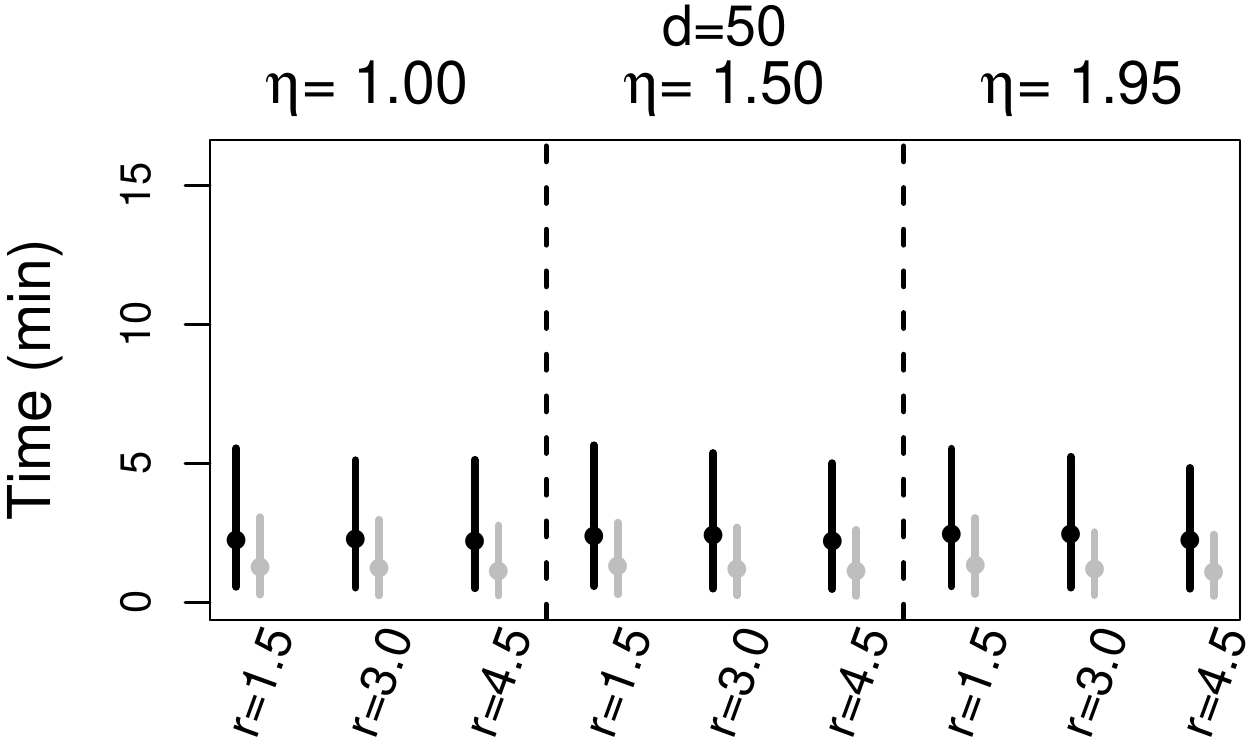} \\
	\includegraphics[width=0.4\linewidth]{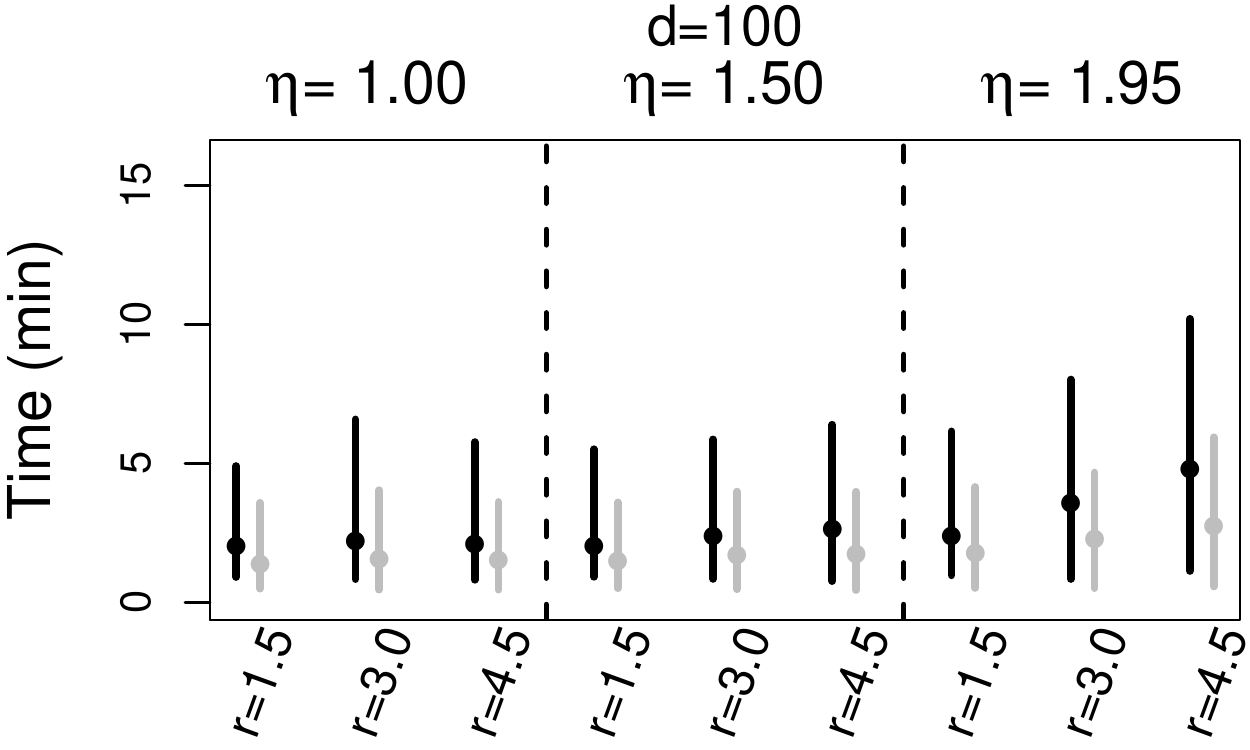} & &
	\includegraphics[width=0.4\linewidth]{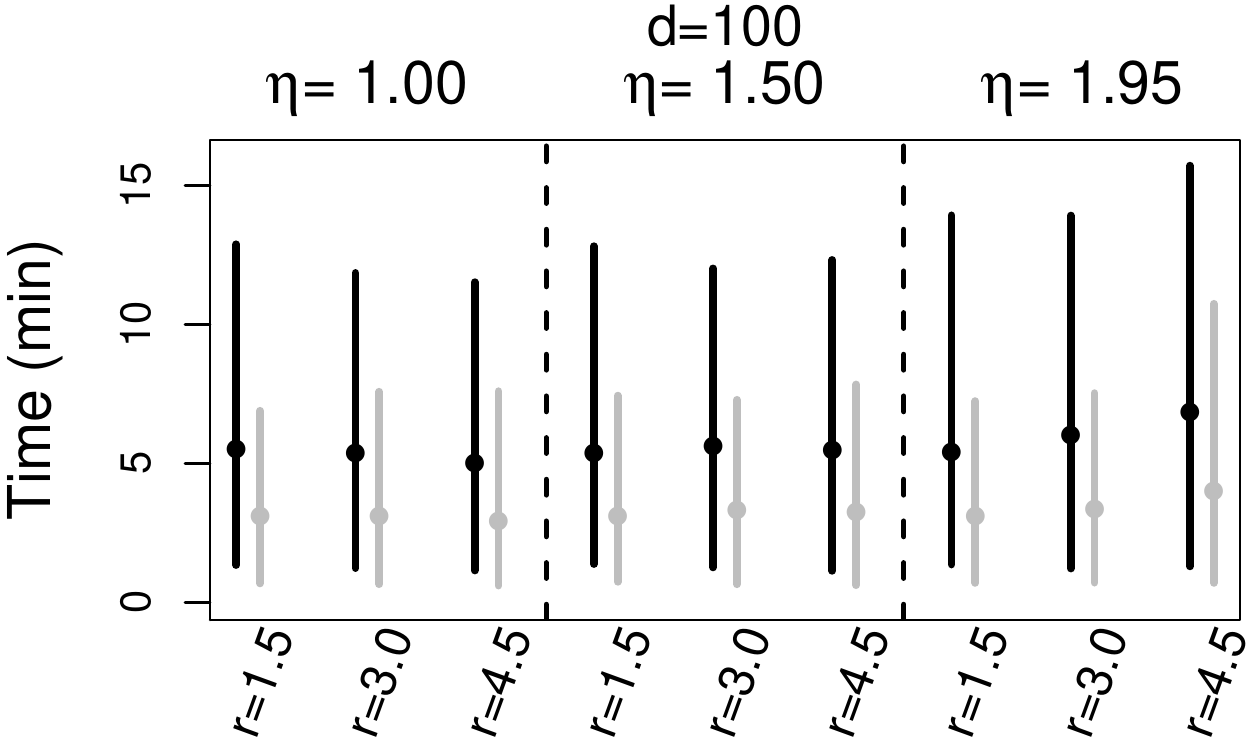} 
	\end{array}
	$
	\caption{Mean time (in minutes) and $95\%$ confidence region for the maximisation of the extremal-$t$ (left) and extremal skew-$t$ (right) full likelihood function, using the Type I (black) and Type II (grey) approximations as given in Table~\ref{tab:approx}, for $d=20, 50$ and $100$ (top to bottom).}
	\label{fig:time_full}
\end{figure}

Figure~\ref{fig:time_full} gives computational times under the same settings as Table~\ref{tab:Ext_full_RMSE}. It illustrates the mean time and its $95\%$ confidence region to maximise the likelihood function in parallel over 16 CPUs, using the Type I (black) and Type II (grey) approximations. As expected, the computation time is lower when using the rougher approximation (Type II) and increases with the number of sites. Focusing on the extremal skew-$t$, using approximation Type I, the median maximisation times are around one, three and six
minutes respectively for $d=20$, $d=50$ and $d=100$. These values are relatively constant across the different dependence structures and can be halved by using the Type II approximation. Neither the smoothness nor the spread of the processes has a noticeable impact on the speed. 

The computational speed is fast due to both the approximations and the use of the ST likelihood. For comparison, \citet{castruccio2016} stated that full likelihood estimation is limited to $d=12$ or $13$, and  a single iteration of the expectation-maximisation algorithm of \citet{huser2019} takes several hours for the Brown-Resnick model. 

For a large number of sites it seems favourable to use the rougher Type II approximation, which produces substantial gains in computation time at a relatively low accuracy loss. This seems a particularly appealing strategy for complex high-dimensional models.

\subsection{Composite $j$-wise likelihoods}

We now investigate the performance of various high-order composite likelihoods using the full likelihood as reference.
The composite likelihood defined in \eqref{eq:CL} requires the computation of the $\binom{d}{j}$ elements in $\mathcal{Q}_d^{(j)}$. For fixed dimension $d$, even a moderately high $j$ (compared to $d$) will require higher computational cost than the full likelihood. It is reasonable to believe that the required number of elements of $\mathcal{Q}_d^{(j)}$ should decrease as $j$ increases, reducing the computational burden for similar efficiency level. We therefore set binary weights in \eqref{eq:CL} such that only a restricted number of the $\binom{d}{j}$ elements are selected. For $j \in \{ 1, \ldots, d\}$ and some $q \in \mathcal{Q}_d^{(j)}$ we define the weights as
$$
w_q = \left\{ \begin{array}{ccl}  
1 & & \mathrm{if }\, \max_{i,k \in q; i \neq k} \| s_i - s_k \| < u \\
0 & & \mathrm{otherwise}
\end{array} \right.,
$$
where $u > 0$. 

In the following we focus on $d=20$ and we consider different thresholds $u$ such that approximately $50$ tuples are used to compute each CL function. See also \cite{castruccio2016}, who compare efficiencies of CL estimators in smaller ($d=11$) dimensions. 

We evaluate both the statistical efficiency and computational cost of high-order CL estimators. Thus a comparison to the full likelihood estimator is established through a Time Root Relative Efficiency (TRRE) criterion defined as
$$ 
\mathrm{TRRE} (\theta_j) = \frac{\textrm{RMSE}(\hat{\theta}_d)}{\textrm{RMSE}(\hat{\theta}_j)}
\times  \frac{\textrm{time}(\hat{\theta}_d)}{\textrm{time}(\hat{\theta}_j)},
$$
the product of the Root Relative Efficiency (RRE) and time ratio. 
\begin{table}[t!]
$$
\scriptscriptstyle
\begin{array}{cc c ccc c ccc}
\multicolumn{3}{c}{} & \multicolumn{3}{c}{\textrm{extremal-$t$}} & & \multicolumn{3}{c}{\textrm{extremal skew-$t$}} \\
\cline{4-6} \cline{8-10} 
& & & r=1.5 & r=3.0 & r=4.5 & & r=1.5 & r=3.0 & r=4.5 \\
\hline
\eta=1.00 & j=2 & & 103/106 & 62/72 & 56/50 &  & 06/01/07/06 & 04/06/05/04 & 03/03/02/02 \\
& j=3 & & 33/35 & 32/28 & 27/23 &  & 07/02/17/12 & 09/04/12/08 & 06/05/05/04 \\
& j=4 & & 15/18 & 16/16 & 16/14 &  & 19/19/25/19 & 21/15/21/10 & 11/03/07/05 \\
& j=5 & & 10/11 & 10/10 & 09/09 &  & 07/01/11/08 & 10/04/07/04 & 07/06/04/03 \\
& j=10 & & 14/15 & 14/14 & 13/13 &  & 12/16/17/13 & 15/19/10/12 & 08/11/07/06 \\
\hline
\eta=1.50 & j=2 & & 64/77 & 61/77 & 45/56 &  & 11/03/17/09 & 05/03/04/03 & 04/03/04/03 \\
& j=3 & & 21/29 & 25/30 & 25/26 &  & 18/02/41/17 & 10/03/09/06 & 11/08/13/09 \\
& j=4 & & 11/16 & 12/17 & 13/15 &  & 37/31/82/29 & 12/12/13/08 & 10/10/12/08 \\
& j=5 & & 08/11 & 08/11 & 08/09 &  & 21/18/34/16 & 09/02/06/05 & 09/16/07/05 \\ 
& j=10 & & 12/14 & 13/15 & 12/14 &  & 28/27/30/19 & 18/26/19/14 & 15/23/13/12 \\
\hline
\eta=1.95 & j=2 & & 66/68 & 36/75 & 23/58 &  & 04/01/11/07 & 04/02/04/03 & 03/02/02/02 \\
& j=3 & & 21/31 & 19/32 & 16/29 & & 12/01/18/11 & 05/03/09/06 & 03/07/06/04 \\
& j=4 & & 13/19 & 20/25 & 18/24 & & 21/28/37/26 & 07/17/14/10 & 04/09/04/03 \\
& j=5 & & 07/12 & 14/17 & 15/17 &  & 19/19/26/16 & 14/25/11/08 & 11/16/09/07 \\
& j=10 & & 09/15 & 32/20 & 22/18 &  & 13/20/03/04 & 16/26/10/10 & 22/19/10/09 \\
\hline
\end{array}
$$
\caption{\small Time root relative efficiency (TRRE) of $\hat{\eta}_j/\hat{r}_j$ and $\hat{\eta}_j /\hat{r}_j/\hat{\beta}_{1j}/\hat{\beta}_{2j}$ for the extremal-$t$ and extremal skew-$t$ models with $\nu=1$ and $d=20$. Larger values are preferable under the TRRE criterion.}
\label{tab:TRRE}
\end{table}
Table~\ref{tab:TRRE} presents the TRRE of the parameters of the extremal-$t$ and extremal skew-$t$ models with various range and smoothness parameters and fixed degree of freedom $\nu=1$. From the TRRE of the extremal-$t$ estimates (left columns), it appears that low-dimensional composite likelihood methods give the best trade-off between accuracy and computational cost. Moreover this  becomes more pronounced as the smoothness and range parameter reduce. For the extremal skew-$t$ estimates (right columns), the TRREs show that the higher-order composite likelihoods become more efficient, with e.g.~the $4$-wise composite likelihood consistently performing well across a wide range of scenarios. 

A more detailed explanation of these results is provided by separately analysing statistical and computational efficiencies. Table~\ref{tab:RMSE_ET_SKT} in Appendix~\ref{app:sim} provides the RMSEs. 
It highlights that the RMSEs are reducing as $j$ increases, with the highest statistical efficiency obtained for $j=10$. Part of this trend is hidden by the constant number of tuples considered across each method and the increased degree of approximation as function of $j$. 

Figure~\ref{fig:Time_ET_SKT} provides the associated computational timings. Figure~\ref{fig:Time_ET_SKT} demonstrates that the lowest computation times for the extremal skew-$t$ (bottom panels) are shared by the pairwise composite likelihood ($j=2$) and the full likelihood ($j=d$). For the extremal-$t$ (top panels) the maximisation times for $j=2$ are lower than for $j=d$ and increase gradually from $j=2$ up to $j=10$. 

The same number of tuples are considered in each $j$-wise composite likelihood; due to the cdf evaluation, for fixed $N_{max}$ the mean maximisation time increases with $j$. However if the approximation becomes rougher (i.e.~if $N_{max}$ decreases for increasing $j$), the times can decrease. This confirms the utility of our strategy of controlling the degree of approximation in the exponent function and its derivatives. The drop in time between $j=10$ and $j=d$ suggests that it might also be useful to consider fewer tuples as $j$ increases. 

Table~\ref{tab:RMSE_ET_SKT}  in Appendix~\ref{app:sim} shows that pairwise and triplewise CLs can yield much larger RMSEs than higher order ($j=10$) CLs. For more flexible high-dimensional models such as the extremal skew-$t$, higher order composite likelihoods should be considered, and these require fine strategies to control the computation time.

\begin{figure}[ht]
	\includegraphics[width=\textwidth]{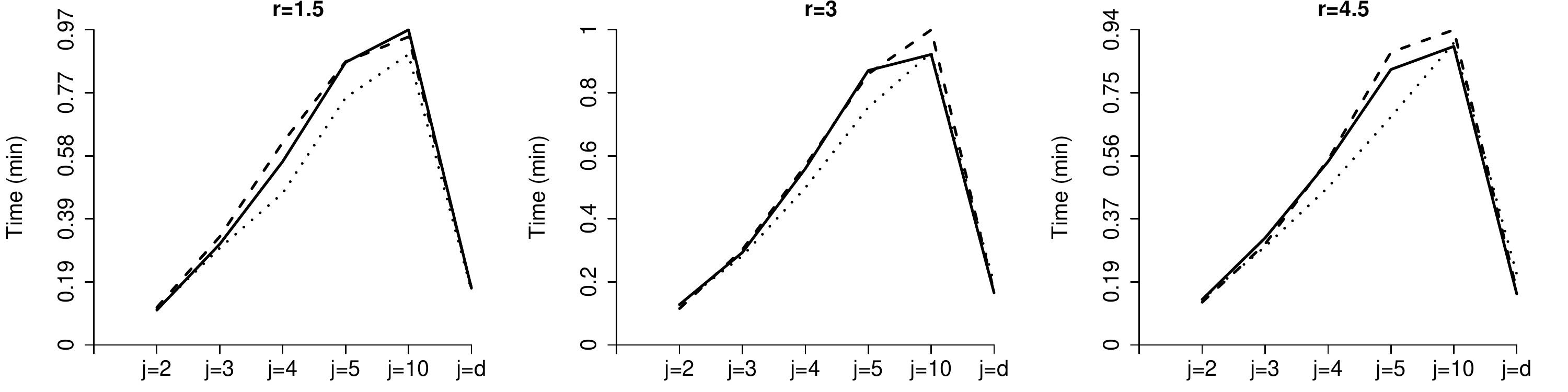} \\
	\includegraphics[width=\textwidth]{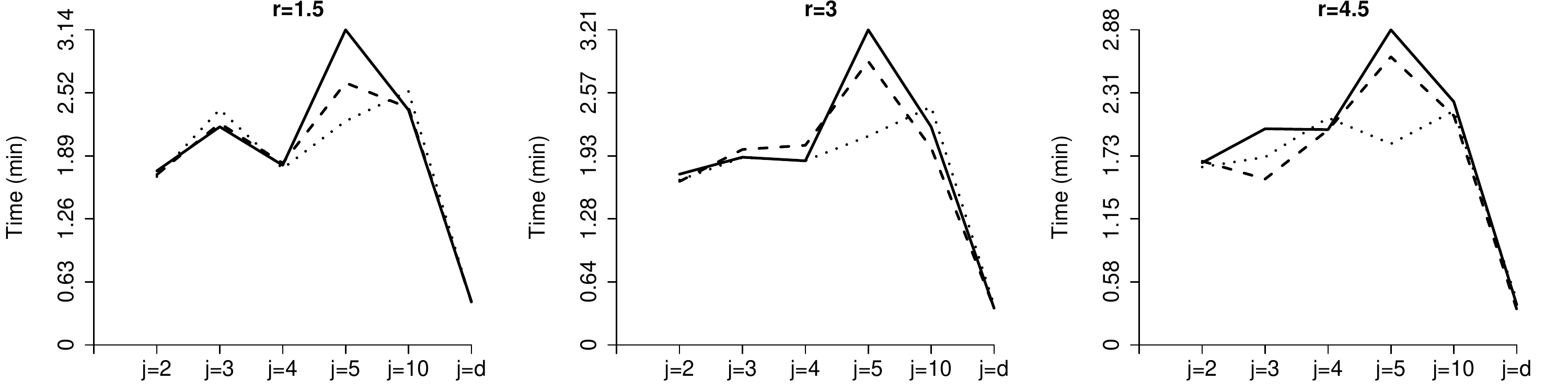}
	\caption{\small Average time (in minutes) for the maximisation of the $j$-wise composite likelihood function of the extremal-$t$ (top) and extremal skew-$t$ (bottom) for $\nu=1$ and $d=20$. Smoothness values $\eta = 1, 1.5$ and $1.95$ are represented by solid, dashed and dotted lines. The case $j=d$ corresponds to full likelihood estimation with the Type II approximation from Table~\ref{tab:approx}.}
	\label{fig:Time_ET_SKT}
\end{figure}

We examine the bias of our methodology through Table~\ref{tab:BIAS_ET_SKT} in Appendix~\ref{app:sim}. The method seems to yield large biases for low-degree CL for the extremal skew-$t$ whereas the bias is relatively small for the extremal-$t$. In general, for fixed approximation levels and the same number of tuples, increasing the order of the CL increases the bias.

%%%%%%%%%%%%%%%%%%%%%%%%%%%%%%%%%%%%%%%%%%%%
%
% DATA EXAMPLE
%
%%%%%%%%%%%%%%%%%%%%%%%%%%%%%%%%%%%%%%%%%%%%
%

\section{Temperature Data Example}
\label{sec:dataeg}

We present an illustrative analysis of the application of an extremal skew-$t$ process using temperature data around the city of Melbourne, Australia. The data is a gridded commercial product \citep{jeffrey2001} interpolated from a network of weather stations, recorded during the $N=50$ year period 1961--2010. The $d=90$ stations are on a $0.15$ degree (approximately $13$ kilometre) grid in a $9$ by $10$ formation. The site locations are displayed in Figure \ref{sitelocs}.

\begin{figure}
	\begin{center}
		\includegraphics[page=1,width=9cm]{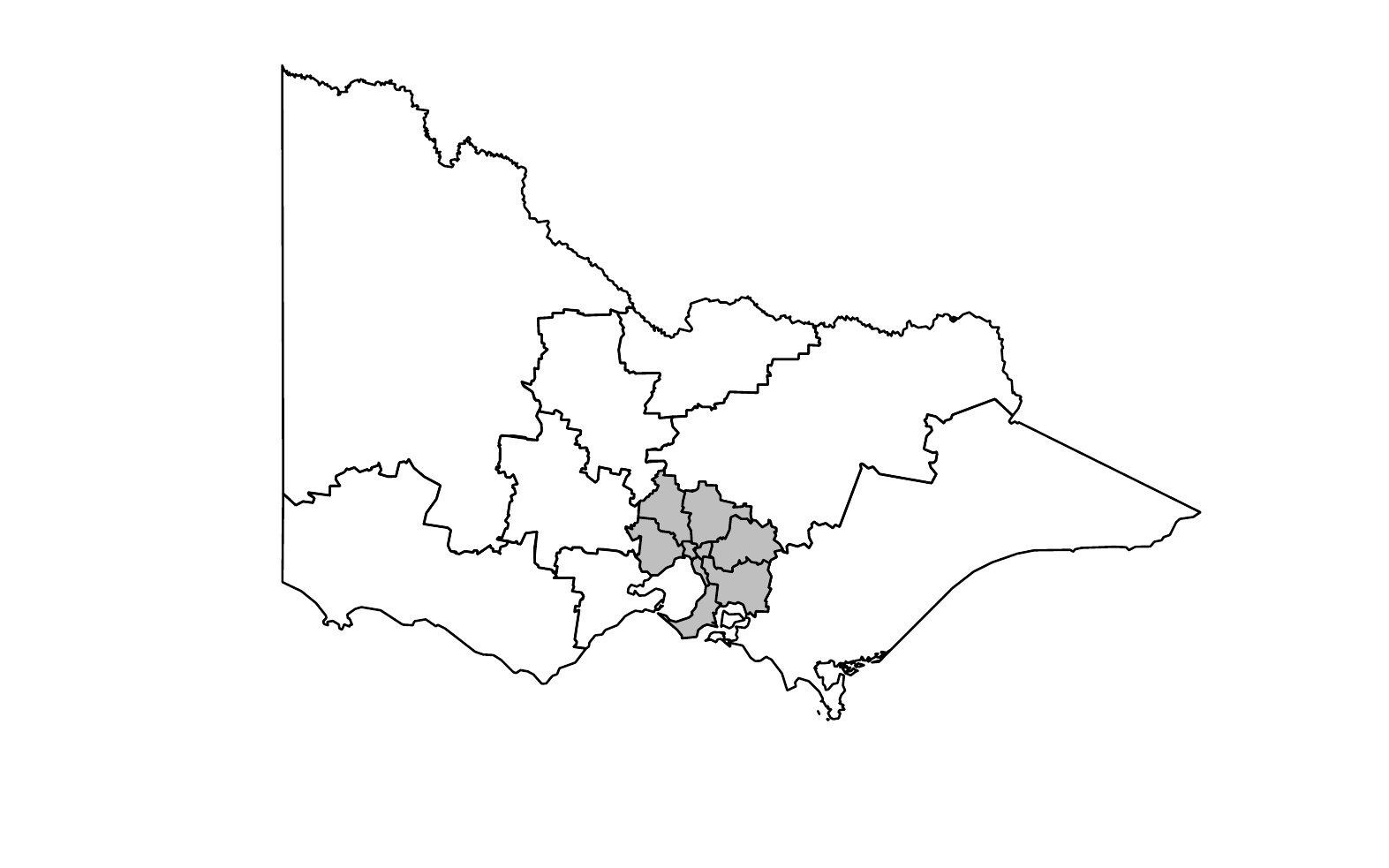}
		\hspace{-2.5cm}
		\includegraphics[page=2,width=9cm]{sitelocs}
	\end{center}
	\vspace{0mm}
	\caption{Site locations. (Left) The Australian state of Victoria with the inner Melbourne region highlighted. (Right) The inner Melbourne region with site locations on a $9$ by $10$ grid.}
	\label{sitelocs}
\end{figure}

The data consists of summer temperature maxima, taken over the extended summer period from August to April inclusive. The first maximum is taken over the August 1961 to April 1962 period, and the last maximum is taken over the August 2010 to April 2011 period. The maxima showed no evidence of temporal non-stationarity.

We additionally know the day of the year on which the temperature maxima occur. This should not be used directly for the hitting scenario, because Melbourne heatwaves often last two or three days. Instead we consider two maxima to belong to the same event if they occur within three days of each other. For each year we therefore derive a hitting scenario $\Pi$, as defined in Section \ref{ssec:inference}. We then use the ST likelihood, which for a single year is given in equation \eqref{eq:STlikelihood}. We use full likelihood inference in preference to $j$-wise likelihood; this is feasible with $d=90$ dimensions due to the quasi-Monte Carlo approximations of the multivariate $t$ distribution function (see Section \ref{sec:sim}). We also employ the powered exponential correlation function $\rho(h) = \exp\{- \|h \| / r)^\eta\}$ with range parameter $r > 0$ and smooth parameter $0 < \eta \leq 2$. 

We first fit the marginal distributions using unconstrained location and scale parameters and shape parameter $\xi = \xi_{0} + \xi_E x_{E} + \xi_N x_{N}$, where $x_{E}$ and $x_{N}$ are (centred) easting and northings in $100$ kilometre units. This gave $\hat{\xi}_{0} = -0.14(0.01)$, $\hat{\xi}_E = 0.02(0.02)$ and $\hat{\xi}_N = 0.09(0.02)$, with location and scale parameters as given in Figure \ref{locsigmaps}. We then use marginal transformations to fit the dependence structure. 

\begin{figure}
	\begin{center}
		\includegraphics[page=1,width=7cm]{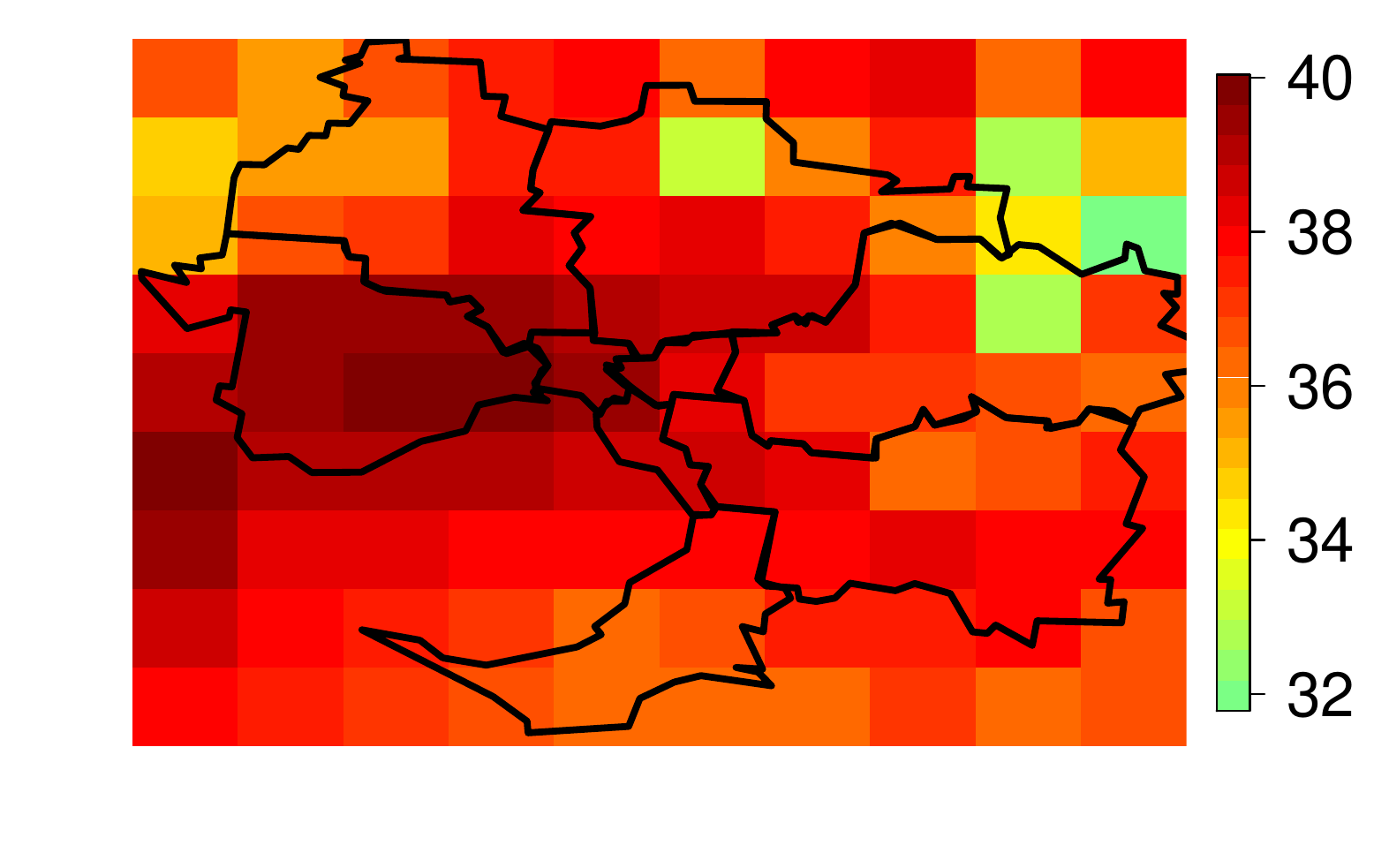}
		\hspace{-0cm}
		\includegraphics[page=2,width=7cm]{locsigmaps}
	\end{center}
	\vspace{-0.5cm}
	\caption{Estimated marginal location (left) and scale (right) parameters.}
	\label{locsigmaps}
\end{figure}

It can be difficult to estimate $r$, $\eta$, and the degrees of freedom parameter $\nu$ simultaneously, so we therefore used a grid search over $\nu=1,3,5$. We additionally model the skewness as $\alpha = \alpha_{0} + \alpha_E x_{E} + \alpha_N x_{N}$. At each value of $\nu$ we optimize over the range $r$, the smooth $\eta$ and the skewness parameters $(\alpha_{0},\alpha_E,\alpha_N)$. With a fixed degree of freedom parameter, the fit of the dependence structure took approximately 2 minutes on a 16 core machine. We then choose the value of $\nu$ producing the largest likelihood. This gave $\hat{\nu} = 5$, $\hat{\eta} = 1.303$ and $\hat{r} = 8.554$, with skewness parameters $\hat{\alpha}_{0} = -0.010$, $\hat{\alpha}_E = -0.281$ and $\hat{\alpha}_N = 0.220$. The largest distance between any two site locations (in 100 kilometre units) is $1.785$, and therefore the smallest correlation is $\exp[-(1.785/\hat{r})^{\hat{\eta}}] \approx 0.88$, indicating a strong degree of spatial dependence. The northern outskirts of Melbourne, particularly to the north-east around Healesville, contains less urban and more elevated terrain, and this may contribute to the selection of the larger value $\hat{\nu} = 5$. The skewness surface is positive to the north-west and negative to the south-east. Fixing the skewness parameters to zero (that is, fitting the extremal $t$ model) gave $\hat{\nu} = 5$, $\hat{\eta} = 1.254$ and $\hat{r} = 8.175$ and a likelihood ratio test highlighted a preference for the fit of the extremal skew-$t$ model.

The fitted dependence structure, in additional to the marginal distributions, can be used for inference on features of interest. Simulations of the process are often required: they can be performed conditionally on the hitting scenario, or conditionally on site observations. Figure \ref{gmmaps} shows two simulations of the process, conditioning on at most two heatwave events causing all annual maxima. 

\begin{figure}
	\begin{center}
		\includegraphics[page=1,width=7cm]{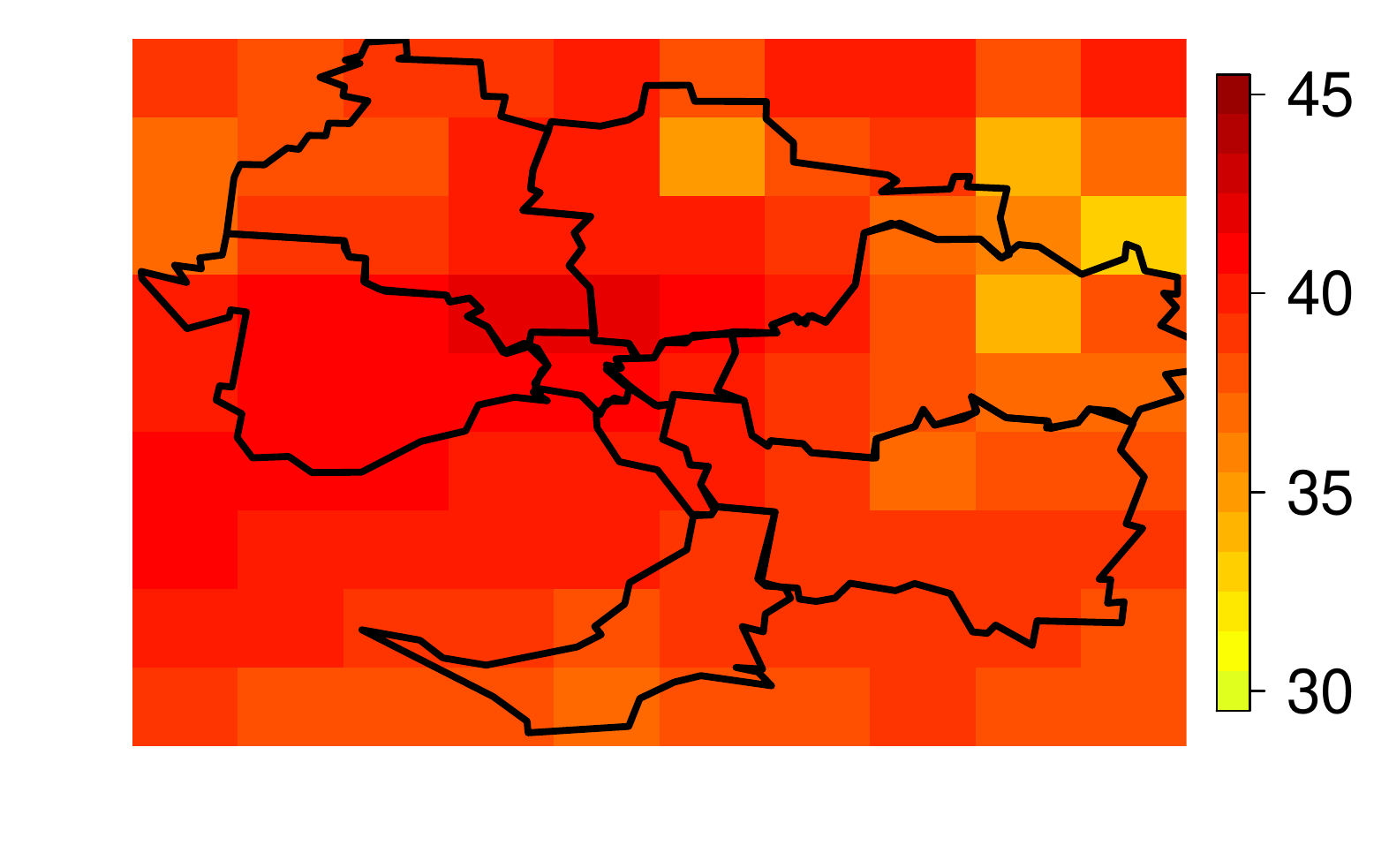}
		\hspace{-0cm}
		\includegraphics[page=2,width=7cm]{gmmaps}
	\end{center}
	\vspace{-0.5cm}
	\caption{Simulations from the fitted max-stable process, conditioning on at most two heatwave events causing all maxima.}
	\label{gmmaps}
\end{figure}

%%%%%%%%%%%%%%%%%%%%%%%%%%%%%%%%%%%%%%%%%%%%
%
% SECTION 5
%
%%%%%%%%%%%%%%%%%%%%%%%%%%%%%%%%%%%%%%%%%%%%
%
\section{Discussion}
\label{sec:discuss}

This article focuses on the general class of extremal skew-$t$ max-stable processes. We first equipped ourselves with the tools required for exact and conditional simulations from the process and derived the necessary results to evaluate the likelihood in any dimension. The known time of occurrence of each maxima has allowed us to use the Stephenson-Tawn likelihood. We proposed strategies to reduce the computational burden associated with the likelihood evaluation using quasi-Monte Carlo approximations. Increasing the dimension at the cost of rougher approximation of the likelihood has proven to be a good strategy: parameter estimation is possible within a reasonable time in dimension up to $d=100$ while maintaining accuracy levels. 

We have proposed combining the Stephenson-Tawn likelihood and composite likelihood methodology. Using this approach we assessed the statistical efficiency of high-order composite likelihood methods and examined their computational cost. Our simulation study outlines a reduction of the root mean squared error for higher-degree composite likelihoods under a fixed degree of approximation and equal numbers of tuples. Our results for high-dimensional data suggest that the $4$-wise composite likelihood is, under most scenarios, a good estimation method for the extremal skew-$t$. We have presented estimation strategies for the flexible extremal skew-$t$  process which are relatively fast, even in the presence of high-dimensional data. We have successfully applied them to a $90$ dimensional temperature dataset recorded in Melbourne, Australia. 

The results presented in this work could potentially be expanded upon by extending the hierarchical matrix decompositions of \citet{genton2018} to multivariate-$t$ cdfs. The selection of an optimal threshold in the definition of the binary weights in the composite likelihood function could also be examined \citep{sang2014, castruccio2016}. Furthermore some of the solutions suggested by \citet{azzalini1999, azzalini2008, azzalini2013} could be implemented to reduce sporadic inaccuracies in the estimation of the skewness.

\section*{Acknowledgements}
This research was undertaken with the assistance of resources and services from the National Computational Infrastructure (NCI), which is supported by the Australian Government. The authors acknowledge Research Technology Services at UNSW Sydney for supporting this project with compute resources. SAS and BB are supported by the Australian Centre of Excellence for Mathematical and Statistical Frontiers (ACEMS; CE140100049) and the Australian Research Council Discovery Project scheme (FT170100079). 

\bibliographystyle{chicago}
\bibliography{biblio_msp}

%%%%%%%%%%%%%%%%%%%%%%%%%%%%%%%%%%%%%%%%%%%%%
%
% Section: APPENDIX
%
%%%%%%%%%%%%%%%%%%%%%%%%%%%%%%%%%%%%%%%%%%%%%

\appendix
\section{Technical details}
%

%
% THE NON-CENTRAL EXTENDED SKEW-T
%
\subsection{The non-central extended skew-$t$ distribution \citep{beranger2017}}
\label{app:def_ncextst}

\begin{defi}\label{def:noncen_skew} 

$Y$ is a $d$-dimensional, non-central extended skew-$t$ distributed random vector, denoted by
$Y\sim \mathcal{ST}_d(\mu, \Omega, \alpha, \tau,\kappa,\nu)$, if for  $y\in\real^d$ it has pdf
\begin{equation*}
\psi_d(y;\mu,\Omega,\alpha,\tau,\kappa,\nu)=\frac{\psi_d(y;\mu,\Omega,\nu)}
{\Psi\left(\frac{\tau}{\sqrt{1+Q_{\bar{\Omega}}(\alpha)}};\frac{\kappa}{\sqrt{1+Q_{\bar{\Omega}}(\alpha)}},\nu\right)} 
\Psi \left\{(\alpha^{\top} z+\tau) \sqrt{\frac{\nu + d}{\nu + Q_{\bar{\Omega}^{-1}}(z)}};\kappa,\nu+d\right\},
\end{equation*}
where  $\psi_d(y;\mu,\Omega,\nu)$ is the pdf of a $d$-dimensional $t$-distribution with location $\mu\in\real^d$, $d\times d$ scale matrix
$\Omega$ and $\nu\in\mathbb{R}^+$ degrees of freedom, $\Psi(\cdot;a,\nu)$ denotes a univariate non-central $t$ cdf with non-centrality parameter $a\in\real$ and $\nu$ degrees of freedom, and $Q_{\bar{\Omega}^{-1}}(z)=z^{\top} \bar{\Omega}^{-1} z$, $z=(y - \mu)/\omega$, $\omega=\diag(\Omega)^{1/2}$, 
$\bar{\Omega}=\omega^{-1} \, \Omega\, \omega^{-1}$ and  $Q_{\bar{\Omega}}(\alpha)=\alpha^\top \bar{\Omega}\alpha$. 
The associated cdf is
\begin{equation*}
\Psi_d(y;\mu,\Omega,\alpha,\tau,\kappa,\nu) = 
\frac{\Psi_{d+1} \left\{
\bar{z};
\Omega^*,\kappa^*, \nu
\right\}}
{\Psi\left(\bar{\tau}; \bar{\kappa},\nu\right)},
\end{equation*}
where $\bar{z}=(z^\top,\bar{\tau})^{\top}$, $\Psi_{d+1}$ is a $(d+1)$-dimensional (non-central) $t$ cdf with covariance matrix
and non-centrality parameters
$$
\Omega^*=\left( \begin{array}{cc} 
\bar{\Omega} & - \delta\\
 - \delta^\top & 1 
 \end{array} \right),
\quad \kappa^*=\left( \begin{array}{c} 0 \\ \bar{\kappa} \end{array} \right),
$$
and $\nu$ degrees of freedom,  
and where 
\begin{equation*}
\delta = \left\{ 1+Q_{\bar{\Omega}}(\alpha) \right\}^{-1/2}\,\bar{\Omega}\alpha,\quad
\bar{\kappa} = \left\{ 1+Q_{\bar{\Omega}}(\alpha) \right\}^{-1/2}\, \kappa, \quad
\bar{\tau} = \left\{ 1+Q_{\bar{\Omega}}(\alpha) \right\}^{-1/2}\,\tau. 
\end{equation*}
\end{defi}

%
% Extremal skew t - Exponent function details
%
\subsection{Parameters of the exponent function of the extremal skew-$t$}
\label{app:ext_st_expo}

First of all, we define $m_{j+} \equiv m_+(s_j)$ as follows
\begin{align*}
m_{j+}=\int_0^\infty y_j^\nu\,
\phi(y_j;\alpha^*_{j},\tau^*_{j})\der y_j
=
\frac{2^{(\nu-2)/2} \Gamma\{(\nu+1)/2\}
\Psi(\alpha^*_{j}
\sqrt{\nu+1};-\tau^*_{j},\nu+1) 
}
{\sqrt{\pi}\Phi[\tau\{1+Q_{\bar{\Omega}}(\alpha)\}^{-1/2}]},
\end{align*}
where $\alpha^*_{j}$ and $\tau^*_{j}$ are respectively the marginal shape and extension parameters and $Q_{\bar{\Omega}}(\alpha) = \alpha^\top \bar{\Omega} \alpha$.
This then allows us to obtain the exponent function of the extremal skew-$t$ given in \eqref{eq:V_ext_st} as the sum of $(d-1)$-dimensional non-central extended skew-$t$ distribution with correlation matrix 
$\bar{\Omega}_i^\circ=\omega_{I_iI_i\cdot i}^{-1}\,\bar{\Omega}_{I_iI_i\cdot i}\,\omega_{I_iI_i\cdot i}^{-1}$, where 
$\omega_{I_iI_i\cdot i}=\text{diag}(\bar{\Omega}_{I_iI_i\cdot i})^{1/2}$,
$\bar{\Omega}_{I_iI_i\cdot i}=\bar{\Omega}_{I_iI_i}-\bar{\Omega}_{I_ii}\bar{\Omega}_{iI_i}$,
$I = \{1, \ldots, d \}$, $I_i = I \backslash i$,
$\bar{\Omega}=\omega^{-1} \, \Omega\, \omega^{-1}$,
$\omega=\diag(\Omega)^{1/2}$.
The shape parameter is $\alpha_i^{\circ}=\omega_{I_iI_i\cdot i}\,\alpha_{I_i} \in \real^{d-1}$, 
the extension parameter $\tau_{j}^{\circ}=(\bar{\Omega}_{jI_j}\alpha_{I_j}+\alpha_j)(\nu+1)^{1/2} \in \real$, 
the non-centrality 
$\kappa^{\circ}_j=-\{1+Q_{\bar{\Omega}_{I_jI_j\cdot j}}(\alpha_{I_j})\}^{-1/2}\tau  \in \real$
$Q_{\bar{\Omega}_{I_jI_j\cdot j}}(\alpha_{I_j}) = \alpha_{I_j}^\top \bar{\Omega}_{I_jI_j\cdot j} \alpha_{I_j}$,
and the degrees of freedom $\nu+1$.

%
% Proof of Prop 1 -  p0 extremal skew t
%
\subsection{Proof of Proposition~\ref{prop:p0_ext_st}}
\label{app:exact_ext_st}
\begin{lem}
\label{lem:proof_p0_ext_st}
The finite $d$-dimensional distribution of the random process $(W(s) / W(s_0))_{s \in \cS}$ under the transformed probability measure 
$\widehat{\Pr} = \{ W(s_0)\}_+^{\nu} / m_+(s_0)  \mathrm{d}\Pr $ 
is equal to the distribution of a non-central extended skew-$t$ process with mean $\mu_d$, $d\times d$ scale matrix
$\hat{\Sigma}_d$, slant vector $\hat{\alpha}_d$, extension $\tau_d$, non-centrality $\kappa_d = -\tau$ and $\nu_d = \nu + 1$
degrees of freedom, given by
$$
\mu_d = \Sigma_{d;0}, 
\quad \hat{\Sigma}_d = \frac{\Sigma_d - \Sigma_{d;0} \Sigma_{0;d} }{\nu +1 },
\quad \hat{\alpha}_d = \sqrt{\nu +1} \hat{\omega} \omega_d^{-1} \alpha,
\quad \tau_d = (\alpha_0 + \Sigma_{0;d} \omega_d^{-1} \alpha  ) \sqrt{\nu + 1},
$$
and where $\Sigma_d = (K(x_i, x_j))_{1\leq i,j, \leq d}$,
$\Sigma_{d;0} = \Sigma_{0;d}^\top = (K(x_0, x_i)))\,{1 \leq i \leq d}$,
$\alpha = ( \alpha_1, \ldots, \alpha_d )$,
$\omega_d = \diag(\Sigma_d)^{1/2}$ and $\hat{\omega} = \diag(\hat{\Sigma}_d)^{1/2}$.
\end{lem}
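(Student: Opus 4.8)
The plan is to compute the finite-dimensional law of $(W(s)/W(s_0))_{s\in\cS}$ under the tilted measure $\widehat{\Pr}=\{W(s_0)\}_+^\nu/m_+(s_0)\,\der\Pr$ directly, by conditioning on the value of $W(s_0)$. First I would recall that $W$ is a skew-normal random field, so the joint vector $(W(s_0),W(s_1),\ldots,W(s_d))$ has a $(d+1)$-dimensional skew-normal law $SN_{d+1}(\bar\Omega_{(0,d)},\alpha_{(0,d)},\tau)$ with the block correlation structure built from $K$ and slant $(\alpha_0,\alpha)$. The key observation is that tilting by $\{W(s_0)\}_+^\nu$ has two effects: the truncation to $W(s_0)>0$ (from the positive part) and the polynomial weight $W(s_0)^\nu$; together these turn the Gaussian-type marginal of $W(s_0)$ into a $\chi$-type / $t$-type mixing variable, which is precisely the mechanism that produces $\nu+1$ degrees of freedom in the resulting skew-$t$ process.

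The main steps, in order, would be: (i) write the skew-normal density of $(W(s_0),\bW)$ explicitly using the representation of a multivariate skew-normal as a normal conditioned on an independent latent variable being positive (the Azzalini stochastic representation), so that the slant and extension enter only through that latent coordinate; (ii) perform the change of variables $R=W(s_0)$, $\bV=\bW/W(s_0)$, picking up a Jacobian $R^{d}$; (iii) multiply the density by the tilting factor $R^\nu\indi\{R>0\}/m_+(s_0)$ and integrate out $R$ over $(0,\infty)$ — this is the Gamma-type integral that converts the Gaussian kernel in $R$ into a power-law tail, yielding a multivariate $t$ kernel in $\bV$ with $\nu+1$ degrees of freedom and with the conditional mean $K(s_0,s_i)$ becoming the location $\mu_d=\Sigma_{d;0}$ and the conditional covariance (divided by $\nu+1$) becoming the scale $\hat\Sigma_d=(\Sigma_d-\Sigma_{d;0}\Sigma_{0;d})/(\nu+1)$; (iv) track what happens to the skewing factor: the latent coordinate that carried $\alpha$ and $\tau$ must be re-expressed relative to $\bV$, which is where the transformed slant $\hat\alpha_d=\sqrt{\nu+1}\,\hat\omega\,\omega_d^{-1}\alpha$, the extension $\tau_d=(\alpha_0+\Sigma_{0;d}\omega_d^{-1}\alpha)\sqrt{\nu+1}$, and the non-centrality $\kappa_d=-\tau$ emerge; (v) match the resulting density to the definition of $\psi_d(\cdot;\mu,\Omega,\alpha,\tau,\kappa,\nu)$ in Definition~\ref{def:noncen_skew}, checking that the normalising constant coming from $m_+(s_0)$ (whose closed form is given in Appendix~\ref{app:ext_st_expo}) is exactly the $\Psi(\bar\tau;\bar\kappa,\nu)$ denominator.

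I expect the main obstacle to be step (iv): bookkeeping the slant and extension parameters correctly through the conditioning-plus-rescaling. Conditioning a skew-normal on one coordinate does not simply restrict the slant vector — it shifts the extension by a term linear in the conditioning value and modifies the slant by the conditional correlation structure; then the extra rescaling by $W(s_0)$ in passing to ratios introduces the $\omega_d^{-1}$ and $\hat\omega$ standardisation factors, and the $\sqrt{\nu+1}$ appears because the scale was deflated by $\nu+1$. Keeping these consistent, and in particular verifying that the non-centrality collapses to the clean value $-\tau$, is the delicate part; everything else is a standard Gaussian-to-$t$ integral and Jacobian computation. A useful consistency check along the way is to set $\alpha=0$, $\tau=0$ and confirm we recover the known $P_{s_0}$ for the extremal-$t$ process of \citet{thibaud2015} and \citet{ribatet2013}, and further $\nu=1$ for the Schlather case.
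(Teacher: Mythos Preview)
Your proposal is correct and follows essentially the same route as the paper: write the joint $(d{+}1)$-dimensional extended skew-normal density of $(W(s_0),\ldots,W(s_d))$, change variables to $y_0$ and the ratios $z_i=y_i/y_0$ with Jacobian $y_0^{d}$, multiply by the tilt $y_0^{\nu}\indi\{y_0>0\}/m_+(s_0)$, integrate out $y_0$, and identify the result with Definition~\ref{def:noncen_skew}. The only cosmetic difference is that the paper does not use the Azzalini latent-variable representation but works directly with the density in its $\phi_d\cdot\Phi/\Phi$ form; your ``Gamma-type'' integral and your step~(iv) are then handled \emph{simultaneously} via the one-dimensional identity $\int_0^\infty u^{d+\nu}\phi(u)\Phi(au+\tau)\,\der u = 2^{(d+\nu-2)/2}\pi^{-1/2}\Gamma((d+\nu+1)/2)\,\Psi(a\sqrt{d+\nu+1};-\tau,d+\nu+1)$, which produces the non-central $t$ cdf directly rather than a bare power-law kernel with the skew grafted on afterwards.
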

\begin{proof}[Proof of Lemma~\ref{lem:proof_p0_ext_st}]
The proof runs along the same lines as the proof of Lemma~2 in the supplementary material of 
\citet{dombry2016}. We consider finite dimensional distributions only.
Let $d \geq 1$ and $s_1, \ldots, s_d \in \cS$.
We assume that the covariance matrix $\tilde{\Sigma} = (K(s_i, s_j))_{0 \leq i,j \leq d}$ is non
singular so that $(W(s_i))_{0 \leq i \leq d}$ has density 
$$
\tilde{g}(y) = (2 \pi)^{-(d+1)/2} \det (\tilde{\Sigma})^{-1/2} 
\exp\left\{ -\frac{1}{2} y^\top \tilde{\Sigma}^{-1} y \right\}
\frac{\Phi(\tilde{\alpha}^\top \tilde{\omega}^{-1} y + \tau)}{\Phi(\tau/ \sqrt{1 + Q_{\tilde{\bar{\Sigma}}^{-1}}(\tilde{\alpha})}) }, \quad y \in \real^{d+1},
$$
where $\tilde{\alpha} = ( \alpha_0, \alpha_1, \ldots, \alpha_d )$, 
$Q_{\tilde{\bar{\Sigma}}^{-1}}(\tilde{\alpha}) = \tilde{\alpha}^\top \tilde{\bar{\Sigma}} \tilde{\alpha}$, 
$\tilde{\bar{\Sigma}} = \tilde{\omega}^{-1} \tilde{\Sigma} \tilde{\omega}^{-1}$ and $\tilde{\omega} = \diag(\tilde{\Sigma})^{1/2}$.
Setting $z = (y_i /y_0)_{1 \leq i \leq d}$, for all Borel sets $A_1, \ldots, A_d \subset \real$,
\begin{align*}
\widehat{\Pr} \left\{ \frac{W(s_i)}{W(s_0)} \in A_i, i=1, \ldots, d \right\}
&= \int_{\real^{d+1}} \indi \{ y_i/y_0 \in A_i, i=1, \ldots, d \} \frac{(y_0)_+^\nu \tilde{g}(y)}{m_{0+}} 
\der y \\
&= \int_{\real^d} \indi \{ z_i \in A_i, i=1, \ldots, d \} 
\left\{ \int_0^\infty \frac{(y_0)_+^\nu \tilde{g}( (y_0, y_0 z)^\top)}{m_{0+}} y_0^d \der y_0 \right\} \der z.
\end{align*}
We deduce that under $\widehat{\Pr}$ the random vector $(W(s_i)/W(s_0))_{1 \leq i \leq d}$ has 
density
\begin{align*}
g(z) &= \int_0^\infty \frac{y_0^{d+\nu}}{m_{0+}} \tilde{g}((y_0, y_0 z)^\top) \der y_0 \\
&= \frac{(2\pi)^{-(d+1)/2} \det(\tilde{\Sigma})^{-1/2}}
{m_{0+} \Phi(\tau/ \sqrt{1 + Q_{\tilde{\bar{\Sigma}}^{-1}}(\tilde{\alpha})})}
\int_0^\infty y_0^{d+\nu} 
\exp\left\{ -\frac{1}{2} \tilde{z}^\top \tilde{\Sigma}^{-1} \tilde{z} y_0^2 \right\}
\Phi(\tilde{\alpha}^\top \tilde{\omega}^{-1} (y_0, y_0 z)^\top + \tau) \der y_0,
\end{align*}
where $\tilde{z} = (1, z)^\top$. Through the change of variable 
$u = \left( \tilde{z}^\top \tilde{\Sigma}^{-1} \tilde{z} \right)^{1/2} y_0$, we obtain
\begin{align*}
\int_0^\infty y_0^{d+\nu} &
\exp\left\{ -\frac{1}{2} \tilde{z}^\top \tilde{\Sigma}^{-1} \tilde{z} y_0^2 \right\}
\Phi(\tilde{\alpha}^\top \tilde{\omega}^{-1} (y_0, y_0 z) + \tau) \der y_0 \\
&= \sqrt{2\pi} \left( \tilde{z} \tilde{\Sigma}^{-1} \tilde{z}  \right)^{-\frac{d + \nu +1}{2}}
\int_0^\infty u^{d + \nu} \phi(u) 
\Phi \left( \frac{\tilde{\alpha}^\top \tilde{\omega}^{-1} \tilde{z} }
{ \sqrt{ \tilde{z} \tilde{\Sigma}^{-1} \tilde{z} } } u + \tau \right) \der u \\
&= \left( \tilde{z} \tilde{\Sigma}^{-1} \tilde{z}  \right)^{-\frac{d_\nu}{2}}
2^{\frac{\nu -  3}{2}} \pi^{-\frac{d+2}{2}} \Gamma\left( \frac{d_\nu}{2}\right)
\Psi \left( \frac{ \tilde{\alpha}^\top \tilde{\omega}^{-1} \tilde{z} }
{\sqrt{ \tilde{z} \tilde{\Sigma}^{-1} \tilde{z} } } \sqrt{d_\nu}; -\tau, d_\nu \right),
\end{align*}
where $\alpha = (\alpha_1, \ldots, \alpha_d)$, $d_\nu = d + \nu + 1$ and $\Psi(\cdot; \kappa, \nu)$ is the cdf of the non-central $t$ distribution with non-centrality 
parameter $\kappa$ and $\nu$ degrees of freedom.
Thus applying the definition of $m_{0+}$, $\alpha^*_0$ and $\tau^*_0$ from 
\citet{beranger2017} we get
$$
g(z) = \frac{\pi^{-d/2} \det(\tilde{\Sigma})^{-1/2} }
{ \Psi(\alpha^*_0 \sqrt{\nu +1}; -\tau^*_0, \nu +1) }
\left( \tilde{z} \tilde{\Sigma}^{-1} \tilde{z}  \right)^{-\frac{d_\nu}{2}}
\frac{\Gamma\left( \frac{d_\nu}{2}\right)}{\Gamma\left( \frac{\nu + 1}{2}\right)}
\Psi \left( \frac{\tilde{\alpha}^\top \tilde{\omega}^{-1} \tilde{z}}
{\sqrt{ \tilde{z} \tilde{\Sigma}^{-1} \tilde{z} } } \sqrt{d_\nu}; -\tau, d_\nu \right),
$$
and the block decomposition 
$\tilde{\Sigma} = \left(
\begin{array}{cc}
1 & \Sigma_{0;d} \\
\Sigma_{d;0} & \Sigma_{d}
\end{array}
\right)$,
allows us to write
\begin{align*}
g(z) =
\frac{\psi_d(z; \mu_d, \hat{\Sigma}_d, \nu + 1 ) \Psi \left( \frac{\alpha_0 + \alpha^\top \omega_d^{-1} z}
{\sqrt{ \tilde{z} \tilde{\Sigma}^{-1} \tilde{z} } } \sqrt{d_\nu} ; -\tau, d_\nu \right)}
{ \Psi(\alpha^*_0 \sqrt{\nu +1}; -\tau^*_0, \nu +1)},
\end{align*}
where $\mu_d = \Sigma_{d;0}$, $\hat{\Sigma}_d = (\Sigma_d - \Sigma_{d;0} \Sigma_{0;d}) / (\nu + 1)$ and $\omega_d = \diag(\Sigma_d)^{1/2}$. Noting that 
\begin{align*}
\Psi(\alpha^*_0 \sqrt{\nu +1}; -\tau^*_0, \nu +1)
&= \Psi \left( \frac{\tau_d}
{\sqrt{1 + \hat{\alpha}_d^\top \hat{\bar{\Sigma}}_d \hat{\alpha}_d }};
\frac{-\tau}{\sqrt{1 + \hat{\alpha}_d^\top \hat{\bar{\Sigma}}_d \hat{\alpha}_d }};
\nu + 1
\right) \\
 \Psi \left( \frac{\alpha_0 + \alpha^\top \omega_d^{-1} z }
{\sqrt{ \tilde{z} \tilde{\Sigma}^{-1} \tilde{z} } } \sqrt{d_\nu} ; -\tau, d_\nu \right)
&= \Psi \left(  
\frac{ \hat{\alpha}_d z' + \tau_d
}
{\sqrt{\nu + 1 + z'^\top \hat{\bar{\Sigma}}_d^{-1}  z'
}}
\sqrt{d_\nu };
- \tau, d_\nu
\right),
\end{align*}
where $\Psi(\cdot; \kappa, \nu)$ denotes the cdf of the univariate non-central $t$ distribution with non-centrality $\kappa$ and $\nu$ degrees of freedom, $z' = \hat{\omega}^{-1} (z - \Sigma_{d;0})$, $\hat{\alpha}_d = \sqrt{\nu + 1} \hat{\omega} \omega_d^{-1} \alpha$, $\tau_d = (\alpha_0 + \alpha^\top \Sigma_{d;0})
\sqrt{\nu + 1 }$, $\hat{\bar{\Sigma}}_d = \hat{\omega}^{-1} \hat{\Sigma} \hat{\omega}^{-1}$, $\hat{\omega} = \diag(\hat{\Sigma})^{1/2}$ which leads us to the conclusion that 
$g(z)$ is the density of a non-central extended skew-$t$ distribution with parameters $\mu_d = \Sigma_{d;0}$, $\hat{\Sigma}_d$, $\hat{\alpha}_d$, $\tau_d$ and $\kappa_d = -\tau$.

\end{proof}

In order to prove Proposition~\ref{prop:p0_ext_st}, 
let $\mathcal{C}_+ = \mathcal{C} \{\cS, [0, \infty]\}$ denote the space of continuous 
non-negative functions on $\cS$, and $\sigma$ represent the distribution of the 
$\{W(s_i)\}_+^\nu / m_+$, and consider the
set 
$
A = \{ f \in \mathcal{C}_0 : f(s_1) \in A_1, \ldots, f(s_d) \in A_d \}.
$
Then by \citet[Proposition~4.2]{dombry2013a} we have
\begin{align*}
P_{s_0}(A) &= \int_{\mathcal{C}} \indi \{ f/f(s) \in A \} f(s) \sigma(\der f) \\
& = \expect \left[ 
\{W(s_0)\}_+^\nu / m_{0+}
\indi \left\{ \frac{m_{0+} \{W(s_i)\}_+^\nu}{m_{i+} \{ W(s_0)\}_+^\nu} \in A_i; i = 1, \ldots, d\right\} 
\right] \\
&= \widehat{\Pr} \left\{ \frac{m_{0+} \{W(s_i)\}_+^\nu}{m_{i+} \{ W(s_0)\}_+^\nu} 
\in A_i; i = 1, \ldots, d \right\} \\
&= \Pr \left\{ \frac{m_{0+}}{m_{i+}} (T_i)_+^\nu \in A_i; i = 1, \ldots, d \right\},
\end{align*}
where $T = (T_1, \ldots, T_d)$, $T_i = W(s_i) / W(s_0)$ 
which, from Lemma~\ref{lem:proof_p0_ext_st}, is distributed as
$$
T \sim \Psi_d \left(  
\Sigma_{d;0}, 
\frac{\Sigma_d - \Sigma_{d;0} \Sigma_{0;d} }{\nu +1 },
\hat{\alpha}_d,
\tau_d,
- \tau,
\nu + 1
\right).
$$

%
% Proof of Prop 2 - Conditional intensity extremal skew-t
%
\subsection{Proof of Proposition~\ref{prop:cond_intensity_ext_st}}
\label{app:cond_int_ext_st}

The following Lemma is required in order to complete the proof.
\begin{lem}
\label{lem:intensity_function}
Under the assumptions of Proposition~\ref{prop:cond_intensity_ext_st}, the intensity function
of the extremal skew-$t$ is 
$$
\lambda_\bs ( \bv) =  \frac{
2^{(\nu-2)/2} \nu^{-d+1}   
\Gamma \left( \frac{d+\nu}{2} \right) \Psi \left( \tilde{\alpha}_\bs \sqrt{d+\nu}; -\tau_\bs, d+\nu \right) 
\prod_{i=1}^d \left( m_{i+} v_i^{1-\nu} \right)^{1/\nu}
}
{ 
\pi^{d/2} |\bar{\Omega}_\bs|^{1/2} 
 Q_{\bar{\Omega}_\bs}( \bv^\circ)^{(d+\nu)/2} 
\Phi(\tau_\bs \{ 1 + Q_{\bar{\Omega}^{-1}_\bs}(\alpha_\bs) \}^{-1/2} )
},
$$
where
$\bv^\circ = (\bv m_+(\bs))^{1/\nu} \in \real^d$, 
$\tilde{\alpha}_\bs = \alpha_\bs^\top \bv^\circ 
Q_{\bar{\Omega}_\bs}( \bv^\circ)^{-1/2} \in \real$ 
and $\alpha_\bs \in \real^d$.
\end{lem}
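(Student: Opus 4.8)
The plan is to compute the intensity function $\lambda_\bs(\bv)$ directly from its integral representation \eqref{eq:def_intensity_measure}, namely
$\Lambda_\bs(A) = \int_0^\infty \Pr\{\zeta W(\bs)^\nu / m_+(\bs) \in A\}\, \zeta^{-2}\,\der\zeta$,
and to extract the density $\lambda_\bs(\bv)$ by a change of variables. First I would introduce the substitution $\bz = \zeta \, W(\bs)^\nu / m_+(\bs)$, so that conditional on $W(\bs) = \bw$ the map $\zeta \mapsto \bz$ is handled coordinatewise; writing things out, the intensity density at $\bv$ becomes an expectation over the law of $W(\bs)$ of a Jacobian factor times the appropriate indicator, which after integrating out $\zeta$ collapses to a single finite expression. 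Concretely, since $W(\bs) \sim SN_d(\bar\Omega_\bs, \alpha_\bs, \tau_\bs)$ with the skew-normal density (the $\tilde g$ density appearing in Appendix~\ref{app:exact_ext_st} restricted to the $\bs$-block), I would write $\lambda_\bs(\bv)$ as an integral over $\real^d$ of the skew-normal density evaluated along the ray through $\bv^\circ$, with the scaling variable playing the role of the radial coordinate. The change of variable $\bv^\circ = (\bv\, m_+(\bs))^{1/\nu}$ accounts for the power transformation and produces the Jacobian factor $\nu^{-d}\prod_i (m_{i+} v_i^{1-\nu})^{1/\nu}$ (one factor of $\nu^{-1} (m_{i+} v_i^{1-\nu})^{1/\nu}$ per coordinate), and the leftover $\nu^{-d+1}$ versus $\nu^{-d}$ discrepancy will be absorbed when the radial integral is carried out.

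The core computation is the one-dimensional radial integral
$\int_0^\infty u^{d+\nu-1}\, \exp\{-\tfrac12 u^2\, Q_{\bar\Omega_\bs}(\bv^\circ)\}\, \Phi(\text{(linear in }u) + \tau_\bs)\, \der u$,
which is exactly of the type already evaluated inside the proof of Lemma~\ref{lem:proof_p0_ext_st} in Appendix~\ref{app:exact_ext_st}: substituting $u' = Q_{\bar\Omega_\bs}(\bv^\circ)^{1/2} u$ reduces it to $\int_0^\infty (u')^{d+\nu-1}\phi(u')\Phi(c u' + \tau_\bs)\,\der u'$ up to a power of $Q_{\bar\Omega_\bs}(\bv^\circ)$, and that integral evaluates to a Gamma function times a univariate non-central $t$ cdf, $\tfrac12^{(\nu+d-2)/2}\,\Gamma((d+\nu)/2)\,\pi^{-1/2}\cdot\Psi(\,\tilde\alpha_\bs\sqrt{d+\nu}\,; -\tau_\bs, d+\nu)$ after the standard manipulation identifying the integral of a power times a Gaussian times a Gaussian cdf with a $t$-cdf (Lemma~A.1-type identity / the formula used for $m_{j+}$ in Appendix~\ref{app:ext_st_expo}). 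Here the argument $\tilde\alpha_\bs = \alpha_\bs^\top \bv^\circ\, Q_{\bar\Omega_\bs}(\bv^\circ)^{-1/2}$ arises precisely as the coefficient $c$ normalized by the radial scale, and the sign flip $-\tau_\bs$ comes from the $\Psi(\cdot;\kappa,\nu)$ convention (the same sign flip appearing in Proposition~\ref{prop:p0_ext_st} and Lemma~\ref{lem:proof_p0_ext_st}).

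Finally I would reassemble the pieces: the prefactor $(2\pi)^{-d/2}|\bar\Omega_\bs|^{-1/2}$ of the multivariate normal part combines with the $\sqrt{2\pi}$ emerging from the radial integral and with $2^{(\nu+d-2)/2}$ to give the stated $2^{(\nu-2)/2}\pi^{-d/2}|\bar\Omega_\bs|^{-1/2}$; the normalizing constant of the skew-normal, $\Phi(\tau_\bs\{1+Q_{\bar\Omega_\bs^{-1}}(\alpha_\bs)\}^{-1/2})$, stays in the denominator; the power $Q_{\bar\Omega_\bs}(\bv^\circ)^{-(d+\nu)/2}$ appears from collecting the radial-scale factors; and the Jacobian of the $\bv \mapsto \bv^\circ$ transformation supplies $\nu^{-d}\prod_i(m_{i+}v_i^{1-\nu})^{1/\nu}$, which together with the $\nu$ from the Gamma-integral normalization yields the claimed $\nu^{-d+1}$. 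I expect the main obstacle to be purely bookkeeping: keeping all the powers of $2$, $\pi$, $\nu$ and $Q_{\bar\Omega_\bs}(\bv^\circ)$ consistent through the two nested changes of variables, and correctly matching the non-centrality and degrees-of-freedom arguments of $\Psi$ against the $\mathcal{ST}$ conventions of Definition~\ref{def:noncen_skew} — there is no conceptual difficulty beyond the $t$-cdf identity, which is already established earlier in the paper.
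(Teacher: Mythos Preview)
Your proposal is correct and follows essentially the same route as the paper's proof: write $\Lambda_\bs(A)$ as a double integral over $\zeta$ and the skew-normal density of $W(\bs)$, change variables $\bt \mapsto \bv$ (producing the Jacobian $\nu^{-d}\prod_i (m_{i+} v_i^{1-\nu})^{1/\nu}$), reduce the remaining $\zeta$-integral to the radial integral $\int_0^\infty u^{d+\nu-1}\phi(u)\Phi(\tilde\alpha_\bs u + \tau_\bs)\,\der u$ via the substitutions $t=\zeta^{-1/\nu}$ and $u = t\, Q_{\bar\Omega_\bs}(\bv^\circ)^{1/2}$, and evaluate it with the Gamma/$t$-cdf identity already quoted for $m_{j+}$. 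Two small bookkeeping corrections to watch when you write it out: the extra factor $\nu$ that lifts $\nu^{-d}$ to $\nu^{-d+1}$ comes from the Jacobian of $t=\zeta^{-1/\nu}$ (not from the Gamma integral itself), and applying the identity with exponent $d+\nu-1$ gives $2^{(d+\nu-3)/2}\pi^{-1/2}\Gamma((d+\nu)/2)\Psi(\cdot)$, which then combines with $(2\pi)^{-d/2}(2\pi)^{1/2}$ to yield the stated $2^{(\nu-2)/2}\pi^{-d/2}$.
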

\begin{proof}
By definition of the intensity measure \eqref{eq:def_intensity_measure}, 
for all $\bs \in \cS^d$ and Borel set $A \subset \real^d$,
$$
\Lambda_\bs(A) 
= \int_0^\infty \int_{\real^d} \indi \left\{ \zeta \bt^\nu / m_+(\bs) \in A \right\} g_\bs(\bt) \der \bt \, 
\zeta^{-2} \der \zeta, 
$$
where $g_\bs$ is the density of the random vector $W(\bs)$, i.e.~a centred extended skew 
normal random vector with correlation matrix $\bar{\Omega}_\bs$, slant $\alpha_\bs$ and
extension $\tau_\bs$. The change of variable $\bv = (m_+(\bs))^{-1} \zeta \bt^\nu$ leads to 
$\der \bt = \nu^{-d} \zeta^{-d/\nu} \prod_{i=1}^d m_{i+}^{1/\nu} v_i^{(1-\nu)/\nu} \der \bv$
and
\begin{align}
\label{eq:Lambda_fun}
\Lambda_\bs(A) = \nu^{-d} \int_0^\infty \int_A
\prod_{i=1}^d \left( m_{i+} v_i^{(1-\nu)}\right)^{1/\nu}
\frac{
\phi_d \left( \bv^\circ \zeta^{-1/\nu}; \bar{\Omega}_\bs \right)  
\Phi \left( \alpha_\bs^\top \bv^\circ \zeta^{-1/\nu} + \tau_\bs \right) 
}
{
\Phi\left(\tau_\bs \{ 1 + Q_{\bar{\Omega}^{-1}_\bs}(\alpha_\bs) \}^{-1/2} \right)
}
\zeta^{-\frac{d}{\nu}-2} \der \bv \der \zeta,
\end{align}
where $\bv^\circ = (\bv m_+(\bs))^{1/\nu}$.
Now, through the consecutive change of variables $t = \zeta^{-1/\nu}$ and 
$u = t Q_{\bar{\Omega}_\bs}( \bv^\circ)^{1/2}$ 
we  obtain
\begin{align}
\label{eq:intense_fun}
\int_0^\infty \phi_d & \left( \bv^\circ \zeta^{-1/\nu}; \bar{\Omega}_\bs \right)  
\Phi \left( \alpha_\bs^\top \bv^\circ \zeta^{-1/\nu} + \tau_\bs \right) \zeta^{-\frac{d}{\nu}-2} 
\der\zeta \\
\quad & = \nu \int_0^\infty  \phi_d \left( \bv^\circ t ; \bar{\Omega}_\bs \right)  
\Phi \left( \alpha_\bs^\top \bv^\circ t + \tau_\bs \right) t^{d+\nu-1} \der t \nonumber\\
\quad & = (2\pi)^{-d/2} |\bar{\Omega}_\bs|^{-1/2} \nu \int_0^\infty t^{d+\nu-1} 
\exp \left\{ -t^2 \frac{ Q_{\bar{\Omega}_\bs}( \bv^\circ) }{2} \right\} 
\Phi \left( \alpha_\bs^\top \bv^\circ t + \tau_\bs \right) \der t \nonumber\\
\quad & = (2\pi)^{-d/2} |\bar{\Omega}_\bs |^{-1/2} \nu 
Q_{\bar{\Omega}_\bs}( \bv^\circ)^{-(d+\nu)/2}
(2\pi)^{1/2} \int_0^\infty u^{d+\nu-1} \phi(u) 
\Phi \left(\tilde{\alpha}_\bs u + \tau_\bs \right) \der u , \nonumber
\end{align}
where $\tilde{\alpha}_\bs = 
\alpha_\bs^\top \bv^\circ Q_{\bar{\Omega}_\bs}( \bv^\circ)^{-1/2} 
\in \real$.

The remaining integral is linked to the moments of the extended skew-normal distribution.  
\citet[Appendix A.4]{beranger2017} derives the result
$$
\int_0^\infty y^\nu \phi(y) \Phi( \alpha y + \tau) \der y = 2^{(\nu-2)/2} \pi^{-1/2} 
\Gamma \left( \frac{\nu+1}{2} \right) \Psi \left( \alpha \sqrt{\nu+1}; -\tau, \nu+1 \right),
$$
and thus \eqref{eq:intense_fun} is equal to
\begin{equation}
\label{eq:intense_fun_integ}
2^{(\nu-2)/2} \pi^{-d/2} |\bar{\Omega}|^{-1/2} \nu 
\left(\bv^{\circ\top} \bar{\Omega}_\bs^{-1} \bv^\circ\right)^{-(d+1)/2} 
\Gamma \left( \frac{\nu + d}{2} \right) 
\Psi \left(\tilde{\alpha}_\bs  \sqrt{\nu+d}; -\tau_\bs, \nu+d \right).
\end{equation}
Substituting \eqref{eq:intense_fun_integ} into \eqref{eq:Lambda_fun} completes the proof.

\end{proof}

Assume that $(W(\bt), W(\bs)) \sim 
\mathcal{SN}_{m+d}(\bar{\Omega}_{(\bt,\bs)}, \alpha_{(\bt,\bs)}, \tau_{(\bt,\bs)})$,
then according to \citet[Proposition~1]{beranger2017} we have that 
$W(\bs) \sim \mathcal{SN}_d (\bar{\Omega}_\bs, \alpha^*_{\bs}, \tau^*_{\bs})$ with 
\begin{align*}
\begin{array}{ccc}
\alpha^*_\bs = \frac{\alpha_\bs + \bar{\Omega}_\bs^{-1} \bar{\Omega}_{\bs\bt} \alpha_\bt}
{\sqrt{1 + Q_{\tilde{\Omega}^{-1}}(\alpha_\bt)}},
& \tau^*_{\bs} = \frac{\tau_{(\bt,\bs)}}{\sqrt{1 + Q_{\tilde{\Omega}^{-1}}(\alpha_\bt)}},
& \tilde{\Omega} = \bar{\Omega}_\bt - \bar{\Omega}_{\bt \bs}  \bar{\Omega}_\bs^{-1}  
\bar{\Omega}_{\bs \bt}.
\end{array}
\end{align*}
Additionally let $\bu^\circ = (\bu m_+(\bt))^{1/\nu}$, $\bv^\circ = (\bv m_+(\bs))^{1/\nu}$, 
$m_+(\bt)= (m_+(t_1), \ldots, m_+(t_m))$, $m_+(\bs)= (m_+(s_1), \ldots, m_+(s_d))$, 
$\bu \in \real^m$, $\bv \in \real^d$.
Noting that 
$\Phi (\tau_{(\bt,\bs)} (1+Q_{\bar{\Omega}^{-1}_{(\bt,\bs)}}(\alpha_{(\bt,\bs)}))^{-1/2})$ 
is equal to 
$\Phi (\tau^*_{\bs} (1+Q_{\bar{\Omega}^{-1}_{\bs}}(\alpha^*_{\bs}))^{-1/2})$
and applying Lemma \ref{lem:intensity_function} to \eqref{eq:def_cond_int} leads to
\begin{align*}
\lambda_{\bt | \bs, \bv}(\bu) &= 
\pi^{-m/2} \nu^{-m} \frac{|\bar{\Omega}_{(\bt,\bs)}|^{-1/2}}{|\bar{\Omega}_{\bs}|^{-1/2}} 
\left\{ 
\frac{Q_{\bar{\Omega}_{(\bt,\bs)}} (\bu^\circ, \bv^\circ) }
{Q_{\bar{\Omega}_\bs} (\bv^\circ)}
\right\}^{-\frac{\nu+d+m}{2}}
Q_{\bar{\Omega}_\bs} (\bv^\circ)^{-\frac{m}{2}}
\frac{\Gamma(\frac{\nu+d+m}{2})}{\Gamma(\frac{\nu+d}{2})} \\
& \quad \times
\frac{
\Psi\left(\tilde{\alpha}_{(\bt, \bs)} \sqrt{\nu + d + m}; -\tau_{(\bt, \bs)}, \nu + d + m \right)
}
{
\Psi \left(\tilde{\alpha}_{\bs} \sqrt{\nu + d}; -\tau^*_{\bs}, \nu + d \right)
}
\prod_{i=1}^m \left(m_+(t_i) u_i^{1-\nu} \right)^{1/\nu},
\end{align*}
where
$
\tilde{\alpha}_{(\bt, \bs)} = \alpha^\top_{(\bt, \bs)} (\bu^\circ, \bv^\circ) 
Q_{\bar{\Omega}_{(\bt,\bs)}} (\bu^\circ, \bv^\circ)^{-1/2}
$
and
$
\tilde{\alpha}_{\bs} 
= \alpha^{*\top}_\bs \bv^\circ Q_{\bar{\Omega}_\bs} (\bv^\circ)^{-1/2}.
$

Following  \citet{dombry2013b} and \citet{ribatet2013} we can show that
$$
\frac{\lvert\bar{\Omega}_{(\bt,\bs)}\rvert}{\lvert\bar{\Omega}_{\bs}\rvert} 
= \left\{ \frac{\nu + d}{ Q_{\bar{\Omega}} (\bv^\circ) } \right\}^m
\lvert \Omega_{\bt | \bs, \bv} \rvert, \,
%$
%%
%$
\frac{Q_{\bar{\Omega}_{(\bt,\bs)}} (\bu^\circ, \bv^\circ) }
{Q_{\bar{\Omega}_\bs} (\bv^\circ)} 
= 1 + 
\frac{Q_{\Omega_{\bt | \bs, \bv}} (\bu^\circ - \mu_{\bt | \bs, \bv}) }
{\nu + d}, \,
%$
%%
%$
\Omega_{\bt | \bs, \bv} 
= \frac{ Q_{\bar{\Omega}_\bs} (\bv^\circ) }{\nu + d} \tilde{\Omega},
$$
and
$
\mu_{\bt | \bs, \bv} = \bar{\Omega}_{\bt\bs} \bar{\Omega}_\bs^{-1} \bv^\circ.
$
Thus we have
\begin{align*}
\lambda_{\bt | \bs, \bv}(\bu) &= 
\pi^{-m/2} (\nu + d)^{-m/2}  \lvert\Omega_{\bt | \bs, \bv}\rvert^{-1/2} 
\left\{ 1 + 
\frac{Q_{\Omega_{\bt | \bs, \bv}} (\bu^\circ - \mu_{\bt | \bs, \bv}) }
{\nu + d} 
\right\}^{-\frac{\nu + d + m}{2}} \\
& \quad \times \frac{\Gamma(\frac{\nu+d+m}{2})}{\Gamma(\frac{\nu+d}{2})} 
\frac{\Psi \left(\tilde{\alpha}_{(\bt, \bs)} \sqrt{\nu + d + m}; -\tau_{(\bt, \bs)}, \nu + d + m \right)}
{\Psi \left(\tilde{\alpha}_{\bs} \sqrt{\nu + d}; -\tau^*_{\bs}, \nu + d \right)}
\nu^{-m} \prod_{i=1}^m \left(m_+(t_i) u_i^{1-\nu} \right)^{1/\nu},
\end{align*}
and we recognise the $m$-dimensional Student-$t$ density with mean $\mu_{\bt | \bs, \bv}$, 
dispersion matrix $\Omega_{\bt | \bs, \bv}$ and degree of freedom $\nu + d$.

Finally, by considering 
$
\alpha_{\bt | \bs, \bv} = \tilde{\omega}^{-1} \alpha_\bt 
$,
$
\tilde{\omega} = \diag(\tilde{\Omega})^{1/2},
$
$
\tau_{\bt | \bs, \bv}  = 
\left( \alpha_\bs + \bar{\Omega}_\bs^{-1} \bar{\Omega}_{\bs \bt} \alpha_\bt \right)^\top 
\bv^\circ (d+\nu)^{1/2} Q_{\bar{\Omega}} (\bv^\circ)^{-1/2}
$
and
$
\kappa_{\bt | \bs, \bv} = -\tau_{(\bt, \bs)}
$
then it is easy to show that 
\begin{align*}
\Psi \left(\tilde{\alpha}_{\bs} \sqrt{\nu + d}; -\tau^*_{\bs}, \nu + d \right)
&= \Psi \left( \frac{\tau_{\bt | \bs, \bv}}{\sqrt{1+Q_{\bar{\Omega}^{-1}_{\bt | \bs, \bv}}(\alpha_{\bt | \bs, \bv}) }} ; 
-\frac{\tau_{(\bt,\bs)}}{\sqrt{1+Q_{\bar{\Omega}^{-1}_{\bt | \bs, \bv}}(\alpha_{\bt | \bs, \bv}) }}, 
\nu_{\bt | \bs, \bv} \right) \\
\Psi \left(\tilde{\alpha}_{(\bt, \bs)} \sqrt{\nu + d + m}; -\tau_{(\bt, \bs)}, \right. & \left. \nu + d + m \right) \\
&= \Psi \left( \left(\alpha_{\bt | \bs, \bv}^\top z + \tau_{\bt | \bs, \bv} \right) 
\sqrt{\frac{\nu_{\bt | \bs, \bv} + m}{\nu_{\bt | \bs, \bv} + Q_{\bar{\Omega}_{\bt | \bs, \bv}}(z)} }; 
-\tau_{(\bt,\bs)}, 
\nu_{\bt | \bs, \bv}+m \right) ,
\end{align*}
where $z = \bu^\circ - \mu_{\bt | \bs, \bv}$, completes the proof.
Note that $\bar{\Omega}_{\bt | \bs, \bv}$ reduces to $\tilde{\omega}^{-1} \tilde{\Omega} \tilde{\omega}^{-1}$.
%
%
% Partial derivatives of V function Lemma 1
%
\subsection{Partial derivatives of the \texorpdfstring{$V$}{Lg} function of the extremal skew-\texorpdfstring{$t$}{Lg}
(Lemma~\ref{prop:vd_ext_st}) }
\label{app:prop_vd_ext_st}
Consider the conditional intensity function of the extremal skew-$t$ model given in 
Proposition~\ref{prop:cond_intensity_ext_st} with $\bs = (s_1, \ldots, s_m) \equiv \bs_{1:m}$, 
$\bt = (s_{m+1}, \ldots, s_d) \equiv \bs_{m+1:d}$, $\bv = \bz_{1:m}$ and $\bu = \bz_{m+1:d}$. 
In the following the matrix notation $\Sigma_{a:b}= \Sigma_{\bs_{a:b}}, \Sigma_{a:b;c:d}= \Sigma_{\bs_{a:b} \bs_{c:d}}$ will be used.
Integration w.r.t. $\bz_{m+1:d}$ gives
\begin{align}
\label{eq:Lem1_cond}
\Psi_{d-m} \left(\bz_{m+1:d}^\circ; \mu_c, \Omega_c, \alpha_c, \tau_c, \kappa_c, \nu_c \right)
\end{align}
where the index $c$ represents $\bs_{m+1:d}| \bs_{1:m}, \bz_{1:m}$ such that the parameters
are
\begin{align*}
&\mu_c 
= \bar{\Omega}_{m+1:d ; 1:m} \bar{\Omega}_{1:m}^{-1} 
\bz_{1:m}^\circ, \quad
\Omega_c 
= \frac{Q_{\bar{\Omega}_{1:m}} (\bz_{1:m}^\circ)}{\nu_c}  
\tilde{\Omega}_c, \\
&\tilde{\Omega}_c 
= 
\bar{\Omega}_{m+1:d} - \bar{\Omega}_{m+1:d; 1:m} 
\bar{\Omega}_{1:m}^{-1} \bar{\Omega}_{1:m ; m+1:d}, \\
&\tau_c 
 = (\alpha_{1:m} + \bar{\Omega}_{1:m}^{-1} \bar{\Omega}_{1:m ; m+1:d} 
\alpha_{m+1:d} )^\top 
\bz_{1:m}^\circ (\nu+m)^{1/2} 
Q_{\bar{\Omega}_{1:m}} (\bz_{1:m}^\circ)^{-1/2}, \\
&\alpha_c 
= \omega_c \alpha_{m+1:d}, \quad 
\tilde{\omega}_c = \diag(\tilde{\Omega}_c)^{1/2}, \quad
\kappa_c 
= -\tau, \quad
\nu_c 
= \nu + m, 
\end{align*}
with
$\bz_{1:m}^\circ = (\bz_{1:m} m_+(\bs_{1:m}))^{1/\nu}$ and 
$\bz_{m+1:d}^\circ = (\bz_{m+1:d} m_+(\bs_{m+1:d}))^{1/\nu}$.
According to Lemma~\ref{lem:intensity_function}, the $m$-dimensional marginal density is
\begin{align}
\label{eq:Lem1_marginal}
\frac{
2^{(\nu-2)/2} \nu^{-m+1}   
\Gamma \left( \frac{m+\nu}{2} \right) 
\Psi \left( \tilde{\alpha}_{1:m} \sqrt{m+\nu}; - \tau_{1:m}^*, m+\nu \right) 
\prod_{i=1}^m \left( m_{i+} z_i^{1-\nu} \right)^{1/\nu}
}
{ 
\pi^{m/2} |\bar{\Omega}_{1:m}|^{1/2} 
Q_{\bar{\Omega}_{1:m}} (\bz_{1:m}^\circ )^{(m+\nu)/2} 
\Phi(\tau_{1:m}^* \{ 1 + Q_{\bar{\Omega}^{-1}_{1:m}}(\alpha^*_{1:m}) \}^{-1/2} )
},
\end{align}
where
$\tilde{\alpha}_{1:m} = \alpha_{1:m}^{*\top} \bz_{1:m}^\circ 
Q_{\bar{\Omega}_{1:m}} (\bz_{1:m}^\circ )^{-1/2} \in \real$,
and $m$-dimensional marginal parameters 
\begin{align*}
\begin{array}{cc}
\alpha_{1:m}^* 
= \frac{\alpha_{1:m} + \bar{\Omega}_{1:m}^{-1} 
\bar{\Omega}_{1:m; m+1:d} \alpha_{m+1:d} }
{\sqrt{1+ Q_{\bar{\Omega}^{-1}_c} (\alpha_{m+1:d})}},
& \tau_{1:m}^*  = 
\frac{\tau}
{\sqrt{1+ Q_{\bar{\Omega}^{-1}_c}(\alpha_{m+1:d})}}.
\end{array}
\end{align*} 
Combining \eqref{eq:Lem1_cond} and \eqref{eq:Lem1_marginal} completes the proof.

Setting $\tau_\bs = 0$ corresponds to an extremal skew-$t$ model constructed from a skew-normal random field rather than an extended skew-normal field. Then 
\begin{equation*}
\label{eq:mi_plus}
m_{j+} = 2^{\nu/2} \pi^{-1/2} \Gamma\left( \frac{\nu + 1}{2}\right) 
\Psi(\alpha^*_j \sqrt{\nu +1}; \nu +1),
\end{equation*}
with
\begin{equation*} 
\alpha^*_j =  
\frac{\alpha_j + \bar{\Omega}^{-1}_{jj} \bar{\Omega}_{j I_j} \alpha_{I_j} }  
{\sqrt{1 + \alpha_{I_j}^\top \left( \bar{\Omega}_{I_j I_j} - \bar{\Omega}_{I_j j} \bar{\Omega}^{-1}_{jj} \bar{\Omega}_{j I_j} \right) \alpha_{I_j}} },
\end{equation*}
and the associated partial derivatives of the $V$ function are equal to
$$
\Psi_{d-m} \left(\bz_{m+1:d}^\circ; \mu_c, \Omega_c, \alpha_c, \tau_c, 0, \nu_c \right)
\frac{
2^{\nu/2}   
\Gamma \left( \frac{m+\nu}{2} \right) 
\Psi \left( \tilde{\alpha}_{1:m} \sqrt{m+\nu}; m+\nu \right) 
\prod_{i=1}^m \left( m_{i+} z_i^{1-\nu} \right)^{1/\nu}
}
{ 
\pi^{m/2}  \nu^{m-1} |\bar{\Omega}_{1:m}|^{1/2} 
Q_{\bar{\Omega}_{1:m}} (\bz_{1:m}^\circ )^{(m+\nu)/2} 
},
$$
with parameters defined as in Lemma~\ref{prop:vd_ext_st}.

Setting $\alpha_\bs = \alpha_{1:d} = 0 $ and $\tau_\bs = 0$ leads to the extremal-$t$ model for
which 
$$
m_{i+} = 2^{(\nu-2)/2} \pi^{-1/2} \Gamma\left( \frac{\nu + 1}{2}\right) \equiv m_+,
$$
and the partial derivatives of the $V$ function are
$$
\Psi_{d-m} \left(\bz_{m+1:d}^{1/\nu}; \mu'_c, \Omega'_c, \nu_c \right)
\frac{
2^{(\nu-2)/2} \nu^{-m+1}  \Gamma \left( \frac{m+\nu}{2} \right) 
\prod_{i=1}^m z_i^{(1-\nu)/\nu}
}
{ 
\pi^{m/2} |\bar{\Omega}_{1:m}|^{1/2} 
Q_{\bar{\Omega}_{1:m}} (\bz_{1:m}^{1/ \nu} )^{(m+\nu)/2} 
m_+
},
$$
where
$\mu'_c = \bar{\Omega}_{m+1:d ; 1:m} \bar{\Omega}_{1:m}^{-1} \bz_{1:m}^{1/ \nu}$
and
$\Omega'_c 
= \frac{Q_{\bar{\Omega}_{1:m}} (\bz_{1:m}^{1/\nu})}{\nu_c}  
\bar{\Omega}_c.$
\clearpage
\section{Simulation Tables}
\label{app:sim}

\afterpage{
\clearpage
\begin{landscape}
\begin{table}
\centering
\footnotesize
$
\begin{array}{ccc c cccccccc c cccccccccccccc}
\multicolumn{4}{c}{} & \multicolumn{8}{c}{\textrm{extremal-$t$}} & & \multicolumn{14}{c}{\textrm{extremal skew-$t$}} \\
\cline{5-12} \cline{14-27} 
\multicolumn{4}{c}{} & \multicolumn{2}{c}{r=1.5} & & \multicolumn{2}{c}{r=3.0} & & \multicolumn{2}{c}{r=4.5} & & \multicolumn{4}{c}{r=1.5} & & \multicolumn{4}{c}{r=3.0} & & \multicolumn{4}{c}{r=4.5} \\
\cline{5-6} \cline{8-9} \cline{11-12} \cline{14-17} \cline{19-22} \cline{24-27}
& & \mbox{Type} &  & \hat{\eta}_j & \hat{r}_j & & \hat{\eta}_j & \hat{r}_j & & \hat{\eta}_j & \hat{r}_j & & \hat{\eta}_j & \hat{r}_j & \hat{\beta}_{1j} & \hat{\beta}_{2j} & & \hat{\eta}_j & \hat{r}_j & \hat{\beta}_{1j} & \hat{\beta}_{2j} & & \hat{\eta}_j & \hat{r}_j & \hat{\beta}_{1j} & \hat{\beta}_{2j} \\
\hline
d=20 & \eta=1.00 & \mbox{I} & & 0.005 & 0.014 &  & 0.002 & 0.006 &  & 0.001 & 0.007 & & 0.002 & 0.038 & 0.211 & 0.181 &  & 0.013 & 0.031 & 0.111 & 0.094 &  & 0.016 & 0.118 & 0.017 & 0.015 \\ 
& & \mbox{II} & & 0.008 & 0.021 &  & 0.007 & 0.050 &  & 0.004 & 0.072 & & 0.017 & 0.116 & 0.068 & 0.048 &  & 0.018 & 0.173 & 0.096 & 0.128 &  & 0.017 & 0.117 & 0.149 & 0.213 \\
& \eta=1.50 & \mbox{I} & & 0.001 & 0.004 &  & 0.003 & 0.015 &  & 0.002 & 0.003 &  & 0.024 & 0.030 & 0.434 & 0.308 &  & 0.003 & 0.023 & 0.026 & 0.001 &  & 0.000 & 0.170 & 0.069 & 0.053 \\ 
& & \mbox{II} & & 0.005 & 0.015 &  & 0.004 & 0.021 &  & 0.002 & 0.013 &  & 0.010 & 0.051 & 0.072 & 0.059 &  & 0.011 & 0.045 & 0.055 & 0.046 &  & 0.004 & 0.123 & 0.068 & 0.048 \\
& \eta=1.95 & \mbox{I} & & 0.003 & 0.002 &  & 0.000 & 0.003 &  & 0.000 & 0.009 &  & 0.002 & 0.003 & 0.282 & 0.217 &  & 0.000 & 0.047 & 0.099 & 0.078 &  & 0.001 & 0.053 & 0.041 & 0.034 \\
& & \mbox{II} & & 0.000 & 0.006 &  & 0.001 & 0.004 &  & 0.001 & 0.002 &  & 0.001 & 0.022 & 0.013 & 0.011 &  & 0.003 & 0.019 & 0.104 & 0.143 &  & 0.002 & 0.057 & 0.061 & 0.077 \\
\hline
d=50 & \eta=1.00 & \mbox{I} & & 0.001 & 0.003 &  & 0.003 & 0.018 &  & 0.001 & 0.005 &  & 0.004 & 0.013 & 0.013 & 0.005 &  & 0.003 & 0.004 & 0.017 & 0.002 &  & 0.001 & 0.043 & 0.018 & 0.010 \\ 
& & \mbox{II} & & 0.002 & 0.009 &  & 0.003 & 0.027 &  & 0.003 & 0.024 &  & 0.003 & 0.033 & 0.012 & 0.012 &  & 0.005 & 0.010 & 0.006 & 0.009 &  & 0.004 & 0.007 & 0.014 & 0.013 \\
& \eta=1.50 & \mbox{I} & & 0.001 & 0.000 &  & 0.000 & 0.001 &  & 0.001 & 0.001 &  & 0.004 & 0.010 & 0.011 & 0.015 &  & 0.001 & 0.045 & 0.002 & 0.000 &  & 0.002 & 0.053 & 0.008 & 0.003 \\
& & \mbox{II} & & 0.001 & 0.002 &  & 0.000 & 0.004 &  & 0.000 & 0.009 &  & 0.006 & 0.019 & 0.019 & 0.021 &  & 0.001 & 0.015 & 0.003 & 0.015 &  & 0.002 & 0.012 & 0.014 & 0.031 \\
& \eta=1.95 & \mbox{I} & & 0.000 & 0.000 &  & 0.000 & 0.004 &  & 0.000 & 0.003 &  & 0.002 & 0.009 & 0.016 & 0.016 &  & 0.000 & 0.016 & 0.002 & 0.013 &  & 0.000 & 0.009 & 0.004 & 0.012 \\
& & \mbox{II} & & 0.000 & 0.001 &  & 0.000 & 0.000 &  & 0.000 & 0.005 &  & 0.002 & 0.008 & 0.073 & 0.036 &  & 0.001 & 0.017 & 0.015 & 0.028 &  & 0.000 & 0.015 & 0.009 & 0.036 \\
\hline
d=100 & \eta=1.00 & \mbox{I} & & 0.003 & 0.016 &  & 0.002 & 0.037 &  & 0.001 & 0.053 &  & 0.005 & 0.039 & 0.008 & 0.012 &  & 0.002 & 0.009 & 0.009 & 0.003 &  & 0.004 & 0.068 & 0.006 & 0.007 \\
& & \mbox{II} & & 0.007 & 0.036 &  & 0.003 & 0.080 &  & 0.004 & 0.091 &  & 0.006 & 0.036 & 0.005 & 0.001 &  & 0.000 & 0.052 & 0.004 & 0.003 &  & 0.002 & 0.069 & 0.008 & 0.014 \\  
& \eta=1.50 & \mbox{I} & & 0.000 & 0.004 &  & 0.001 & 0.007 &  & 0.000 & 0.009 &  & 0.003 & 0.013 & 0.001 & 0.001 &  & 0.000 & 0.010 & 0.004 & 0.002 &  & 0.001 & 0.090 & 0.003 & 0.052 \\
& & \mbox{II} & & 0.001 & 0.011 &  & 0.000 & 0.014 &  & 0.001 & 0.001 &  & 0.003 & 0.015 & 0.006 & 0.007 &  & 0.001 & 0.004 & 0.009 & 0.017 &  & 0.001 & 0.026 & 0.030 & 0.062 \\
& \eta=1.95 & \mbox{I} & & 0.000 & 0.004 &  & 0.000 & 0.001 &  & 0.000 & 0.000 &  & 0.000 & 0.003 & 0.037 & 0.017 &  & 0.000 & 0.003 & 0.007 & 0.004 &  & 0.001 & 0.029 & 0.079 & 0.123 \\
& & \mbox{II} & & 0.000 & 0.006 &  & 0.000 & 0.004 &  & 0.000 & 0.005 &  & 0.000 & 0.007 & 0.002 & 0.001 &  & 0.001 & 0.002 & 0.005 & 0.013 &  & 0.001 & 0.009 & 0.153 & 0.223 \\
\hline

\end{array}
$
\caption{\small Absolute biases $|\bar{\hat{\theta}}_j - \theta_j|$ for $\hat{\theta}_j = (\hat{\eta}_j, \hat{r}_j)$ and $\hat{\theta}_j = (\hat{\eta}_j, \hat{r}_j, \hat{\beta}_{1j}, \hat{\beta}_{2j})$ the parameter vectors of the extremal-$t$ and extremal skew-$t$ models, using the full likelihood Type I and Type II approximations given in Table~\ref{tab:approx} when $d=20, 50$ and $100$ sites are considered. Calculations are based on 500 replicate maximisations.}
\label{tab:Ext_full_BIAS}
\end{table}
\end{landscape}
}

\afterpage{
\clearpage
\begin{landscape}
\begin{table}
\centering
\footnotesize
$
\begin{array}{ccc c cccccccc c cccccccccccccc}
\multicolumn{4}{c}{} & \multicolumn{8}{c}{\textrm{extremal-$t$}} & & \multicolumn{14}{c}{\textrm{extremal skew-$t$}} \\
\cline{5-12} \cline{14-27} 
\multicolumn{4}{c}{} & \multicolumn{2}{c}{r=1.5} & & \multicolumn{2}{c}{r=3.0} & & \multicolumn{2}{c}{r=4.5} & & \multicolumn{4}{c}{r=1.5} & & \multicolumn{4}{c}{r=3.0} & & \multicolumn{4}{c}{r=4.5} \\
\cline{5-6} \cline{8-9} \cline{11-12} \cline{14-17} \cline{19-22} \cline{24-27}
& & j &  & \hat{\eta}_j & \hat{r}_j & & \hat{\eta}_j & \hat{r}_j & & \hat{\eta}_j & \hat{r}_j & & \hat{\eta}_j & \hat{r}_j & \hat{\beta}_{1j} & \hat{\beta}_{2j} & & \hat{\eta}_j & \hat{r}_j & \hat{\beta}_{1j} & \hat{\beta}_{2j} & & \hat{\eta}_j & \hat{r}_j & \hat{\beta}_{1j} & \hat{\beta}_{2j} \\
\hline
 & \eta=1.00 & j=2 & & 0.096 & 0.190 &  & 0.099 & 0.439 &  & 0.087 & 0.858 & & 0.681 & 5.093 & 4.502 & 4.685 &  & 0.707 & 2.271 & 4.392 & 4.407 &  & 0.624 & 4.569 & 4.170 & 4.386  \\
& & j=3 & & 0.102 & 0.196 &  & 0.083 & 0.503 &  & 0.078 & 0.785 & & 0.419 & 3.042 & 1.500 & 1.908 &  & 0.315 & 3.124 & 1.536 & 1.847 &  & 0.250 & 2.311 & 1.484 & 2.054 \\
& & j=4 & & 0.127 & 0.207 &  & 0.089 & 0.462 &  & 0.077 & 0.769 & & 0.196 & 0.347 & 1.255 & 1.466 &  & 0.130 & 0.915 & 0.917 & 1.464 &  & 0.127 & 3.704 & 1.098 & 1.581 \\
& & j=5 & & 0.125 & 0.222 &  & 0.096 & 0.463 &  & 0.087 & 0.818 & & 0.298 & 2.810 & 1.629 & 1.985 &  & 0.163 & 2.126 & 1.597 & 2.088 &  & 0.145 & 1.343 & 1.548 & 2.012 \\
& & j=10 & & 0.077 & 0.152 &  & 0.063 & 0.312 &  & 0.057 & 0.514 & & 0.250 & 0.313 & 1.456 & 1.615 &  & 0.154 & 0.591 & 1.646 & 1.087 &  & 0.156 & 0.999 & 1.011 & 1.063 \\
& & j=d & & 0.061 & 0.124 &  & 0.048 & 0.246 &  & 0.044 & 0.378 & & 0.106 & 0.223 & 0.380 & 0.394 &  & 0.117 & 0.478 & 0.633 & 0.893 &  & 0.101 & 0.670 & 0.831 & 0.883 \\
\hline
& \eta=1.50 & j=2 & & 0.106 & 0.146 &  & 0.077 & 0.295 &  & 0.076 & 0.515 & & 0.458 & 1.971 & 4.557 & 4.851 &  & 0.379 & 3.549 & 4.248 & 5.037 &  & 0.352 & 5.570 & 4.271 & 4.820 \\
& & j=3 & & 0.110 & 0.136 &  & 0.071 & 0.286 &  & 0.058 & 0.469  & & 0.223 & 1.896 & 1.491 & 1.903 &  & 0.155 & 2.382 & 1.547 & 2.026 &  & 0.128 & 2.208 & 1.371 & 1.938 \\
& & j=4 & & 0.115 & 0.127 &  & 0.079 & 0.277 &  & 0.060 & 0.432 & & 0.134 & 0.170 & 0.902 & 1.360 &  & 0.118 & 0.640 & 1.010 & 1.547 &  & 0.112 & 1.374 & 1.100 & 1.710 \\
& & j=5 & & 0.108 & 0.132 &  & 0.076 & 0.279 &  & 0.064 & 0.465 & & 0.160 & 0.201 & 1.494 & 1.741 &  & 0.113 & 2.431 & 1.544 & 1.893 &  & 0.093 & 0.666 & 1.505 & 1.921 \\
& & j=10 & & 0.071 & 0.097 &  & 0.041 & 0.176 &  & 0.040 & 0.290 & & 0.136 & 0.149 & 1.893 & 1.590 &  & 0.081 & 0.291 & 0.689 & 0.931 &  & 0.069 & 0.556 & 0.977 & 1.017 \\
& & j=d & & 0.045 & 0.074 &  & 0.032 & 0.156 &  & 0.027 & 0.226 & & 0.096 & 0.142 & 0.363 & 0.359 &  & 0.050 & 0.236 & 0.324 & 0.323 &  & 0.039 & 0.345 & 0.413 & 0.401 \\ 
\hline
& \eta=1.95 & j=2 & & 0.042 & 0.111 &  & 0.040 & 0.176 &  & 0.034 & 0.307 & & 0.225 & 2.256 & 4.603 & 5.215 &  & 0.107 & 2.644 & 4.409 & 5.058 &  & 0.087 & 3.081 & 4.372 & 4.525 \\
& & j=3 & & 0.049 & 0.088 &  & 0.034 & 0.185 &  & 0.023 & 0.285 & & 0.060 & 2.544 & 1.975 & 2.346 &  & 0.076 & 1.526 & 1.732 & 2.484 &  & 0.073 & 0.927 & 1.658 & 2.510 \\
& & j=4 & & 0.050 & 0.091 &  & 0.018 & 0.135 &  & 0.013 & 0.216 & & 0.044 & 0.093 & 1.298 & 1.330 &  & 0.054 & 0.289 & 1.125 & 1.563 &  & 0.055 & 0.646 & 1.831 & 2.133 \\
& & j=5 & & 0.060 & 0.090 &  & 0.017 & 0.134 &  & 0.011 & 0.207 & & 0.039 & 0.112 & 1.445 & 1.718 &  & 0.024 & 0.176 & 1.263 & 1.613 &  & 0.021 & 0.387 & 1.008 & 1.194 \\
& & j=10 & & 0.037 & 0.060 &  & 0.006 & 0.093 &  & 0.005 & 0.152 & & 0.048 & 0.093 & 1.854 & 1.642 &  & 0.019 & 0.150 & 1.130 & 1.157 &  & 0.009 & 0.291 & 0.792 & 0.825 \\
& & j=d & & 0.018 & 0.049 &  & 0.009 & 0.087 &  & 0.005 & 0.116 & & 0.028 & 0.090 & 0.529 & 0.476 &  & 0.019 & 0.129 & 0.532 & 0.609 &  & 0.008 & 0.210 & 0.367 & 0.436 \\  
\hline
\end{array}
$
\caption{\small RMSEs of $\hat{\theta}_j = (\eta_j, r_j)$ and $\hat{\theta}_j = (\eta_j, r_j, \beta_{1j}, \beta_{2j})$, the parameter vectors of the extremal-$t$ and extremal skew-$t$ models using the $j$-wise composite likelihood when $\nu =1$ and $d=20$. The case $j=d$ corresponds to full likelihood estimation using approximation Type II from Table~\ref{tab:approx}. Calculations are based on 500 replicate maximisations.}
\label{tab:RMSE_ET_SKT}
\end{table}
\end{landscape}
}

\afterpage{
\clearpage
\begin{landscape}
\begin{table}
\centering
\footnotesize
$
\begin{array}{ccc c cccccccc c cccccccccccccc}
\multicolumn{4}{c}{} & \multicolumn{8}{c}{\textrm{extremal-$t$}} & & \multicolumn{14}{c}{\textrm{extremal skew-$t$}} \\
\cline{5-12} \cline{14-27} 
\multicolumn{4}{c}{} & \multicolumn{2}{c}{r=1.5} & & \multicolumn{2}{c}{r=3.0} & & \multicolumn{2}{c}{r=4.5} & & \multicolumn{4}{c}{r=1.5} & & \multicolumn{4}{c}{r=3.0} & & \multicolumn{4}{c}{r=4.5} \\
\cline{5-6} \cline{8-9} \cline{11-12} \cline{14-17} \cline{19-22} \cline{24-27}
& & j &  & \hat{\eta}_j & \hat{r}_j & & \hat{\eta}_j & \hat{r}_j & & \hat{\eta}_j & \hat{r}_j & & \hat{\eta}_j & \hat{r}_j & \hat{\beta}_{1j} & \hat{\beta}_{2j} & & \hat{\eta}_j & \hat{r}_j & \hat{\beta}_{1j} & \hat{\beta}_{2j} & & \hat{\eta}_j & \hat{r}_j & \hat{\beta}_{1j} & \hat{\beta}_{2j} \\
\hline
 & \eta=1.00 & j=2 & & 0.003 & 0.008 &  & 0.015 & 0.004 &  & 0.006 & 0.099 & & 0.494 & 1.819 & 4.181 & 4.498 &  & 0.612 & 0.699 & 4.094 & 4.316 &  & 0.526 & 1.092 & 3.628 & 4.222  \\
& & j=3 & & 0.008 & 0.019 &  & 0.002 & 0.053 &  & 0.001 & 0.068 & & 0.240 & 0.286 & 0.334 & 1.205 &  & 0.189 & 0.200 & 0.094 & 1.158 &  & 0.149 & 0.326 & 0.243 & 1.309 \\
& & j=4 & & 0.010 & 0.000 &  & 0.009 & 0.036 &  & 0.009 & 0.114 & & 0.026 & 0.009 & 0.496 & 0.880 &  & 0.021 & 0.269 & 0.233 & 0.889 &  & 0.004 & 1.177 & 0.227 & 1.034 \\
& & j=5 & & 0.015 & 0.048 &  & 0.003 & 0.077 &  & 0.001 & 0.134 & & 0.118 & 0.071 & 0.062 & 1.295 &  & 0.073 & 0.139 & 0.152 & 1.492 &  & 0.048 & 0.028 & 0.045 & 1.393 \\
& & j=10 & & 0.001 & 0.011 &  & 0.000 & 0.013 &  & 0.001 & 0.061 & & 0.087 & 0.157 & 0.350 & 0.600 &  & 0.082 & 0.398 & 0.026 & 0.404 &  & 0.086 & 0.746 & 0.074 & 0.397 \\
& & j=d & & 0.008 & 0.021 &  & 0.007 & 0.050 &  & 0.004 & 0.072 & & 0.017 & 0.116 & 0.068 & 0.048 &  & 0.018 & 0.173 & 0.096 & 0.128 &  & 0.017 & 0.117 & 0.149 & 0.213 \\
\hline
& \eta=1.50 & j=2 & & 0.005 & 0.005 &  & 0.006 & 0.026 &  & 0.001 & 0.013 & & 0.375 & 0.910 & 4.304 & 4.659 &  & 0.274 & 1.736 & 3.795 & 4.733 &  & 0.245 & 2.302 & 3.857 & 4.508 \\
& & j=3 & & 0.001 & 0.014 &  & 0.002 & 0.022 &  & 0.001 & 0.015  & & 0.126 & 0.174 & 0.302 & 1.171 &  & 0.056 & 0.364 & 0.498 & 1.272 &  & 0.018 & 0.816 & 0.521 & 1.161 \\
& & j=4 & & 0.004 & 0.004 &  & 0.006 & 0.026 &  & 0.006 & 0.023 & & 0.000 & 0.036 & 0.169 & 0.776 &  & 0.023 & 0.306 & 0.040 & 1.001 &  & 0.040 & 0.970 & 0.279 & 1.044 \\
& & j=5 & & 0.015 & 0.033 &  & 0.004 & 0.041 &  & 0.004 & 0.033 & & 0.052 & 0.120 & 0.056 & 0.920 &  & 0.036 & 0.071 & 0.120 & 1.193 &  & 0.016 & 0.170 & 0.099 & 1.245 \\
& & j=10 & & 0.003 & 0.000 &  & 0.001 & 0.003 &  & 0.000 & 0.000 & & 0.027 & 0.055 & 0.351 & 0.468 &  & 0.027 & 0.138 & 0.009 & 0.336 &  & 0.024 & 0.396 & 0.141 & 0.400 \\
& & j=d & & 0.005 & 0.015 &  & 0.004 & 0.021 &  & 0.002 & 0.013 & & 0.010 & 0.051 & 0.072 & 0.059 &  & 0.011 & 0.045 & 0.055 & 0.046 &  & 0.004 & 0.123 & 0.068 & 0.048 \\ 
\hline
& \eta=1.95 & j=2 & & 0.011 & 0.001 &  & 0.008 & 0.008 &  & 0.004 & 0.033 & & 0.043 & 1.187 & 4.268 & 4.937 &  & 0.007 & 1.693 & 4.080 & 4.820 &  & 0.011 & 1.875 & 3.812 & 4.292 \\
& & j=3 & & 0.010 & 0.005 &  & 0.005 & 0.012 &  & 0.002 & 0.031 & & 0.010 & 0.329 & 0.777 & 1.532 &  & 0.021 & 0.298 & 0.598 & 1.754 &  & 0.030 & 0.658 & 0.405 & 1.858 \\
& & j=4 & & 0.010 & 0.003 &  & 0.002 & 0.001 &  & 0.001 & 0.013 & & 0.006 & 0.004 & 0.179 & 0.710 &  & 0.019 & 0.131 & 0.049 & 0.915 &  & 0.025 & 0.371 & 0.548 & 1.414 \\
& & j=5 & & 0.011 & 0.015 &  & 0.003 & 0.011 &  & 0.001 & 0.007 & & 0.002 & 0.053 & 0.055 & 0.834 &  & 0.002 & 0.076 & 0.077 & 0.860 &  & 0.006 & 0.253 & 0.037 & 0.466 \\
& & j=10 & & 0.004 & 0.003 &  & 0.000 & 0.000 &  & 0.000 & 0.015 & & 0.008 & 0.024 & 0.193 & 0.763 &  & 0.001 & 0.068 & 0.154 & 0.461 &  & 0.001 & 0.208 & 0.064 & 0.284 \\
& & j=d & & 0.000 & 0.006 &  & 0.001 & 0.004 &  & 0.001 & 0.002 & & 0.001 & 0.022 & 0.013 & 0.011 &  & 0.003 & 0.019 & 0.104 & 0.143 &  & 0.002 & 0.057 & 0.061 & 0.077 \\  
\hline
\end{array}
$
\caption{\small Absolute biases $|\bar{\hat{\theta}}_j - \theta_j|$ of $\hat{\theta}_j = (\eta_j, r_j)$ and $\hat{\theta}_j = (\eta_j, r_j, \beta_{1j}, \beta_{2j})$, the parameter vectors of the extremal-$t$ and extremal skew-$t$ models using the $j$-wise composite likelihood when $\nu =1$ and $d=20$. The case $j=d$ corresponds to full likelihood estimation using approximation Type II from Table~\ref{tab:approx}. Calculations are based on 500 replicate maximisations.}
\label{tab:BIAS_ET_SKT}
\end{table}
\end{landscape}
}

\clearpage
\section{Exact simulation of Extremal skew-\texorpdfstring{$t$}{Lg} Max Stable Process with Hitting Scenarios}
\label{app:algs}
 
Below we provide pseudo-code for exact simulation of extremal skew-$t$ max stable processes with unit Fr\'{e}chet marginal distributions using Algorithm 2 of \citet{dombry2016}, extended to include the hitting scenario in the output. This requires the simulation of an extended skew-$t$ distribution; here we use rejection sampling and the stochastic representation given in \citet{arellano2010}. The simpler extremal-$t$ max stable process only requires the simulation of a multivariate $t$-distribution and therefore does not use rejection sampling; this simpler algorithm is also given below.

When simulating $N$ independent replicates for $d$ sites with the \citet{dombry2016} algorithm, it is much more efficient to have the sites in the outer loop and the replicates in the inner loop, because derivations of quantities from the distribution of $(W(s) / W(s_0))_{s \in \cS}$ are then only performed once for each site (lines 3 to 7 in the skew-$t$ code), irrespective of the number of replicates required. In practice these quantities should be calculated on the log scale to avoid numerical issues.

In the algorithm below, the input $\Sigma_d$ is derived from the correlation function $\rho(h)$. The normalization in line 6 is needed for the simulation of an extended skew-$t$ distribution. Matrix multiplication is not needed here because $\omega$ is a diagonal matrix. The term $\mbox{Exp}(1)$ refers to a standard exponential distribution, $t_{\nu_d}$ is a univariate $t$-distribution with $\nu_d$ degrees of freedom, $N(0,1)$ is a standard univariate normal distribution, and $\chi^2_{\nu_d}$ is a chi-squared distribution with $\nu_d$ degrees of freedom. The function $\Psi(\cdot;\nu_d)$ is the distribution function of a univariate $t$-distribution, as used in equation \eqref{eq:mi_plus}. The code in line 16 simulates from a multivariate $t$-distribution with shape matrix $\Sigma^*_d$ and $\nu_d$ degrees of freedom. Lines 20 and 24 are identical by intent.

The index $j$ in the code corresponds to the $s_0$ site. We recommend the use of the eigendecomposition, which is more stable than the Cholesky decomposition. Moreover, $\Sigma^*_d$ is positive semi-definite as the $j$th row and columns are zero by construction. If the Cholesky decomposition were used then the code would need to handle the singular component explicitly. The eigendecomposition is slower, but it can be evaluated outside the loop over the observations (in line 7) and therefore only $d$ decompositions are required for any $N$.

The do-while loop in line 12 is the rejection sampling needed to simulate from  a multivariate extended skew-$t$ distribution. The \citet{dombry2016} algorithm also has a rejection step, with $B = 0$ in the code indicating rejection via exceeding an observation on an already simulated site (i.e.~on a site with index less than $j$). If the simulation is not rejected (line 22) then the outputs are set. A simulated process will always update the value on the $j$th site, because there is a singular component $X[j] \equiv 1$ and therefore the code would otherwise not enter the while loop at line 10. If the code enters the while loop (line 10), it breaks out of it when $\tilde{E}$ is small enough that the $j$th site simulation can never exceed the existing value. The vector $V$ counts the number of times the while loop executes for each replicate. This ultimately provides the hitting scenario $H$.  

\begin{algorithm}
       \caption{Extremal Skew-$t$ Process ($N$ Replicates)}
       \textbf{Inputs}: Correlation $\Sigma_d \in \real^{d\times d}$, Skew $\alpha \in \real^d$, DoF $\nu \in \real_{+}$ \;
       \textbf{Outputs}: Replicates $Z \in \real^{N\times d}$, Hitting Scenarios $H \in \real^{N\times d}$ \;
       \nextnr \textbf{initialize} outputs at $Z = -\infty$, $H = 0$ and \textbf{initialize} $V = 0 \in \real^{N}$ \;
       \nextnr \For{$j = 1$ \To{} $d$}{
       \nextnr Set $\bar{\Sigma}_d = \Sigma_d - \Sigma_{d;j} \Sigma_{j;d}$ and $\alpha^*_j$ = $\Sigma_{j;d}\alpha / (1 + \alpha^T \bar{\Sigma}_d \alpha)^{1/2}$ \;
       \nextnr Set  $\nu_d = \nu+1$ and $m_{j+} = 2^{\nu/2} \pi^{-1/2} \Gamma(\nu_d/2)
       \Psi(\alpha^*_j \nu_d^{1/2}; \nu_d)$ \;
       \nextnr Set $\mu_d = \Sigma_{d;j}$ and  $\tau_d = \nu_d^{1/2}\Sigma_{j;d} \alpha $ and $\hat{\Sigma}_d  = \bar{\Sigma}_d/\nu_d$   \;
       \nextnr         Set $\hat{\omega} = \diag(\hat{\Sigma}_d)^{1/2}$ and $\hat{\omega}_d = \diag(\Sigma_d)^{1/2}$ and $\hat{\alpha}_d = \sqrt{\nu_d} \hat{\omega}_d \alpha$  and $\hat{\bar{\Sigma}}_d = \hat{\omega}^{-1}\hat{\Sigma}_d \hat{\omega}^{-1}$ \;
       \nextnr         Calculate the eigen decomposition $\hat{\bar{\Sigma}}_d = L\Lambda^2L^T$ \;
       \nextnr         \For{$i = 1$ \To{} $N$}{
       \nextnr                 Simulate $E \sim \mbox{Exp}(1)$ and Set $\tilde{E} = (E/m_{j+})^{-1/\nu}$ \;
       \nextnr                 \While{$\tilde{E} > Z[i,j]$}{
       \nextnr                   Set $V[i] = V[i] + 1$\;
       \nextnr                   \Do{$t_{\nu_d} \geq \tau_d + \sum_{z=1}^d \hat{\alpha}_{d,z} Y_z$}{
       \nextnr               \For{$k = 1$ \To{} $d$}{
       \nextnr         Simulate $X[k] \sim N(0,1)$ \;
       }
       \nextnr               Set $Y = (Y[1],\dots, Y[d]) = \sqrt{\nu_d/\chi^2_{\nu_d}} L\Lambda X$  \;
                    }
       \nextnr                   Set $Y = \mu_d + \hat{\omega} Y$ and $B = 1$ \;
               \nextnr           \For{$k = 1$ \To{} $j-1$}{
               \nextnr                   \If{$Y[k] \tilde{E} > Z[i,j]$}{
               \nextnr                   Set $B = 0$ and \textbf{break} \;
                           }
                         }
       \nextnr               \If{$B=1$}{
       \nextnr                 \For{$k = j$ \To{} $d$}{
       \nextnr                 \If{$Y[k] \tilde{E} > Z[i,j]$}{
       \nextnr              Set $Z[i,j] = Y[k] \tilde{E}$ and $H[i,j] = V[i]$\;
                       }

                     }
               }
    \nextnr      Set $E = E + \mbox{Exp}(1)$ and $\tilde{E} = (E/m_{j+})^{-1/\nu}$ \;
       }
               }

       }
\nextnr Set $Z = Z^{\nu} m^{-1}$ for column vector $m = (m_{1+},\dots,m_{d+})$ \;
\end{algorithm}

\begin{algorithm}
       \caption{Extremal-t Process ($N$ Replicates)}
       \textbf{Inputs}: Correlation $\Sigma_d \in \real^{d\times d}$,  DoF $\nu \in \real_{+}$ \;
       \textbf{Outputs}: Replicates $Z \in \real^{N\times d}$, Hitting Scenarios $H \in \real^{N\times d}$ \;
       \nextnr \textbf{initialize} outputs at $Z = -\infty$, $H = 0$ and \textbf{initialize} $V = 0 \in \real^{N}$ \;
       \nextnr \For{$j = 1$ \To{} $d$}{
               \nextnr Set $\mu_d = \Sigma_{d;j}$ and $\nu_d = \nu+1$ and $\hat{\Sigma}_d  = (\Sigma_d - \Sigma_{d;j} \Sigma_{j;d})/\nu_d$ \;
               \nextnr         Calculate the eigendecomposition $\hat{\Sigma}_d = L\Lambda^2L^T$ \;
               \nextnr         \For{$i = 1$ \To{} $N$}{
                       \nextnr                 Simulate $E \sim \mbox{Exp}(1)$ and Set $\tilde{E} = E^{-1/\nu}$ \;
                       \nextnr                 \While{$\tilde{E} > Z[i,j]$}{
                               \nextnr                   Set $V[i] = V[i] + 1$\;
                                       \nextnr               \For{$k = 1$ \To{} $d$}{
                                               \nextnr         Simulate $X[k] \sim N(0,1)$ \;
                                       }
                                       \nextnr               Set $Y = (Y[1],\dots, Y[d]) = \mu_d + \sqrt{\nu_d/\chi^2_{\nu_d}} L\Lambda X$  \;
                               \nextnr                   Set $B = 1$ \;
                               \nextnr           \For{$k = 1$ \To{} $j-1$}{
                                       \nextnr                   \If{$Y[k] \tilde{E} > Z[i,j]$}{
                                               \nextnr                   Set $B = 0$ and \textbf{break} \;
                                       }
                               }
                               \nextnr               \If{$B=1$}{
                                       \nextnr                 \For{$k = j$ \To{} $d$}{
                                               \nextnr                 \If{$Y[k] \tilde{E} > Z[i,j]$}{
                                                       \nextnr              Set $Z[i,j] = Y[k] \tilde{E}$ and $H[i,j] = V[i]$\;
                                               }

                                       }
                               }
                               \nextnr      Set $E = E + \mbox{Exp}(1)$ and $\tilde{E} = E^{-1/\nu}$ \;
                       }
               }

       }
       \nextnr Set $Z = Z^{\nu}$ \;
\end{algorithm}

\end{document}